\documentclass[11pt,a4paper]{article}

\usepackage{amsmath} 
\usepackage{amsthm} 
\usepackage{amssymb}	
\usepackage{graphicx} 

\usepackage{xurl} 
\usepackage[pagebackref,plainpages=false,colorlinks,linkcolor=TealBlue,anchorcolor=blue,citecolor=TealBlue]{hyperref}

\hypersetup{breaklinks=true}

\usepackage[dvipsnames]{xcolor}
\usepackage[dvips,margin=1in,bottom=1in]{geometry}
\usepackage[capitalize,noabbrev]{cleveref}

\allowdisplaybreaks[1]

\usepackage{xspace}
\usepackage[T1]{fontenc}
\AtBeginDocument{%
  \DeclareFontShape{T1}{cmr}{m}{scit}{<->ssub*cmr/m/sc}{}%
}
\usepackage[utf8]{inputenc}
\usepackage[english]{babel}

\usepackage{adjustbox} 
\usepackage{footnotehyper} 
\makesavenoteenv{table}  
\usepackage{multirow}
\usepackage{makecell}
\usepackage{booktabs}
\usepackage{float}

\usepackage{enumerate}
\usepackage{enumitem}

\usepackage{authblk}

\usepackage{diagbox}
\usepackage{mathtools}

\usepackage{thmtools}
\declaretheorem[style=plain,numberwithin=section]{theorem}
\declaretheorem[style=plain,numberlike=theorem]{lemma,corollary}

\declaretheorem[style=plain,numberlike=theorem]{definition}
\declaretheorem[style=plain,numberlike=theorem]{proposition}

\declaretheorem[style=definition,numberlike=theorem]{problem,conjecture}


\usepackage{comment}

\newcommand{\highlight}[1]{{\color{blue} #1}}


\DeclarePairedDelimiter\rbra{\lparen}{\rparen}
\DeclarePairedDelimiter\sbra{\lbrack}{\rbrack}
\DeclarePairedDelimiter\cbra{\{}{\}}
\DeclarePairedDelimiter\abs{\lvert}{\rvert}
\DeclarePairedDelimiter\norm{\lVert}{\rVert}
\DeclarePairedDelimiter\ceil{\lceil}{\rceil}

\DeclarePairedDelimiter\ket{\lvert}{\rangle}
\DeclarePairedDelimiter\bra{\langle}{\rvert}

\newcommand{\ketbra}[2]{\ensuremath{\ket{#1}\bra{#2}}}

\makeatletter
\def\@buildmath#1{%
  \expandafter\def\csname bb#1\endcsname{\ensuremath{\mathbb{#1}}}%
  \expandafter\def\csname bf#1\endcsname{\ensuremath{\mathbf{#1}}}%
  \expandafter\def\csname sf#1\endcsname{\ensuremath{\mathsf{#1}}}%
  \expandafter\def\csname cal#1\endcsname{\ensuremath{\mathcal{#1}}}%
  \expandafter\def\csname rm#1\endcsname{\ensuremath{\mathrm{#1}}}%
  \expandafter\def\csname tt#1\endcsname{\ensuremath{\mathtt{#1}}}%
}
\def\@buildmathletters#1{%
  \ifx#1\relax\else
    \@buildmath{#1}%
    \expandafter\@buildmathletters
  \fi
} 
\@buildmathletters ABCDEFGHIJKLMNOPQRSTUVWXYZabcdefghijklmnopqrstuvwxyz\relax
\makeatother

\newcommand{\Tr} {\operatorname{Tr}}
\newcommand{\poly} {\operatorname{poly}}

\newcommand{\rank} {\operatorname{rank}}

\newcommand{\sgn} {\operatorname{sgn}}
\newcommand{\yes}{{\rm yes}}
\newcommand{\no}{{\rm no}}

\newcommand{\BQP}{\textnormal{\textsf{BQP}}\xspace}

\newcommand{\NP}{\textnormal{\textsf{NP}}\xspace}
\newcommand{\coNP}{\textnormal{\textsf{coNP}}\xspace}
\newcommand{\MA}{\textnormal{\textsf{MA}}\xspace}
\newcommand{\AM}{\textnormal{\textsf{AM}}\xspace}
\newcommand{\coAM}{\textnormal{\textsf{coAM}}\xspace}

\newcommand{\NQP}{\textnormal{\textsf{NQP}}\xspace}

\newcommand{\coCeP}{\mathsf{coC_{=}P}}
\newcommand{\CeP}{\mathsf{C_{=}P}}
\newcommand{\SBP}{\textnormal{\textsf{SBP}}\xspace}
\newcommand{\coSBP}{\textnormal{\textsf{coSBP}}\xspace}
\newcommand{\PP}{\textnormal{\textsf{PP}}\xspace}
\newcommand{\SZK}{\textnormal{\textsf{SZK}}\xspace}
\newcommand{\QSZK}{\textnormal{\textsf{QSZK}}\xspace}
\newcommand{\NIQSZK}{\textnormal{\textsf{NIQSZK}}\xspace}
\newcommand{\NISZK}{\textnormal{\textsf{NISZK}}\xspace}

\renewcommand{\H}{\mathrm{H}}
\newcommand{\Hq}{\mathrm{H}^\mathtt{T}_q}
\newcommand{\Hqq}[1]{\mathrm{H}^\mathtt{T}_{#1}}
\newcommand{\Halpha}{\mathrm{H}^\mathtt{R}_{\alpha}}
\newcommand{\Haa}[1]{\mathrm{H}^\mathtt{R}_{#1}}
\newcommand{\Hmin}{\mathrm{H}_\infty}
\renewcommand{\S}{\mathrm{S}}
\newcommand{\Sq}{\mathrm{S}^\mathtt{T}_q}
\newcommand{\Sqq}[1]{\mathrm{S}^\mathtt{T}_{#1}}
\newcommand{\Salpha}{\mathrm{S}^\mathtt{R}_{\alpha}}
\newcommand{\Saa}[1]{\mathrm{S}^\mathtt{R}_{#1}}
\newcommand{\Smin}{\mathrm{S}_\infty}

\newcommand{\QJS}{\textnormal{\textrm{QJS}}\xspace}

\newcommand{\QJRalpha}{{\mathrm{QJR}_{\alpha}}}
\newcommand{\QJR}[1]{{\mathrm{QJR}_{#1}}}
\newcommand{\QJTq}{\texorpdfstring{\textnormal{QJT}\textsubscript{\textit{q}}}\xspace}
\newcommand{\QJT}[1]{{\mathrm{QJT}_{#1}}}

\newcommand{\td}{\mathrm{T}}

\newcommand{\QSD}{\textnormal{\textsc{QSD}}\xspace}

\newcommand{\PureInfidelity}{\textnormal{\textsc{PureInfidelity}}\xspace}
\newcommand{\QSCMM}{\textnormal{\textsc{QSCMM}}\xspace}

\newcommand{\QED}{\textnormal{\textsc{QED}}\xspace}

\newcommand{\QEA}{\textnormal{\textsc{QEA}}\xspace}
\newcommand{\EA}{\textnormal{\textsc{EA}}\xspace}
\newcommand{\RankTwoQEA}{\textnormal{\textsc{Rank2QEA}}\xspace}

\newcommand{\TsallisQED}{\texorpdfstring{\textnormal{\textsc{TsallisQED}\textsubscript{\textit{q}}}}\xspace}
\newcommand{\TsallisQEA}{\texorpdfstring{\textnormal{\textsc{TsallisQEA}\textsubscript{\textit{q}}}}\xspace}

\newcommand{\LowRankTsallisQEA}{\textnormal{\textsc{LowRankTsallisQEA}\textsubscript{\textit{q}}}\xspace}

\newcommand{\RankTwoTsallisQEA}{\texorpdfstring{\textnormal{\textsc{Rank2TsallisQEA}\textsubscript{\textit{q}}}}\xspace}
\newcommand{\RankTwoTsallisQEAnoq}{\texorpdfstring{\textnormal{\textsc{Rank2TsallisQEA}}}\xspace}
\newcommand{\RenyiQEA}{\texorpdfstring{\textnormal{\textsc{R{\'e}nyiQEA}\textsubscript{$\alpha$}}}\xspace}
\newcommand{\RenyiEAnoa}{\texorpdfstring{\textnormal{\textsc{R{\'e}nyiEA}}}\xspace}

\newcommand{\LowRankRenyiQEA}{\texorpdfstring{\textnormal{\textsc{LowRankR{\'e}nyiQEA}\textsubscript{$\alpha$}}}\xspace}
\newcommand{\LowRankRenyiQEAnoa}{\texorpdfstring{\textnormal{\textsc{LowRankR{\'e}nyiQEA}}}\xspace}
\newcommand{\RankTwoRenyiQEA}{\texorpdfstring{\textnormal{\textsc{Rank2R{\'e}nyiQEA}\textsubscript{$\alpha$}}}\xspace}
\newcommand{\RankTwoRenyiQEAnoa}{\texorpdfstring{\textnormal{\textsc{Rank2R{\'e}nyiQEA}}}\xspace}

\newcommand{\binset}{\{0,1\}}
\newcommand{\innerprod}[2]{\left\langle #1 | #2 \right\rangle}

\newcommand{\dx}{\mathrm{d}x}

\newcommand{\dd}{\mathrm{d}}

\newcommand{\CNOT}{\textnormal{\textsc{CNOT}}\xspace}


\begin{document}
\setlength{\abovedisplayskip}{6pt}
\setlength{\belowdisplayskip}{6pt}

\title{Computational hardness of estimating quantum entropies\\ via binary entropy bounds}

\author[1,2]{Yupan Liu\thanks{Email: \url{yupan.liu@epfl.ch}}}
\affil[1]{School of Computer and Communication Sciences, \'Ecole Polytechnique F\'ed\'erale de Lausanne}
\affil[2]{Graduate School of Mathematics, Nagoya University}
\date{}

\maketitle
\pagenumbering{roman}
\thispagestyle{empty}

\begin{abstract}
    We investigate the computational hardness of estimating the quantum $\alpha$-R\'enyi entropy $\mathrm{S}^\mathtt{R}_{\alpha}(\rho) = \frac{\ln \mathrm{Tr}(\rho^\alpha)}{1-\alpha}$ and the quantum $q$-Tsallis entropy $\mathrm{S}^\mathtt{T}_q(\rho) = \frac{1-\mathrm{Tr}(\rho^q)}{q-1}$, both of which converge to the von Neumann entropy as the order approaches $1$. 
    The promise problems \textsc{Quantum $\alpha$-R\'enyi Entropy Approximation} (\textsc{R{\'e}nyiQEA}$_\alpha$) and \textsc{Quantum $q$-Tsallis Entropy Approximation} (\textsc{TsallisQEA}$_q$) ask whether $\mathrm{S}^\mathtt{R}_{\alpha}(\rho)$ or $\mathrm{S}^\mathtt{T}_q(\rho)$, respectively, is at least $\tau_\mathtt{Y}$ or at most $\tau_\mathtt{N}$, where $\tau_\mathtt{Y} - \tau_\mathtt{N}$ is typically a positive constant. 
    Previous hardness results cover only the von Neumann entropy (order $1$) and some cases of the quantum $q$-Tsallis entropy, while existing approaches do not readily extend to other orders. 

    We establish that for all positive real $\alpha$ and $q$, and also for $\alpha=\infty$, the rank-$2$ variants \textsc{Rank2R{\'e}nyiQEA}$_\alpha$ and \textsc{Rank2TsallisQEA}$_q$ are \textsc{BQP}-hard. Combined with prior (rank-dependent) quantum query algorithms in \hyperlink{cite.WGL+22}{Wang, Guan, Liu, Zhang, and Ying (TIT 2024)}, \hyperlink{cite.WZL24}{Wang, Zhang, and Li (TIT 2024)}, and \hyperlink{cite.LW25}{Liu and Wang (SODA 2025)}, as well as one derived from \hyperlink{cite.OW16}{O'Donnell and Wright (STOC 2016)}, our results imply:
    \begin{itemize}
        \item For all real orders $\alpha > 0$ or $\alpha=\infty$, and for all real orders $0 < q \leq 1$, \textsc{LowRankR{\'e}nyiQEA}$_\alpha$ and \textsc{LowRankTsallisQEA}$_q$ are \textsf{BQP}-complete, where both are restricted versions of \textsc{R{\'e}nyiQEA}$_\alpha$ and \textsc{TsallisQEA}$_q$ with $\rho$ of polynomial rank.
        \item For all real orders $q>1$, \textsc{TsallisQEA}$_q$ is \textsf{BQP}-complete. 
    \end{itemize}
    
    Our hardness results stem from reductions based on new inequalities relating the $\alpha$-R\'{e}nyi or $q$-Tsallis binary entropies of different orders, where the reductions differ substantially from previous approaches, and the inequalities are also of independent interest.
\end{abstract}

\newpage
\tableofcontents
\thispagestyle{empty}


\newpage
\pagenumbering{arabic}
\section{Introduction}

Quantum state testing is a principal area in quantum property testing~\cite{MdW16}. The general goal is to design efficient quantum testers that verify properties of quantum objects, extending classical (tolerant) distribution testing (see~\cite{Canonne20} and~\cite[Chapter 11]{Goldreich17}) to the non-commutative setting. An illustrative example concerns estimating the entropy of a quantum state $\rho$, particularly the von Neumann entropy $\S(\rho) \coloneqq -\Tr(\rho \ln \rho)$, a central concept in quantum information theory. The task is to develop quantum algorithms that decide whether $\S(\rho)$ is at least $\tau_\ttY$ or at most $\tau_\ttN$, where the promise gap $\tau_\ttY - \tau_\ttN$ is typically a positive constant. 

When an explicit circuit description (serving as ``source code'') that prepares the state of interest is available, this example can be formalized as the promise problem \textsc{Quantum Entropy Approximation} (\QEA{}), introduced in~\cite{BASTS10,CCKV08}. This problem provides a complete characterization of the complexity classes \NIQSZK{}~\cite{Kobayashi03}, which consists of promise problems admitting non-interactive proof systems with quantum statistical zero-knowledge. Likewise, considering the entropy difference $\S(\rho_0)-\S(\rho_1)$ between two quantum states $\rho_0$ and $\rho_1$ leads to the promise problem  \textsc{Quantum Entropy Difference} (\QED{}), which is complete for the complexity class \QSZK{}~\cite{BASTS10}, the interactive counterpart of \NIQSZK{}. 

Beyond the complexity-theoretic perspective, quantum state testing problems related to estimating quantum entropies -- covering not only the von Neumann entropy, such as~\cite{BKT20}, but also its most popular generalization, the quantum $\alpha$-R\'enyi entropy $\Salpha(\rho) = \frac{\ln \Tr(\rho^\alpha)}{1-\alpha}$ -- often focus on minimizing query complexity~\cite{LW19,GL20,SH21,GHS21,WGL+22,WZL24,CWZ25} and sample complexity~\cite{AOST17,AISW20,WZW23,WZ24entropy}. Here, query complexity refers to the number of oracle calls (``queries'') to the state-preparation circuits (considered as black boxes), while sample complexity refers to the number of identical copies of the state. 
Moreover, the quantum R\'enyi entropy of different orders admits a broad range of applications, including characterizing entanglement in physical systems~\cite{HHHH09,ECP10}, formulating entropic uncertainty relations~\cite{CBTW15}, and advancing quantum cryptography, particularly through security proofs for quantum key distribution~\cite{SBC+09,TL17,XMZ+20}. 

Another widely studied extension of the von Neumann entropy is the quantum $q$-Tsallis entropy, defined as $\Sq(\rho) = \frac{1-\Tr(\rho^q)}{q-1}$, which plays an important role in physics, particularly in describing systems with non-extensive properties in statistical mechanics (see~\cite{Tsallis01}). This quantity has recently attracted growing attention in works such as~\cite{LW25,CW25}. See also scenarios closely related to the \textit{integer}-order setting~\cite{QKW22,SLLJ24,ZLW+25,ZWZY25}, including some that establish lower bounds~\cite{CWYZ25,Wang25}.
Notably, both quantum R\'enyi and Tsallis entropies converge to the von Neumann entropy as the order $\alpha$ or $q$ approaches $1$. 

Importantly, estimating (quantum) R\'enyi entropy appears inherently more challenging than estimating (quantum) Tsallis entropy for orders greater than $1$. On one hand, as observed in \cite[Appendix A]{AOST17}, any estimator for $\alpha$-R{\'e}nyi entropy directly yields an estimator for $q$-Tsallis entropy with the same bound when $q=\alpha > 1$. On the other hand, while sample complexity lower bounds for estimating $\alpha$-R\'enyi entropy with real-valued $\alpha>1$ scale polynomially with the rank of the state (referred to as ``rank-dependent'' in this work)~\cite{OW21,WZ24}, sample complexity upper and lower bounds for estimating $q$-Tsallis entropy with real-valued $q>1$ are \textit{independent} of the rank~\cite{LW25,CW25}. 

This complexity-theoretic perspective connects closely to the query complexity setting. In particular, explicit rank-dependent estimators for quantum $\alpha$-R\'enyi entropy with any positive order $\alpha$~\cite{WZL24,WGL+22} imply that the corresponding promise problem restricted to states $\rho$ of polynomial rank (the ``low-rank'' case), \LowRankRenyiQEA{}, is in \BQP{} -- in other words, this task is efficiently solvable by a universal quantum computer. For quantum $q$-Tsallis entropy, a rank-dependent estimator for orders $0<q\leq 1$~\cite{WGL+22} similarly implies that the low-rank version, \LowRankTsallisQEA{}, is in \BQP{}, while a rank-independent estimator for real-valued orders $q>1$~\cite{LW25} shows that the corresponding problem \TsallisQEA{}, \textit{without rank constraints}, is also in \BQP{}. 

While the containments in \BQP{} are well understood, hardness results are limited. Specifically, \BQP{}-hardness has been established only for \RankTwoTsallisQEA{} with $1\leq q \leq 2$~\cite{LW25}, where the state of interest has exactly rank \textit{two}, and no analogous \BQP{}-hardness result is known for \RankTwoRenyiQEA{} beyond the special case $\alpha=1$, which coincides with the von Neumann entropy. This gap leads to the following intriguing and natural question:

\begin{problem}
    \label{prob:BQP-hardness}
    How hard is the task of estimating $\alpha$-R\'enyi or $q$-Tsallis entropy of quantum states for \emph{all} positive orders $\alpha$ or $q$? Could the low-rank versions, \LowRankRenyiQEA{} and \LowRankTsallisQEA{},\footnote{The classical analog of the \emph{low-rank} condition for quantum states in entropy estimation problems is the \emph{poly-size support} condition for classical distributions. This problem has received much less attention, partly because classical distributions are inherently given in the computational basis, which is fixed and efficiently computable. By contrast, for quantum states the relevant basis is typically unknown and difficult to compute efficiently, even in low-rank cases, making the quantum version more compelling.} capture the full power of quantum computation, that is, are these promise problems \BQP{}-hard?
\end{problem}

To establish lower bounds on query and sample complexities, one typically begins by identifying hard instances and then analyzing the resulting bounds for the corresponding scenarios, such as estimating quantum R\'enyi or Tsallis entropies of different orders. In contrast, establishing computational hardness in \Cref{prob:BQP-hardness} generally relies on reductions, since only a few natural hard problems are known for a given complexity class. 
Constructing such reductions from other promise problems to these entropy-approximation tasks is often technically more challenging than in other quantum state testing problems. This difficulty arises because differences of quantum entropies relate to closeness measures only in specific ways, and these relationships hold within a limited regime due to fundamental mathematical constraints, such as joint convexity, as discussed in \Cref{subsec:previous-approches-summary}. 

\subsection{Main results}

In this work, we show that \RankTwoRenyiQEA{} and \RankTwoTsallisQEA{} are \BQP{}-hard for \textit{all} positive orders $\alpha$ and $q$, as well as for $\alpha=\infty$, even with constant additive-error precision, as stated in \Cref{thm:comp-hardness-informal}. Our results fully resolve \Cref{prob:BQP-hardness} in the low-rank setting and introduce a new, systematic approach to establishing the computational hardness of estimating quantum entropies. 

\begin{theorem}[Computational hardness of estimating quantum entropies, informal version of \Cref{thm:comp-hardness-Renyi,thm:comp-hardness-Tsallis}] The following statements hold: 
    \label{thm:comp-hardness-informal}
    \begin{enumerate}[label={\upshape(\arabic*)}, topsep=0.33em, itemsep=0.33em, parsep=0.33em]
        \item For all real-valued $\alpha>0$, and also for $\alpha=\infty$, \RankTwoRenyiQEA{} is \BQP{}-hard;
        \item For all real-valued $q>0$, \RankTwoTsallisQEA{} is \BQP{}-hard.
    \end{enumerate}
\end{theorem}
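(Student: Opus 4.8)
The plan is to prove both statements by a single many-one reduction from a canonical \BQP-complete problem---deciding whether a given polynomial-size quantum circuit $U$ satisfies $\abs{\bra{0} U \ket{0}}^2 \geq 2/3$ or $\abs{\bra{0} U \ket{0}}^2 \leq 1/3$---and to encode the relevant amplitude into the single nontrivial eigenvalue of a rank-$2$ state. The key structural fact is that a rank-$2$ state $\rho$ with eigenvalues $\cbra{\lambda, 1-\lambda}$ satisfies $\Salpha(\rho) = \Halpha(\lambda)$ and $\Sq(\rho) = \Hq(\lambda)$, so the entire problem reduces to controlling the binary R\'enyi entropy $\Halpha$ and the binary Tsallis entropy $\Hq$ as functions of $\lambda$. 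The obstacle one must confront at the outset is that both $\Halpha$ and $\Hq$ are symmetric about $\lambda = 1/2$; hence a naive encoding producing eigenvalues $\cbra{p,1-p}$ cannot separate the \yes{} and \no{} cases. I would therefore arrange that the encoded eigenvalue always lies in $[0,1/2]$, the regime on which $\Halpha$ and $\Hq$ are \emph{strictly increasing} for every positive order (and for $\alpha=\infty$).

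Concretely, let $\ket{\psi_0} = \ket{0}$ and $\ket{\psi_1} = U\ket{0}$, so that their overlap $c \coloneqq \abs{\bra{0}U\ket{0}} \in [0,1)$ carries the \BQP{} promise. I would form the rank-$2$ state $\rho = \tfrac12\rbra{\ketbra{\psi_0}{\psi_0} + \ketbra{\psi_1}{\psi_1}}$, prepared by a coherently controlled circuit (a control qubit $\tfrac{1}{\sqrt 2}(\ket 0 + \ket 1)$, a controlled choice between the two preparations, followed by dephasing and tracing out the control). Its eigenvalues are $\tfrac{1\pm c}{2}$, so the smaller eigenvalue is $\lambda = \tfrac{1-c}{2} \in (0,1/2]$, breaking the symmetry as desired. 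Since $c$ is large in the \yes{} case and small in the \no{} case, monotonicity of $\Halpha$ and $\Hq$ on $[0,1/2]$ sends \yes{} and \no{} instances to distinct entropy values $\Halpha(\lambda_\ttY) < \Halpha(\lambda_\ttN)$ (and likewise for $\Hq$); placing the thresholds $\tau_\ttY, \tau_\ttN$ strictly between them completes the reduction, \emph{provided} the gap $\Halpha(\lambda_\ttN) - \Halpha(\lambda_\ttY)$ is a genuine positive constant.

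Securing such a constant gap \emph{uniformly across all orders} is the step I expect to be the main obstacle, because the qualitative shape of the binary entropy varies drastically with the order: for orders above $1$ it behaves essentially linearly near the endpoints, whereas as the order tends to $0$ it flattens toward its maximal value $\ln 2$ on all of $(0,1/2]$, so the separation delivered by any fixed amount of amplification degrades. This is exactly where I would deploy the new inequalities relating binary entropies of \emph{different} orders: rather than analyzing each order in isolation, I would sandwich $\Halpha$ (resp.\ $\Hq$) between explicit monotone functions of the entropy at a convenient reference order, thereby converting a fixed gap in $c$ into a quantitatively controlled entropy gap valid for every $\alpha > 0$ (including $\alpha = \infty$) and every $q > 0$. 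Standard amplification of $U$ sharpens $c$ toward $1$ or $0$, and the amount of amplification required to realize a prescribed constant gap can be read off from these inequalities; for each fixed order it is polynomial.

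Finally, I would discharge the routine points: the reduction runs in classical polynomial time and outputs a circuit preparing a purification of $\rho$; since $c \in [0,1)$ throughout, both eigenvalues $\tfrac{1\pm c}{2}$ are strictly positive and the state has rank exactly $2$, preserving the \RankTwoRenyiQEA{} and \RankTwoTsallisQEA{} promise; and the Tsallis case runs in complete parallel to the R\'enyi case, with $\Hq$ replacing $\Halpha$ and the corresponding cross-order inequalities. Establishing these binary-entropy inequalities over the full range of orders, and extracting from them an explicitly bounded gap, is the crux of the argument and the part I would expect to require the most care.
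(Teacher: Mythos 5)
Your proposal follows essentially the same route as the paper: start from the \BQP{}-hardness of estimating a pure-state overlap to constant precision (the paper derives this from~\cite[Theorem 12]{RASW23} via the uncomputation construction $C_x = (C'_x)^{\dagger}\, \CNOT\, C'_x$, which is also what one must invoke to justify your ``canonical'' overlap problem), prepare the rank-$2$ mixture $\frac{1}{2}\rbra*{\ketbra{\psi_0}{\psi_0}+\ketbra{\psi_1}{\psi_1}}$ whose eigenvalues are $\frac{1\pm c}{2}$, identify its quantum entropy with the binary entropy at $\frac{1-c}{2}$ (the paper's \Cref{thm:pureQJType-eq-BinEntropy}), and then transfer the overlap gap into an entropy gap, order by order, via inequalities relating binary entropies of different orders. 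The paper anchors everything at the reference order $2$, where $\Hqq{2}\rbra[\big]{\frac{1-c}{2}} = \frac{1-c^2}{2}$ and $\Haa{2}\rbra[\big]{\frac{1-c}{2}} = \ln(2) - \ln(1+c^2)$ are exactly explicit in the infidelity, so your sketch is an accurate outline of the paper's architecture, including the order-dependent amplification (the paper needs $n \geq \ceil*{2/\alpha}$ for $\alpha\in(0,1)$ and $n \geq \ceil*{\log_2 q}$ for $q>3$).

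The genuine gap is that your proposal stops exactly where the paper's work begins. The cross-order sandwich inequalities you invoke -- $\Halpha$ and $\Hq$ bounded above and below by explicit monotone functions of a reference-order binary entropy, sharply enough to yield a constant gap for every positive order -- do not exist in the prior literature; they are the paper's main technical contribution (\Cref{thm:Renyi-binary-entropy-bounds,thm:Tsallis-binary-entropy-bounds}), proved through lengthy derivative and monotonicity analyses spanning \Cref{sec-new-binary-entropies-bounds} and the appendix. They are also structurally delicate: the Tsallis case splits into the regimes $(0,2]$, $[2,3]$, and $[3,\infty)$ because the direction of the comparison with $\Hqq{2}$ flips at $q=3$, a phenomenon your uniform ``sandwich at a convenient reference order'' language does not anticipate. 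You prove none of these bounds, nor do you fix a reference order or a candidate functional form; monotonicity of the binary entropy on $[0,1/2]$ alone gives only a qualitative separation, whereas defining efficiently computable threshold and gap functions and certifying that the gap is a positive constant requires quantitative endpoint estimates of exactly the kind the inequalities supply. Since every quantitative claim in your reduction rests on these unproven bounds, what you have is a correct roadmap of the paper's proof rather than a proof.
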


We next summarize the known quantum query upper bounds for estimating quantum entropies~\cite{WGL+22,WZL24,LW25}, as presented in \Cref{table:query-complexity-upper-bounds}.\footnote{The notation $\widetilde{O}(f)$ is used to denote $O(f \operatorname{polylog}(f))$.} The input model underlying these upper bounds is the \textit{purified quantum access input model}, originally introduced in~\cite{Wat02}. In particular, these upper bounds imply containments of complexity classes when the descriptions of the state-preparation circuits (``source codes'') are explicitly provided. 

\begin{table}[ht!]
    \centering
    \begin{tabular}{ccc}
        \toprule
        Order ($\alpha$ or $q$) & Quantum $\alpha$-R\'enyi entropy & Quantum $q$-Tsallis entropy \\
        \midrule
        $(0,1)$  & \makecell{$\widetilde{O}\rbra[\big]{r^{\frac{1}{\alpha}}/\epsilon^{1+\frac{1}{\alpha}}}$\\ {\footnotesize\cite[Corollary 4]{WZL24}} } & \makecell{$\widetilde{O} \rbra[\big]{ r^{\frac{3-q^2}{2q}} / \epsilon^{\frac{3+q}{2q}}}$\\ {\footnotesize\cite[Theorem III.9]{WGL+22}} } \\
        \midrule
        $1$ & \multicolumn{2}{c}{\makecell{$\widetilde{O}(r/\epsilon^2)$\\ {\footnotesize{\cite[Theorem III.1]{WGL+22}}}}}\\
        \midrule
        $(1,\infty)$ & \makecell{$\widetilde{O}\rbra[\big]{r/\epsilon^{1+\frac{1}{\alpha}}}$\\ {\footnotesize\cite[Corollary 5]{WZL24}}} & \makecell{$O\rbra[\big]{1/\epsilon^{1+\frac{1}{q-1}}}$\\ {\footnotesize\cite[Theorem 3.2]{LW25}}} \\
        \midrule
        $\infty$ & \makecell{$O(r^2/\epsilon^2)$\\ {\footnotesize\cite[Corollary 1.8]{OW16}}} & N/A \\
        \bottomrule
    \end{tabular}
    \caption{(Rank-dependent) quantum query complexity upper bounds.}
    \label{table:query-complexity-upper-bounds}
\end{table}

By combining \Cref{thm:comp-hardness-informal} with the quantum query algorithms of~\cite{WGL+22,WZL24,LW25} and the one derived from~\cite{OW16},\footnote{For quantum min-entropy $\Smin(\rho)$, which corresponds to the order $\alpha=\infty$ case, applying~\cite[Corollary 1.8]{OW16} with $k=1$ and $\epsilon=\varepsilon/r$ results in a quantum query algorithm that uses $O(r^2/\varepsilon^2)$ queries to the state-preparation circuit of $\rho$ to produce copies of $\rho$ and estimate $\Smin(\rho)$ to within additive error $\varepsilon$, which immediately implies that $\LowRankRenyiQEAnoa_{\infty}$ is in \BQP{}.\label{footnote:LowRankRenyiQEAinf-in-BQP}} whose resulting upper bounds are summarized in \Cref{table:query-complexity-upper-bounds}, we obtain the following corollaries: 

\begin{corollary}
    For all real-valued $\alpha > 0$, as well as for $\alpha=\infty$, 
    \[\LowRankRenyiQEA{} \text{ is } \BQP{}\text{-complete.} \]
\end{corollary}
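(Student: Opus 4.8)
The plan is to establish \BQP{}-completeness by proving the two directions separately, containment in \BQP{} and \BQP{}-hardness, and then observing that the rank-$2$ special case is subsumed by the low-rank case.

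First I would argue containment in \BQP{} by directly invoking the rank-dependent query algorithms summarized in \Cref{table:query-complexity-upper-bounds}: for $\alpha \in (0,1)$ the query complexity is $\widetilde{O}(r^{1/\alpha}/\epsilon^{1+1/\alpha})$ by~\cite{WZL24}, for $\alpha = 1$ it is $\widetilde{O}(r/\epsilon^2)$ by~\cite{WGL+22}, and for $\alpha > 1$ it is $\widetilde{O}(r/\epsilon^{1+1/\alpha})$ by~\cite{WZL24}. In the low-rank regime we have $r = \poly(n)$, and the constant promise gap $\tau_\ttY - \tau_\ttN$ forces a constant additive precision $\epsilon$; consequently each of these complexities collapses to $\poly(n)$. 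Since the instance is specified by explicit state-preparation circuits (the ``source code''), each oracle call in the purified quantum access model can be implemented by a polynomial-size quantum circuit, so the whole procedure runs in quantum polynomial time with bounded error, placing \LowRankRenyiQEA{} in \BQP{}.

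Next I would establish \BQP{}-hardness. By \Cref{thm:comp-hardness-informal}(1), \RankTwoRenyiQEA{} is \BQP{}-hard for every real $\alpha > 0$. The point is that a state of rank exactly two has, in particular, polynomial rank, so every instance produced by that hardness reduction is already a valid \LowRankRenyiQEA{} instance with the same thresholds $\tau_\ttY, \tau_\ttN$ and the same promise gap. Hence the reduction witnessing the hardness of \RankTwoRenyiQEA{} is, verbatim, a reduction to \LowRankRenyiQEA{}, which therefore inherits \BQP{}-hardness. Combining the two directions yields \BQP{}-completeness for all real $\alpha > 0$.

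The genuine mathematical content sits entirely in \Cref{thm:comp-hardness-informal}, whose proof rests on the new binary-entropy inequalities and the associated reductions; for this corollary the only points requiring care are bookkeeping ones. Specifically, I would verify that the constant promise gap of \LowRankRenyiQEA{} matches the constant-$\epsilon$ regime in which the query algorithms remain polynomial, and that the purified-access oracle model indeed compiles to a bona fide \BQP{} algorithm once circuits are provided explicitly. I would also note that the corollary deliberately excludes the order $\alpha = \infty$ (the min-entropy), even though \Cref{thm:comp-hardness-informal}(1) proves hardness there as well, because \Cref{table:query-complexity-upper-bounds} supplies a matching polynomial-query upper bound only for real-valued orders.
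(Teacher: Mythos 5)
Your proposal is correct and matches the paper's own reasoning: the paper derives this corollary exactly by combining the rank-dependent query algorithms of~\cite{WZL24,WGL+22} (polynomial-time for $r=\poly(n)$ and constant precision, once the purified-access queries are compiled from the explicit circuits) with the \BQP{}-hardness of \RankTwoRenyiQEA{} from \Cref{thm:comp-hardness-Renyi}, noting that rank-$2$ instances are a special case of low-rank instances. Your bookkeeping remarks, including the deliberate exclusion of $\alpha=\infty$ for lack of a matching upper bound, are likewise consistent with the paper.
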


\begin{corollary} The following holds:
    \begin{enumerate}[label={\upshape(\arabic*)}, topsep=0.33em, itemsep=0.33em, parsep=0.33em]
        \item For all real-valued $q \in (0,1]$, \LowRankTsallisQEA{} is \BQP{}-complete; 
        \item For all real-valued $q > 1$, \TsallisQEA{} is \BQP{}-complete. 
    \end{enumerate}
\end{corollary}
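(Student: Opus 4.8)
The final statement is the corollary asserting \BQP-completeness for the Tsallis cases. Let me think about how to prove it.The plan is to establish each regime of \BQP-completeness by combining two ingredients that are already available: the \BQP-hardness of the rank-$2$ variant from \Cref{thm:comp-hardness-informal}(2), and the containment in \BQP{} supplied by the query algorithms summarized in \Cref{table:query-complexity-upper-bounds}. I would treat containment and hardness as two independent components, since the genuinely difficult work---the reductions behind \Cref{thm:comp-hardness-informal}, which rest on the new binary-entropy inequalities---is assumed here.

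For containment, the key observation is that any quantum query algorithm in the purified quantum access input model making $T$ queries yields a quantum algorithm running in time $T\cdot\poly(n)$ once the state-preparation circuits are given explicitly as source code, because each oracle call together with its controlled and inverse versions is implemented by simply running the circuit in polynomial time. It therefore suffices to check that $T=\poly(n)$ at constant precision $\epsilon$. For part (1) with $q\in(0,1]$, the rank-dependent bounds $\widetilde{O}\rbra[\big]{r^{(3-q^2)/(2q)}/\epsilon^{(3+q)/(2q)}}$ for $q\in(0,1)$ and $\widetilde{O}(r/\epsilon^2)$ for $q=1$ become polynomial in $n$ under the low-rank promise $r=\poly(n)$, placing \LowRankTsallisQEA{} in \BQP{}. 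For part (2) with $q>1$, the rank-\emph{independent} bound $O\rbra[\big]{1/\epsilon^{1+1/(q-1)}}$ of \cite{LW25} is already polynomial (indeed constant) at constant $\epsilon$, so \TsallisQEA{} lies in \BQP{} with no rank restriction whatsoever.

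For hardness, I would note that every rank-$2$ state is in particular a polynomial-rank state, so as promise problems \RankTwoTsallisQEA{} is a restriction of \LowRankTsallisQEA{}, which is itself a restriction of \TsallisQEA{}: the \textsf{yes}- and \textsf{no}-instances of the rank-$2$ problem are, under the identity map, valid \textsf{yes}- and \textsf{no}-instances of the larger problems. Hence any reduction targeting \RankTwoTsallisQEA{} is simultaneously a reduction to the larger problem, and the \BQP-hardness of \RankTwoTsallisQEA{} for all $q>0$ from \Cref{thm:comp-hardness-informal}(2) transfers verbatim: to \LowRankTsallisQEA{} for $q\in(0,1]$ and to \TsallisQEA{} for $q>1$. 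Combining with the containments above gives \BQP-completeness in both cases.

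The step I expect to demand the most care is making the containment airtight rather than conceptual: one must verify that the constant promise gap $\tau_\ttY-\tau_\ttN$ of the completeness statement can be decided by a single run of the estimator, by fixing its additive error $\epsilon$ to a constant fraction of the gap so that the estimate separates the two cases, and that the estimator's success probability can be amplified to the \BQP{} threshold. All remaining content is the restriction/instance-inclusion bookkeeping for hardness, so the corollary follows once \Cref{thm:comp-hardness-informal} is in hand.
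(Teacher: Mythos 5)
Your proposal is correct and follows essentially the same route as the paper, which derives this corollary directly by combining the \BQP{}-hardness of \RankTwoTsallisQEA{} (\Cref{thm:comp-hardness-informal}) with the query upper bounds of \Cref{table:query-complexity-upper-bounds}: the rank-dependent bound for $q\in(0,1]$ becomes polynomial under the low-rank promise, the rank-independent bound of~\cite{LW25} handles $q>1$ without any rank restriction, and hardness transfers because rank-$2$ instances are valid instances of the larger promise problems. Your additional care about converting purified-access query algorithms into explicit \BQP{} machines and fixing the estimator precision to a constant fraction of the promise gap is exactly the bookkeeping the paper leaves implicit.
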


It is worth highlighting that the rank-$2$ case is the \emph{smallest} non-trivial rank that captures \BQP{}-hardness of estimating quantum entropies, since all pure states (i.e., the rank-$1$ case) have \emph{zero} entropy. By contrast, for closeness testing of quantum states with respect to the trace distance, \BQP{}-hardness already arises in the pure-state setting~\cite{RASW23,WZ24}. The possibility that rank-$2$ instances capture \BQP{}-hardness was implicitly suggested in~\cite{LW25}. Our proof of \Cref{thm:comp-hardness-informal} further clarifies the underlying reason: the reduction essentially relies on inequalities relating quantum \emph{binary} entropies of different orders (see \Cref{subsec:proof-techniques} for details).

\vspace{1em}
In addition to estimating quantum entropies of positive orders, we also investigate the order-zero case for quantum R\'enyi and Tsallis entropies, as stated in \Cref{thm:order-zero-NQP-complete-informal}. For the R\'enyi entropy, this case corresponds to the quantum max (Hartley) entropy; while for the Tsallis entropy, it essentially coincides with the rank of the state. 

\begin{theorem}[Informal version of \Cref{thm:order-zero-NQPcomplete}] 
    \label{thm:order-zero-NQP-complete-informal}
    For order $\alpha=0$ and $q=0$, 
    \[\RankTwoRenyiQEA{} \text{ and } \RankTwoTsallisQEA{} \text{ are } \NQP{}\text{-complete}. \]
\end{theorem}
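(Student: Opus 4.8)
The plan is to observe that at order zero both entropies are simple functions of the rank, which reduces the problem to testing whether a (rank-at-most-two) state is pure or genuinely mixed, and then to match this rank-testing problem with the nonzero-acceptance semantics that define \NQP{}. First I would make the order-zero entropies explicit: using the convention $0^0=0$ we have $\Tr(\rho^0)=\rank(\rho)$, so $\Saa{0}(\rho)=\ln\rank(\rho)$ (the quantum max/Hartley entropy) and $\Sqq{0}(\rho)=\rank(\rho)-1$. For a state of rank at most two these take only the values $\{0,\ln 2\}$ and $\{0,1\}$ respectively, so with $\tau_\ttN=0$ and $\tau_\ttY$ any constant in $(0,\ln 2]$ (resp.\ $(0,1]$) the promise gap is automatic and the problem is precisely: decide whether $\rho$ has rank $2$ (\emph{yes}) or rank $1$, i.e.\ is pure (\emph{no}). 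It therefore suffices to prove that this rank-testing problem is \NQP{}-complete, and the same argument handles both \RankTwoRenyiQEA{} and \RankTwoTsallisQEA{}.

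For containment in \NQP{}, recall that \NQP{} consists of promise problems for which a yes-instance admits a poly-time quantum computation accepting with \emph{nonzero} probability, while a no-instance accepts with probability exactly $0$. Given purified access to $\rho$, I would prepare two independent copies and run the SWAP test, declaring acceptance on its ``reject'' outcome; this occurs with probability $\tfrac12\rbra[\big]{1-\Tr(\rho^2)}$. Since $\Tr(\rho^2)=1$ exactly when $\rho$ is pure (rank $1$) and $\Tr(\rho^2)<1$ when $\rho$ has rank $2$, the acceptance probability is nonzero iff the instance is a yes-instance. Hence the rank-testing problem, and therefore both entropy problems, lie in \NQP{}.

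For \NQP{}-hardness I would reduce from the canonical \NQP{}-complete problem: given a poly-size quantum circuit $Q$ on $n$ qubits with a designated acceptance qubit, decide whether the acceptance probability $p=\norm{\Pi_1 Q\ket{0^n}}^2$ is strictly positive. The reduction builds a single-qubit ``flag'' register $c$ that is classically correlated with acceptance. Concretely, I would introduce a control qubit in the state $\tfrac{1}{\sqrt2}(\ket0+\ket1)$, run $Q$ conditioned on the control being $1$, copy the acceptance qubit into $c$ via a \CNOT{}, and leave $c$ at $\ket0$ when the control is $0$. Tracing out everything except $c$ yields the diagonal single-qubit state
\[
    \rho_c=\rbra[\big]{1-\tfrac{p}{2}}\ketbra{0}{0}+\tfrac{p}{2}\ketbra{1}{1},
\]
where the off-diagonal terms vanish because the two branches are supported on orthogonal states of the traced-out registers (distinguished by the control qubit and the acceptance qubit). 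Being a single qubit, $\rho_c$ automatically has rank at most $2$, and since $\tfrac{p}{2}\in[0,\tfrac12]$ it has rank $2$ iff $p>0$ and rank $1$ iff $p=0$. Thus $\Saa{0}(\rho_c)$ (resp.\ $\Sqq{0}(\rho_c)$) equals $\tau_\ttY$ exactly for yes-instances and $\tau_\ttN$ for no-instances, completing the reduction.

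I expect the only delicate point to be the edge case $p=1$: without the control qubit, tracing out would give $(1-p)\ketbra00+p\ketbra11$, which collapses to rank $1$ both when $p=0$ \emph{and} when $p=1$, wrongly classifying a certainly-accepting circuit as a no-instance. The $\tfrac12$-scaling branch is exactly what keeps the reject weight $1-\tfrac{p}{2}\ge\tfrac12$ bounded away from $0$, so that rank collapse occurs only at $p=0$; together with the classical copying that keeps $\rho_c$ diagonal, this is the crux that makes the rank faithfully encode nonzero acceptance. The remaining points---that the constructed circuit is poly-size and fits the purified-access input model, with $c$ as the state register and the control, workspace, and $Q$-registers as the purification---are routine.
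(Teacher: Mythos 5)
Your proposal is correct, and its containment half coincides with the paper's proof of \Cref{thm:order-zero-NQPcomplete}: both run the SWAP test (\Cref{lemma:swap-test}) on two copies of $\rho$ and accept on the mismatch outcome, so that the acceptance probability $\frac{1}{2}\rbra*{1-\Tr(\rho^2)}$ is positive exactly for rank-$2$ states and exactly zero for pure states. The hardness half, however, takes a genuinely different route. The paper recycles the compute--copy--uncompute construction of \Cref{lemma:PureInfidelity-BQPhard}: from an \NQP{} circuit $C'_x$ it forms $C_x=(C'_x)^{\dagger}\,\CNOT_{\sfO\rightarrow\sfF}\,C'_x$, takes the uniform mixture $\rho=\frac{1}{2}\rbra*{\ketbra{\psi_0}{\psi_0}+\ketbra{\psi_1}{\psi_1}}$ of $\ket{\psi_0}=\ket{\bar{0}}\otimes\ket{0}_{\sfF}$ and $\ket{\psi_1}=C_x\rbra*{\ket{\bar{0}}\otimes\ket{0}_{\sfF}}$, and uses the identity $\abs{\innerprod{\psi_0}{\psi_1}}^2=1-\Pr[C'_x\text{ accepts}]^2$, so that $\rank(\rho)=2$ if and only if the acceptance probability is nonzero; this keeps the hard instances in the same two-pure-state form used throughout the paper and in the reduction underlying \Cref{thm:Rank2TsallisQEA2-BQPhard}. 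You instead encode acceptance directly in the populations of a one-qubit flag: a control qubit in uniform superposition, a controlled run of the circuit, and a \CNOT{} copying the acceptance qubit yield the diagonal state $\rho_c=\rbra*{1-\frac{p}{2}}\ketbra{0}{0}+\frac{p}{2}\ketbra{1}{1}$, whose off-diagonal entries vanish by orthogonality of the traced-out registers and whose rank is $2$ if and only if $p>0$. Your route is more elementary---no uncomputation step and no inner-product identity---and your $\frac{1}{2}$-weighted idle branch transparently handles the $p=1$ edge case, which the paper's construction instead handles automatically because there $1-p^2=0$ makes $\ket{\psi_0}$ and $\ket{\psi_1}$ orthogonal (hence still rank $2$). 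What the paper's version buys is uniformity with its other reductions and a state that already lives on $n=n'+1\geq 2$ output qubits. On that last point, one small detail: the formal statement asserts hardness for every output length $n\geq 2$, while your $\rho_c$ is a single qubit; this is routine to fix, since tensoring the output with $\ket{0}^{\otimes(n-1)}$ preserves the rank and hence both order-$0$ entropies, but it should be stated.
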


Notably, the behavior of quantum query upper bounds for estimating these order-zero entropies aligns with \Cref{thm:order-zero-NQP-complete-informal}: such bounds scale polynomially with the reciprocal of the smallest non-zero eigenvalue of the state~\cite[Section III.B]{WGL+22}, which can be arbitrarily small in general. In particular, the complexity class \NQP{} can be viewed as a precise variant of \BQP{} that always rejects \textit{no} instances, where the promise gap may be arbitrarily small. This class is equal to the classical class $\coCeP=\NQP$~\cite{ADH97,YY99}, where $\CeP$, introduced in~\cite{Wagner86}, is closely related to the standard counting class \PP{}, since $\CeP\subseteq\PP\subseteq \NP^{\CeP}$.\footnote{Since \PP{} is closed under complement, it follows that $\coCeP \subseteq \PP$. For further details and properties of $\CeP$, which lies within the counting hierarchy, we refer to~\cite{Watson15}.} 

\subsection{Previous approaches to establishing computational hardness}
\label{subsec:previous-approches-summary}

Before presenting the proof techniques underlying \Cref{thm:comp-hardness-informal}, we briefly review known approaches to establishing the computational hardness of the \textsc{Quantum Entropy Approximation Problem} (\QEA{}) and its variants. One standard approach proceeds via the \textsc{Quantum Entropy Difference Problem} (\QED{}), which concerns the quantity $\S(\rho_0)-\S(\rho_1)$ and can be solved using a search version of \QEA{}.\footnote{Specifically, one can decide whether a given \QED{} instance corresponding to $(\rho_0,\rho_1)$ is a \textit{yes} or \textit{no} instance by estimating $\S(\rho_0)$ and $\S(\rho_1)$ separately to the required precision.} 
The key quantity in this approach is the distance version of the (quantum) entropy difference~\cite{Vad99,BASTS10}, namely the quantum Jensen--Shannon divergence (\QJS{}) introduced in~\cite{MLP05}, 
\[  \QJS(\rho_0,\rho_1) \coloneqq \S\rbra[\Big]{\frac{\rho_0+\rho_1}{2}} - \frac{\S(\rho_0) + \S(\rho_1)}{2}, \]
whose square root is a distance metric~\cite{Virosztek21,Sra21}. A particularly direct proof was recently outlined in~\cite[Equation (4)]{LW25}, crucially relying on the following identity: 
\begin{equation}
    \label{eq:QED-eq-QJS-reduction}
    2 \cdot \QJS(\rho_0,\rho_1) =  \S\rbra*{ \rbra[\Big]{\frac{\rho_0+\rho_1}{2}} \otimes \rbra[\Big]{\frac{\rho_0+\rho_1}{2}} } - \S(\rho_0 \otimes \rho_1).
\end{equation}
By combining \Cref{eq:QED-eq-QJS-reduction} with known inequalities relating $\QJS$ to the trace distance~\cite{FvdG99,BH09}, one can directly reduce the \textsc{Quantum State Distinguishability Problem} (\QSD{}), defined in terms of the trace distance, to \QED{}. Since \QSD{} is \QSZK{}-hard~\cite{Wat02,Watrous09}, it follows that \QED{} is \QSZK{}-hard under Karp reduction, and consequently, \QEA{} is \QSZK{}-hard under Turing reduction. 

The tailor-made approach described above applies only to the order-$1$ case (von Neumann entropy). A more general method for proving the \QSZK{}-hardness of \QED{}, developed in~\cite{BASTS10} (see also a simplified version in~\cite{Liu23}), relies on additional information-theoretic tools, including Fannes' inequality. This method extends naturally to the promise problems \TsallisQEA{} and \TsallisQED{} for $1 < q \leq 2$, which are defined in~\cite{LW25} and correspond to the quantum $q$-Tsallis entropy of the relevant orders. The key quantity in this extension is the quantum $q$-Jensen-Tsallis divergence (\QJTq{}) introduced in~\cite{BH09}, whose square root also serves as a distance metric~\cite{Sra21}. The main technical challenge lies in the corresponding inequalities relating these divergences to the trace distance, which were established only very recently in~\cite[Section 4]{LW25}, using the joint convexity of \QJTq{} for the relevant orders~\cite{CT14,Virosztek19}. The proof is then completed in analogy with the order-$1$ case, employing Fannes' inequality and the basic properties of the quantum $q$-Tsallis entropy as provided in~\cite{Raggio95,FYK07,Zhang07}, and the argument requires a complicated trade-off in choosing parameters. 

Nevertheless, such joint convexity properties do not hold in general for the (quantum) $q$-Tsallis entropy of arbitrary order $q$, even in the classical case~\cite{BR82}. In addition, although the quantum $\alpha$-Jensen-R\'enyi divergence ($\QJRalpha$) was studied a few years ago in~\cite{Sra21} and shown to be the square of a metric for $0 < \alpha < 1$, its joint convexity has not been investigated and may not hold for positive order $\alpha$ in general. 

\vspace{1em}
Another common approach is to reduce the \textsc{Quantum State Closeness to Maximally Mixed State} (\QSCMM{}) to \QEA{}. This promise problem, defined via the trace distance with the state $\rho_1$ fixed to be the $n$-qubit maximally mixed state $(I/2)^{\otimes n}$, is complete for the class \NIQSZK{}~\cite{Kobayashi03,BASTS10,CCKV08}. These reductions rely on inequalities that relate different quantum entropies, such as the von Neumann entropy, to the trace distance $\td\rbra[\big]{\rho,(I/2)^{\otimes n}}$, which can be characterized through optimization problems. In particular, the optimization problem corresponding to the easy direction is typically convex, such as~\cite[Lemma 16]{KLGN19}, while the one for the hard direction may be \textit{non-convex} in general,\footnote{For the order-$1$ case, the hard direction follows directly from the inequality in~\cite{Vajda70}.} as in the case of the quantum $q$-Tsallis entropy  $\Sq(\rho)$ with $q=1+\frac{1}{n-1}$~\cite[Section 4.4]{LW25}. 

Since solving non-convex optimization problems, even approximately, is often technically challenging, this approach does not extend readily to quantum entropies of positive orders and requires further work in the low-rank setting. In particular, it is necessary to establish analogous inequalities that connect $\Sq(\rho)$ with $\td(\rho,\rho_\mathtt{U})$, where $\rho_\mathtt{U}$ denotes an $n$-qubit quantum state of polynomially bounded rank with uniformly distributed eigenvalues. 

\subsection{Proof techniques}
\label{subsec:proof-techniques}

We now outline the proof strategy underlying \Cref{thm:comp-hardness-informal}. Our starting point is an alternative and simplified argument establishing that \RankTwoQEA{} is \BQP{}-hard, which serves as an illustrative example of our new approach. While this hardness result was already shown in~\cite[Theorem 1.2(1)]{LW25}, their proof establishes \BQP{}-hardness only under Turing reduction, specifically through reductions to the counterpart quantum entropy difference problem.\footnote{Nevertheless, unlike other quantum complexity classes such as \QSZK{}, \BQP{}-hardness under Turing reduction is \textit{no weaker} than \BQP{}-hardness under Karp reduction, since the \BQP{} subroutine theorem~\cite[Section 4]{BBBV97} implies that $\BQP^{\calA} \subseteq \BQP$ holds for any efficient quantum algorithm $\calA$.}

Our method is guided by two key observations. The first observation is the following identity: the quantum $2$-Tsallis entropy of a rank-$2$ state $\frac{1}{2}\rbra*{\ketbra{\psi_0}{\psi_0}+\ketbra{\psi_1}{\psi_1}}$, which in some sense is ``\BQP{}-hard to prepare'', coincides with the $2$-Tsallis \textit{binary} entropy $\Hqq{2}(x)$: 
\begin{equation}
    \label{eq:rank-2-Tsallis-EQ-binary}
    \Sqq{2}\rbra*{ \frac{\ketbra{\psi_0}{\psi_0}+\ketbra{\psi_1}{\psi_1}}{2} } = \frac{1-\abs*{\innerprod{\psi_0}{\psi_1}}^2}{2} = \Hqq{2}\rbra*{\frac{1-\abs*{\innerprod{\psi_0}{\psi_1}}}{2}}. 
\end{equation}

In particular, these expressions are proportional to $1-\abs*{\innerprod{\psi_0}{\psi_1}}^2$, whose constant-precision estimation is known to be \BQP{}-hard~\cite{RASW23}. This equivalence immediately implies the \BQP{}-hardness of 
$\RankTwoTsallisQEAnoq_2$. To extend the hardness result to \RankTwoTsallisQEA{} for other orders $q$, including the order-$1$ case, i.e., the von Neumann entropy, it suffices to establish inequalities relating $\Hqq{2}(x)$ to the $q$-Tsallis binary entropy. 

The second observation is that the (Shannon) binary entropy admits the following power-type bounds, which have been known for more than two decades~\cite{Topsoe01,Lin91}, and can be expressed in terms of the $2$-Tsallis binary entropy:\footnote{The lower bound is a special case of~\cite[Theorem II.6]{HT01}, with a direct proof given in~\cite{Topsoe01}. The upper bound can be further strengthened to $ \H(x) \leq 2^{\frac{1}{2\H(1/2)}} \H\rbra*{1/2} \cdot \Hqq{2}(x)^{\frac{1}{2\H(1/2)}}$, as stated in~\cite[Theorem 1.2]{Topsoe01}.}

\begin{equation}
    \label{eq:binary-entropy-bounds-via-TsallisTwo}
    2 \H\rbra*{\frac{1}{2}} \cdot \Hqq{2}(x) \leq \H(x) \leq \sqrt{2} \H\rbra*{\frac{1}{2}} \cdot \sqrt{\Hqq{2}(x)}.
\end{equation}

Taken together, these two key observations yield a reduction from the quantity $1-\abs*{\innerprod{\psi_0}{\psi_1}}^2$, which is \BQP{}-hard to estimate~\cite{RASW23}, to \RankTwoQEA{}, thereby establishing the \BQP{}-hardness of \RankTwoQEA{} under Karp reduction. 

\vspace{1em}
Unlike the previous approach based on the quantum (Tsallis) entropy difference~\cite{BASTS10,Liu23,LW25}, which essentially relies on the quantum Jensen-type divergences and is therefore quite restrictive in the choice of orders, our new approach to establishing \BQP{}-hardness extends beyond \RankTwoTsallisQEA{} for arbitrary positive real orders and also applies to \RankTwoRenyiQEA{}. The first key observation admits a R\'enyi analogue, given by the identity in \Cref{eq:rank-2-Renyi-EQ-binary}, which parallels \Cref{eq:rank-2-Tsallis-EQ-binary}: 

\begin{equation}
    \label{eq:rank-2-Renyi-EQ-binary}
    \Saa{2}\rbra*{\frac{\ketbra{\psi_0}{\psi_0}+\ketbra{\psi_1}{\psi_1}}{2}} = \ln(2) - \ln\rbra*{1 + \abs{\innerprod{\psi_0}{\psi_1}}^2} = \Haa{2}\rbra*{\frac{1-\abs*{\innerprod{\psi_0}{\psi_1}}}{2}}. 
\end{equation}

The second key observation involves inequalities relating R\'enyi or Tsallis binary entropies of different orders to the corresponding order-$2$ binary entropies. These inequalities, summarized in \Cref{table:Rank2RenyiQEA-hardness,table:Rank2TallisQEA-hardness}, differ depending on the range of the orders under consideration. 

\begin{table}[!ht]
    \centering
    \adjustbox{max width=\textwidth}{
    \begin{tabular}{ccccc}
        \toprule
        Range of $\alpha$ & Range of $n$ & Hardness & Reduction from & \highlight{New} inequalities\\
        \midrule
        $\alpha=0$ & $n \geq 2$ & \makecell{$\NQP$-hard\\ {\footnotesize \Cref{thm:order-zero-NQPcomplete} }} & N/A & None\\
        \midrule
        $0 < \alpha < 1$ & $n \geq \ceil*{2/\alpha}$ & \makecell{\BQP{}-hard\\ {\footnotesize \Cref{thm:Rank2RenyiQEA-BQPhard-0leAle2}\ref{thmitem:RenyiQEA-BQPhard-0leQle1} } } & \makecell{$\RankTwoRenyiQEAnoa_2$\\ {\footnotesize \Cref{thm:Rank2RenyiQEA2-BQPhard}} } & \multirow{2}{*}{\makecell{ $\Haa{2}(x) \leq \Halpha(x)$\\ \highlight{$\Halpha(x) \leq \ln(2)^{1-\frac{\alpha}{2}}\cdot \Haa{2}(x)^\frac{\alpha}{2}$} }}\\
        \cmidrule{1-4}
        $1 \leq \alpha < 2$ & $n \geq 2$ & \makecell{\BQP{}-hard\\ {\footnotesize \Cref{thm:Rank2RenyiQEA-BQPhard-0leAle2}\ref{thmitem:RenyiQEA-BQPhard-1leQle2} } } & \makecell{$\RankTwoRenyiQEAnoa_2$\\ {\footnotesize \Cref{thm:Rank2RenyiQEA2-BQPhard}} } & {\footnotesize \cite[Section 5.3]{BS93} \& \Cref{thm:Renyi-binary-entropy-upper-bound-0leAle2} } \\
        \midrule
        $\alpha=2$ & $n\geq 2$ & \makecell{\BQP{}-hard\\ {\footnotesize \Cref{thm:Rank2RenyiQEA2-BQPhard} }} & \makecell{Estimating $1-\abs{\innerprod{\psi_0}{\psi_1}}^2$\\ {\footnotesize\cite[Theorem 12]{RASW23}}} & None\\
        \midrule
        $\alpha \in (2,\infty]$ & $n \geq 2$ & \makecell{\BQP-hard\\ {\footnotesize \Cref{thm:Rank2RenyiQEA-BQPhard-Age2} } } & \makecell{$\RankTwoRenyiQEAnoa_2$\\ {\footnotesize \Cref{thm:Rank2RenyiQEA2-BQPhard}} } & \makecell{$\highlight{ \frac{\alpha}{2(\alpha-1)}\cdot \Haa{2}(x) \leq \Halpha(x) } \leq \Haa{2}(x)$\\ {\footnotesize \Cref{thm:Renyi-binary-entropy-lower-bound-Age2} \& \cite[Section 5.3]{BS93}} }\\
        \bottomrule
    \end{tabular}
    }
    \caption{Computational hardness of \RankTwoRenyiQEA{} with constant precision.}
    \label{table:Rank2RenyiQEA-hardness}
\end{table}

Interestingly, the inequalities for $q$-Tsallis binary entropy in \Cref{table:Rank2TallisQEA-hardness} require consideration of an additional case. This phenomenon is intuitively linked to the monotonicity of the \textit{normalized} $q$-Tsallis binary entropy, $\widetilde{\H}_q^\ttT(x) \coloneqq \Hq(x)/\Hq(1/2)$, implicitly studied in~\cite{Daroczy70}. Numerical evidence suggests a transition point $q^*(x) \in [2,3]$ at which $\widetilde{\H}_q^\ttT(x)$ changes monotonicity: it is monotonically decreasing on $q\in[0,q^*(x))$ and monotonically increasing on $q>q^*(x)$. This informally explains the additional row for $q \in (2,3]$ in \Cref{table:Rank2TallisQEA-hardness}. 

\begin{table}[!ht]
    \centering
    \adjustbox{max width=\textwidth}{
    \begin{tabular}{ccccc}
        \toprule
        Range of $q$ & Range of $n$ & Hardness & Reduction from & \highlight{New} inequalities\\
        \midrule
        $q=0$ & $n \geq 2$ & \makecell{$\NQP$-hard\\ {\footnotesize \Cref{thm:order-zero-NQPcomplete} }} & N/A & None\\
        \midrule
        $0 < q < 1$ & $n \geq \ceil*{1/q}$ & \makecell{\BQP{}-hard\\ {\footnotesize \Cref{thm:Rank2TsallisQEA-BQPhard-0leQle2}\ref{thmitem:TsallisQEA-BQPhard-0leQle1}} } & \makecell{$\RankTwoTsallisQEAnoq_2$\\ {\footnotesize \Cref{thm:Rank2TsallisQEA2-BQPhard}} } & \multirow{2}{*}{\makecell{$2\Hq\rbra*{\frac{1}{2}} \cdot \Hqq{2}(x) \leq \Hq(x)$\\ \highlight{$\Hq(x) \leq 2^{q/2} \Hq\rbra*{\frac{1}{2}} \cdot \rbra*{\Hqq{2}(x)}^{q/2}$}} }\\
        \cmidrule{1-4}
        $1 \leq q < 2$ & $n \geq 2$ & \makecell{\BQP{}-hard\\ {\footnotesize \Cref{thm:Rank2TsallisQEA-BQPhard-0leQle2}\ref{thmitem:TsallisQEA-BQPhard-1leQle2}} } & \makecell{$\RankTwoTsallisQEAnoq_2$\\ {\footnotesize \Cref{thm:Rank2TsallisQEA2-BQPhard}} } & {\footnotesize \cite[Lemma 4.8]{LW25} \& \Cref{thm:Tsallis-binary-entropy-upper-bound-0leQle2}}\\
        \midrule
        $q=2$ & $n\geq 2$ & \makecell{\BQP{}-hard\\ {\footnotesize \Cref{thm:Rank2TsallisQEA2-BQPhard}} } & \makecell{Estimating $1-\abs{\innerprod{\psi_0}{\psi_1}}^2$\\ {\footnotesize\cite[Theorem 12]{RASW23}}} & None\\
        \midrule
        $2 < q \leq 3$ & $n \geq 2$ & \makecell{\BQP-hard\\ {\footnotesize \Cref{thm:Rank2TsallisQEA-BQPhard-2leQle3}} } & \makecell{$\RankTwoTsallisQEAnoq_2$\\ {\footnotesize \Cref{thm:Rank2TsallisQEA2-BQPhard}} } & \makecell{ \highlight{$\frac{q}{2(q-1)} \cdot \Hqq{2}(x) \leq \Hq(x) \leq 2\Hq\rbra*{\frac{1}{2}} \cdot \Hqq{2}(x)$} \\ {\footnotesize\Cref{thm:improved-Tsallis-binary-entropy-lower-bound}\ref{thmitem:Tsallis-2leQle3}} }\\
        \midrule
        $q \in (3,\infty)$ & $n \geq \ceil{\log_2{q}}$ & \makecell{\BQP{}-hard\\ {\footnotesize \Cref{thm:ConstRankTsallisQEA-BQPhard-Qge3}} } & \makecell{$\RankTwoTsallisQEAnoq_2$\\ {\footnotesize \Cref{thm:Rank2TsallisQEA2-BQPhard}} } & \makecell{$2\Hq\rbra*{\frac{1}{2}} \cdot \Hqq{2}(x) \leq \Hq(x)$\\ \highlight{$\Hq(x) \leq \frac{q}{2(q-1)} \cdot \Hqq{2}(x)$} \\ {\footnotesize \cite[Lemma 4.8]{LW25} \& \Cref{thm:improved-Tsallis-binary-entropy-lower-bound}\ref{thmitem:Tsallis-Qge3}} }\\
        \bottomrule
    \end{tabular}
    }
    \caption{Computational hardness of \RankTwoTsallisQEA{} with constant precision.}
    \label{table:Rank2TallisQEA-hardness}
\end{table}

\subsection{Discussion and open problems}

Perhaps the most intriguing open problem is the following -- what are the limitations of our new approach for establishing the computational hardness of estimating quantum entropies? In particular, can one prove the hardness of the \textsc{Quantum $\alpha$-R\'enyi Entropy Approximation Problem} (\RenyiQEA{}) for any positive order $\alpha$? The well-known inequalities 
\[\Saa{\infty}(\rho) \leq \Saa{2}(\rho) \leq 2 \cdot \Saa{\infty}(\rho)\] 
can be almost straightforwardly generalized to relate the (quantum) min-entropy to the (quantum) $\alpha$-R\'enyi entropy for the order $\alpha>1$:\footnote{Let $\cbra{\lambda_k}_{k=1}^N$ denote the eigenvalues of an $n$-qubit quantum state $\rho$, where $N\coloneqq 2^n$. The upper bound in \Cref{eq:Salpha-vs-Smin} follows from the fact that for all $\alpha >1$, $\ln\rbra*{\sum_{k=1}^N \lambda_k^{\alpha}} \geq \ln \rbra*{\max_k \lambda_k^{\alpha}} = \alpha \ln \lambda_{\max}$, since $\ln(x)$ is monotonically increasing for $x>0$. The argument is then completed by multiplying both sides by $1/(1-\alpha)$.} 
\begin{equation}
    \label{eq:Salpha-vs-Smin}
    \Saa{\infty}(\rho) \leq \Salpha(\rho) \leq \frac{\alpha}{\alpha-1} \cdot \Saa{\infty}(\rho).
\end{equation}

However, our new approach is effective only when the values of the quantum entropies and the promise gap are of comparable magnitude, e.g., when both are constant. Otherwise, reductions based on inequalities relating the quantum min entropy (in the order-$\infty$ case) to the quantum R\'enyi entropy of other orders break down for sufficiently large $n$. 

Beyond this technical limitation, a more fundamental complexity-theoretic barrier arises. Specifically, estimating the min-entropy $\RenyiEAnoa_{\infty}$ is \coSBP{}-complete~\cite{Watson16}.\footnote{We note that the promise problem \textsc{Circuit-Min-Ent-Gap} defined in~\cite{Watson16} is \SBP{}-complete, but its promise conditions are the exact opposite of those in \EA{}~\cite{GV99}, which is why we consider the complement.} Any reduction analogous to our approach for establishing \Cref{thm:comp-hardness-informal} would imply that the \textsc{Entropy Approximation Problem} \EA{} is \coSBP{}-hard. Since \EA{} is \NISZK{}-complete~\cite{GSV98,GV99}, combining such a reduction with the \coSBP{}-hardness of $\RenyiEAnoa_{\infty}$ would yield
\begin{equation}
    \label{eq:limitation-consequence}
    \coNP \subseteq \coSBP \subseteq \NISZK \subseteq \SZK \subseteq \AM \cap \coAM,
\end{equation}
where the inclusion $\NP \subseteq \MA \subseteq \SBP$ is proven in~\cite{BGM06}. The inclusion $\coNP \subseteq \AM$ in \Cref{eq:limitation-consequence} would collapse the polynomial-time hierarchy to its second level~\cite{BHZ87}. 

\vspace{1em}
In addition to this main open problem concerning the computational hardness of estimating the quantum R\'enyi entropy, there are two further open questions: 
\begin{enumerate}[label={\upshape(\alph*)}]
    \item What is the computational hardness of estimating the quantum R\'enyi and Tsallis entropies of the order-$0$ in general? 
    \item Can the inequalities in \Cref{table:Rank2RenyiQEA-hardness} be tightened? For instance, is it possible to prove that $\rbra[\big]{\frac{\Halpha(x)}{\ln(2)}}^{2/\alpha}$ is monotonically non-decreasing in $\alpha$ for all fixed $x\in[0,1]$, as suggested by numerical evidence and as a generalization of \Cref{thm:Renyi-binary-entropy-upper-bound-0leAle2}? 
\end{enumerate}

\subsection{Related works}

We first review additional prior work on the computational complexity of decision problems related to entropies. A variant of \textsc{Entropy Approximation} (\EA{}), specifically the sampler associated with distributions described by a degree-$3$ polynomial, was shown to be $\mathsf{SZK_L}$-complete~\cite{DGRV11}. More recently, another variant of \EA{}, where the promises involve different entropies -- namely deciding whether the max entropy (order $0$) is small or the smoothed $2$-R\'enyi entropy is large -- was proven to be \NISZK{}-complete in~\cite{MNRV24}, playing a key role in batch verification of non-interactive statistical zero-knowledge. Furthermore, variants of \textsc{Quantum Entropy Difference} (\QED{}), which are connected to estimating the von Neumann entropy of quantum states, have attracted attention in recent years: the case where the state-preparation circuits are shallow-depth was studied in~\cite{GH20} and shown to be as hard as the Learning with Errors (LWE) problem, while the case where the state-preparation circuits act on $O(\log n)$ qubits was shown to be \textsf{BQL}-complete in~\cite{LGLW23}. 

In addition to results on entropy-related decision problems, while there is no direct connection to our approach, it is worth noting that conceptually similar inequalities relating different orders of information-theoretic quantities, similar to the R\'enyi binary entropies in \Cref{table:Rank2RenyiQEA-hardness} and the Tsallis binary entropies in \Cref{table:Rank2TallisQEA-hardness}, were established in~\cite{LW25Lalpha} for the quantum $\ell_\alpha$ distance $\td_\alpha(\rho_0,\rho_1)$ defined via the Schatten norm $\norm{A}_\alpha \coloneqq \rbra*{\Tr\rbra*{\abs{A}^\alpha}}^{1/\alpha}$. Specifically, such inequalities connect the trace distance ($\alpha=1$) to other orders where $\alpha>1$. 

\paragraph{Recent developments.} About three months after the appearance of our work, it was shown that \TsallisQEA{} is \NIQSZK{}-hard for all real-valued $0<q<1$ in~\cite{CLW26}. The underlying techniques are entirely different from ours and are conceptually more closely aligned with the framework of~\cite{CCKV08,KLGN19}, relying on newly established inequalities that relate $\Sq(\rho)$ to $\td(\rho,\rho_{\ttU})$ for such values of $q$. 


\section{Preliminaries}

We assume a basic familiarity with quantum computation and the theory of quantum information. The reader may refer to~\cite{NC10} for an introduction. 
For notational convenience, we write $\ket{\bar{0}}$ to denote $\ket{0}^{\otimes a}$, where $a>1$ is an integer.

\subsection{Bounds for Tsallis and R\'enyi binary entropies}

The \textit{$q$-logarithm} function $\ln_q \colon \bbR^+ \rightarrow \bbR$ for any real $q\neq 1$ is defined as:
\[ \forall x\in \bbR^+, \quad \ln_q(x) \coloneqq \frac{1-x^{1-q}}{q-1}. \]

\begin{definition}[Binary entropies]
    The $q$-Tsallis binary entropy $\Hq(x)$ and the $\alpha$-R\'enyi binary entropy $\Halpha(x)$ are defined by: for any $x\in[0,1]$,     
    \begin{align*}
        \Hq(x) &\coloneqq \frac{1-x^q-(1-x)^q}{q-1} = - x^q \ln_q(x) - (1-x)^q \ln_q(1-x),\\
        \Halpha(x) &\coloneq\frac{\ln\rbra*{x^\alpha + (1-x)^{\alpha}}}{1-\alpha}.
    \end{align*} 
    The (Shannon) binary entropy arises as a limiting case of both the $q$-Tsallis binary entropy and the $\alpha$-R\'enyi binary entropy as the order approaches $1$: 
    \[\Hqq{1}(x) = \Haa{1}(x) = \H(x) \coloneqq -x \ln{x} -(1-x) \ln(1-x),\]
    where $\Hqq{1}(x) \coloneqq \lim_{q\rightarrow 1} \Hq(x)$ and $\Haa{1}(x) \coloneqq \lim_{\alpha\rightarrow 1} \Halpha(x)$.
    The binary min entropy also arises as a limiting case of the $\alpha$-R\'enyi binary entropy as $\alpha$ approaches $\infty$:
    \[ \Haa{\infty}(x) = \Hmin(x) \coloneqq -\ln\rbra*{\max\cbra{x,1-x}}, \;\;\text{where}\;\; \Haa{\infty}(x) \coloneqq \lim_{\alpha\to\infty} \Halpha(x). \]
\end{definition}

We then list several useful bounds for the Tsallis and R\'enyi binary entropies: 

\begin{lemma}[Tsallis binary entropy lower bound, adapted from~{\cite[Lemma 4.8]{LW25}}]
    \label{lemma:Tsallis-binary-entropy-lower-bound}
    For any $q\in [0,2] \cup [3,\infty)$, the following holds:
    \[ \forall x\in[0,1], \quad 2\Hq\rbra*{1/2} \cdot \Hqq{2}(x) = \Hq\!\rbra*{1/2} \cdot 4 x(1-x) \leq \Hq(x). \]
\end{lemma}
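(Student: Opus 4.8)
The plan is to reduce the inequality to the convexity of a single one-variable function and then establish that convexity by a monotone-averaging argument; the admissible range $q\in[0,2]\cup[3,\infty)$ will emerge precisely as the set on which $t\mapsto t^{q-2}$ is convex. Since $\Hqq{2}(x)=2x(1-x)$ and $\Hqq{2}(1/2)=1/2$, the displayed equality $2\Hq(1/2)\Hqq{2}(x)=\Hq(1/2)\cdot 4x(1-x)$ is immediate, and it remains to prove $\Hq(x)\ge\Hq(1/2)\cdot 4x(1-x)$. Using that $\Hq$ is symmetric about $x=1/2$, I substitute $s\coloneqq(1-2x)^2\in[0,1]$, so that $4x(1-x)=1-s$ and, for $q\ne 1$, $\Hq(x)=\frac{1-2^{1-q}\Phi(s)}{q-1}$ where $\Phi(s)\coloneqq\frac{(1+\sqrt s)^q+(1-\sqrt s)^q}{2}$, with $\Phi(0)=1$ and $\Phi(1)=2^{q-1}$. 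Clearing the denominator while tracking the sign of $q-1$ turns the target into $\Phi(s)\le(1-s)\Phi(0)+s\Phi(1)$ for $q>1$, and into the reverse inequality for $q<1$; that is, $\Phi$ must lie below (resp.\ above) the chord through its endpoint values, which holds as soon as $\Phi$ is convex (resp.\ concave) on $[0,1]$. The borderline $q=1$ is the Shannon case, already recorded as the lower bound of \Cref{eq:binary-entropy-bounds-via-TsallisTwo}.

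To control the convexity of $\Phi$ I compute $\Phi'(s)=\frac q4\,m(u)$ with $u=\sqrt s$ and $m(u)\coloneqq\frac{(1+u)^{q-1}-(1-u)^{q-1}}{u}$; since $s\mapsto u$ is increasing and $q>0$, $\Phi$ is convex (resp.\ concave) exactly when $m$ is nondecreasing (resp.\ nonincreasing) on $(0,1)$. Writing the numerator of $m$ as $(q-1)\int_{1-u}^{1+u}t^{q-2}\,\dd t$ gives $m(u)=2(q-1)\,\bar f(u)$, where $\bar f(u)\coloneqq\frac1{2u}\int_{1-u}^{1+u}t^{q-2}\,\dd t$ is the average of $f(t)\coloneqq t^{q-2}$ over the symmetric window $[1-u,1+u]\subset(0,2)$.

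The heart of the argument is the elementary fact that the average of a convex function over a symmetric window of growing radius is nondecreasing: with $I(u)=\int_{1-u}^{1+u}f$ one has $I'(u)=f(1+u)+f(1-u)$ and $\frac{\dd}{\dd u}\bar f=\frac{u\,I'(u)-I(u)}{2u^2}$, and convexity of $f$ (whence the mean of $f$ is at most the mean of its endpoint values) yields $I(u)\le u\,I'(u)$, i.e.\ $\bar f'\ge 0$.

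It remains to assemble the pieces and pin down the range. Since $f''(t)=(q-2)(q-3)t^{q-4}$, the function $f(t)=t^{q-2}$ is convex on $(0,\infty)$ exactly when $(q-2)(q-3)\ge 0$, i.e.\ for $q\in[0,2]\cup[3,\infty)$ — precisely the stated range, and the reason the interval $(2,3)$ is excluded. For such $q$, $\bar f$ is nondecreasing, so the constant factor $2(q-1)$ makes $m$ nondecreasing when $q>1$ and nonincreasing when $q<1$; either way $\Phi$ acquires the convexity/concavity needed above, which completes the proof. I expect the main obstacle to be this monotone-averaging step, together with its careful treatment near $u=1$, where $f$ is singular for $q<1$ yet $\bar f$ remains monotone on $(0,1)$ and $\Phi$ remains continuous on $[0,1]$; conceptually, this is exactly the step that determines which orders $q$ are admissible.
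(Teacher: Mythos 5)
Your proof is correct, and it takes a genuinely different route from the paper: the paper never proves \Cref{lemma:Tsallis-binary-entropy-lower-bound} at all (it imports it from~\cite[Lemma 4.8]{LW25}), and the machinery the paper uses for the sibling bounds in \Cref{thm:improved-Tsallis-binary-entropy-lower-bound} runs through the rank-$2$-state correspondence, generalized binomial series, and the sign analysis of \Cref{prop:sign-conds-binomial-coeffs}, rather than anything like your argument. Your reduction is sound: with $s=(1-2x)^2$ the claim becomes the chord inequality for $\Phi(s)=\frac{1}{2}\rbra[\big]{(1+\sqrt{s})^q+(1-\sqrt{s})^q}$ (below the chord for $q>1$, above it for $q<1$, the flip coming from the sign of $q-1$); the identities $\Phi'(s)=\frac{q}{4}m(\sqrt{s})$ and $m(u)=2(q-1)\bar{f}(u)$ check out; and the monotone-averaging step is exactly the right-hand Hermite--Hadamard inequality $\frac{1}{2u}\int_{1-u}^{1+u}f(t)\,\dd t\leq\frac{f(1-u)+f(1+u)}{2}$ for convex $f$, which gives $\bar{f}'\geq 0$. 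What your approach buys is transparency about the range: $q\in[0,2]\cup[3,\infty)$ emerges as precisely the set where $f(t)=t^{q-2}$ is convex, i.e.\ $(q-2)(q-3)\geq 0$; moreover, for $q\in(2,3)$ every inequality reverses and your computation recovers the \emph{upper} bound $\Hq(x)\leq 2\Hq\rbra*{1/2}\cdot\Hqq{2}(x)$ of \Cref{thm:improved-Tsallis-binary-entropy-lower-bound}\ref{thmitem:Tsallis-2leQle3}, which the paper only obtains via the binomial-series argument of \Cref{lemma:QJT2-vs-QJTq-Qeq2}. Two boundary points should each get an explicit sentence in a final write-up, though neither is a gap: for $q=1$, instead of citing \Cref{eq:binary-entropy-bounds-via-TsallisTwo} you can stay self-contained by noting that both sides are continuous in $q$ at $q=1$ (since $\Hq\to\H$ and $\Hq\rbra*{1/2}\to\ln 2$), so that case follows from $q\neq 1$ by a limit; and for $q<1$ the integrand $t^{q-2}$ is non-integrable at $t=0$, so $\bar{f}(u)\to\infty$ as $u\to 1^-$, but this is harmless because concavity of $\Phi$ is only needed on the open interval $(0,1)$, and continuity of $\Phi$ on $[0,1]$ (valid since $(1-\sqrt{s})^q\to 0$ for $q>0$) extends the chord inequality to the closed interval.
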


\begin{lemma}[Monotonicity of R\'enyi binary entropy, adapted from~{\cite[Section 5.3]{BS93}}]
    \label{lemma:Renyi-monotonicity}
    For any $\alpha,\alpha' \in \bbR$ satisfying $0 \leq \alpha \leq \alpha' \leq \infty$, the following holds:
    \[ \forall x\in[0,1], \quad \Haa{\alpha}(x) \geq \Haa{\alpha'}(x).\] 
\end{lemma}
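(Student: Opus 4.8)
The plan is to deduce the binary statement from the more basic fact that the R\'enyi entropy of any fixed probability distribution is non-increasing in its order, specialized to the two-point distribution $p=(x,1-x)$, for which $\Haa{\alpha}(x)$ is exactly the order-$\alpha$ R\'enyi entropy of $p$. The object that organizes everything is the auxiliary function
\[
    g(\alpha) \coloneqq \ln\rbra*{x^\alpha + (1-x)^\alpha} = (1-\alpha)\,\Halpha(x),
\]
defined for $x\in(0,1)$; the endpoints $x\in\{0,1\}$ give $\Halpha\equiv 0$ for every order, so there monotonicity holds trivially with equality and may be dispatched first.

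The first real step is to establish that $g$ is \emph{convex} in $\alpha$. Writing $a\coloneqq\ln x$ and $b\coloneqq\ln(1-x)$, we have $g(\alpha)=\ln\rbra*{e^{\alpha a}+e^{\alpha b}}$, a log-sum-exp of affine functions of $\alpha$ and hence convex; equivalently, a direct computation yields $g''(\alpha)=(a-b)^2 e^{\alpha a}e^{\alpha b}/\rbra*{e^{\alpha a}+e^{\alpha b}}^2\ge 0$, which is just the variance of a two-valued random variable and is therefore manifestly non-negative. The second, elementary, ingredient is the normalization $g(1)=\ln\rbra*{x+(1-x)}=0$.

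The crux is then to read off $\Halpha(x)$ as a secant slope of $g$. Because $g(1)=0$,
\[
    \Halpha(x) = \frac{g(\alpha)}{1-\alpha} = -\,\frac{g(\alpha)-g(1)}{\alpha-1},
\]
so $\Halpha(x)$ is precisely the negative of the slope of the chord of $g$ joining abscissa $1$ to abscissa $\alpha$. For a convex function the difference quotient $\alpha\mapsto\rbra*{g(\alpha)-g(1)}/(\alpha-1)$ is non-decreasing, and hence its negative $\Halpha(x)$ is non-increasing in $\alpha$, which is exactly the claimed inequality $\Haa{\alpha}(x)\ge\Haa{\alpha'}(x)$ for $0\le\alpha\le\alpha'<\infty$ (the order $\alpha=0$ being included by taking the chord through $\alpha=0$).

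Finally I would settle the limiting orders $\alpha\in\{1,\infty\}$ by continuity, which is consistent with the same secant picture: as $\alpha\to 1$ the chord slope tends to $g'(1)=x\ln x+(1-x)\ln(1-x)=-\H(x)$, recovering $\Haa{1}(x)=\H(x)$, and as $\alpha\to\infty$ one has $g(\alpha)/(\alpha-1)\to\max\{a,b\}=\ln\max\{x,1-x\}$, recovering $\Haa{\infty}(x)=\Hmin(x)$; monotonicity of the chord slopes passes to these limits. I do not anticipate a genuine obstacle, since this is essentially the standard monotonicity of R\'enyi entropy in the order cast in the binary setting; the only points requiring care are the bookkeeping at the boundary orders $\alpha\in\{1,\infty\}$ and the endpoints $x\in\{0,1\}$, and the correct invocation of the monotonicity of convex difference quotients anchored at the base point $\alpha=1$.
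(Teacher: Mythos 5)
Your proof is correct, but note that the paper itself contains no proof of this lemma to compare against: it is imported as a black box from \cite[Section 5.3]{BS93}, so you have in effect supplied the missing self-contained argument. Your route---setting $g(\alpha)\coloneqq\ln\rbra*{x^{\alpha}+(1-x)^{\alpha}}$, recognizing it as a log-sum-exp of affine functions of $\alpha$ (hence convex, with $g''$ a variance), using the normalization $g(1)=0$ to read $\Halpha(x)=-\rbra*{g(\alpha)-g(1)}/(\alpha-1)$ as the negative of a chord slope anchored at $\alpha=1$, and invoking slope monotonicity of convex functions---is clean and complete; the boundary orders $\alpha\in\{1,\infty\}$ are correctly absorbed as limits of chord slopes ($g'(1)=-\H(x)$ and $g(\alpha)/(\alpha-1)\to\ln\max\{x,1-x\}$), and the limits inherit the monotone ordering. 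The textbook proof that the cited reference follows is different in mechanism: one differentiates in the order and identifies $\frac{\partial}{\partial\alpha}\Halpha(x)=-\mathrm{D}(P_{\alpha}\,\|\,P)/(1-\alpha)^{2}\leq 0$, where $P_{\alpha}$ is the escort distribution $P_{\alpha}(i)\propto p_i^{\alpha}$ and $\mathrm{D}$ is the relative entropy; that argument needs non-negativity of relative entropy and separate care at $\alpha=1$, whereas yours needs only the three-chord inequality and treats all orders uniformly, which is arguably more elementary. One pedantic point: your claim that the endpoints $x\in\{0,1\}$ give $\Halpha\equiv 0$ \emph{for every order} presupposes the support convention $0^{0}=0$; under the convention $0^{0}=1$ one gets $\Haa{0}(x)=\ln 2$ there, not $0$. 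Either way the claimed inequality $\Haa{\alpha}(x)\geq\Haa{\alpha'}(x)$ still holds at those endpoints, so nothing in the lemma is affected.
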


We also require the following folklore lower bound for the binary min-entropy, as presented, for example, in~\cite[Section 2]{DRV12}:

\begin{proposition}[Binary min-entropy lower bound]
    \label{prop:binary-min-entropy-lower-bound}
    The following holds:
    \[ \forall x\in[0,1], \quad \Haa{2}(x) \leq 2 \cdot \Hmin(x). \]
\end{proposition}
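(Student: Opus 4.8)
The plan is to unfold both binary entropies into explicit logarithms and reduce the claimed bound to a one-line algebraic inequality.

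First I would write out both sides. Since $1-\alpha=-1$ at $\alpha=2$, the R\'enyi binary entropy of order $2$ is $\Haa{2}(x) = -\ln\rbra*{x^2+(1-x)^2}$, while by definition $\Hmin(x) = -\ln\rbra*{\max\cbra{x,1-x}}$. For $x\in[0,1]$ both arguments lie in $[1/2,1]$, so both quantities are non-negative, and both are invariant under the reflection $x\mapsto 1-x$.

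Next I would rewrite the target inequality. Multiplying $\Haa{2}(x) \leq 2\Hmin(x)$ through by $-1$ (which flips the direction) and using $2\ln t = \ln t^2$ together with the strict monotonicity of $\ln$, the statement is equivalent to
\[ x^2 + (1-x)^2 \geq \max\cbra{x,1-x}^2. \]
By the reflection symmetry noted above I may assume without loss of generality that $x \geq 1/2$, so that $\max\cbra{x,1-x}=x$; the inequality then collapses to $x^2+(1-x)^2 \geq x^2$, i.e.\ $(1-x)^2\geq 0$, which holds trivially.

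I do not expect any genuine obstacle here: the entire content is that a sum of two squares dominates the larger of the two squares (one simply discards the non-negative term $(1-x)^2$), and the factor of $2$ on the right-hand side is exactly what absorbs the square inside the logarithm, since $2\ln\rbra*{\max\cbra{x,1-x}} = \ln\rbra*{\max\cbra{x,1-x}^2}$. The only steps worth stating carefully are the sign flip coming from $1-\alpha<0$ and the use of the $x\mapsto 1-x$ symmetry to dispose of the $\max$. (Note that \Cref{lemma:Renyi-monotonicity} already yields the complementary bound $\Haa{2}(x)\geq \Haa{\infty}(x)=\Hmin(x)$, so this proposition supplies the matching upper estimate.)
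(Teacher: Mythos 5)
Your proof is correct. There is nothing in the paper to compare it against: the paper states this proposition as a folklore bound, citing~\cite[Section 2]{DRV12}, and supplies no proof of its own. Your argument --- unfolding $\Haa{2}(x) = -\ln\rbra*{x^2+(1-x)^2}$, flipping the inequality because $1-\alpha=-1$, and reducing via monotonicity of $\ln$ to $x^2+(1-x)^2 \geq \max\cbra{x,1-x}^2$, which after the symmetry reduction to $x\geq 1/2$ collapses to $(1-x)^2\geq 0$ --- is a complete, self-contained verification; all logarithm arguments lie in $[1/2,1]$, so every step is legitimate.
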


\subsection{Different notions of quantum entropies for states}

Next, we introduce different notions of quantum entropies for states:

\begin{definition}[Quantum entropies]
    \label{def:quantum-entropies}
    Let $\rho$ be a quantum state. The quantum $q$-Tsallis entropy $\Sq(\rho)$ and the quantum $\alpha$-R{\'e}nyi entropy $\Salpha(\rho)$ of $\rho$ are defined by 
    \[\Sq(\rho) \coloneqq \frac{1-\Tr(\rho^{q})}{q-1} = - \Tr\rbra*{\rho^q \ln_q\rbra*{\rho}} \quad \text{and} \quad \Salpha(\rho) \coloneqq \frac{\ln\Tr\rbra*{\rho^{\alpha}}}{1-\alpha}.\]
    Furthermore, as the order approaches $1$, both the quantum $q$-Tsallis entropy and the quantum $\alpha$-R{\'e}nyi entropy converge to the von Neumann entropy $\S(\rho)$: 
    \[ \Sqq{1}(\rho) \coloneqq \lim_{q\rightarrow 1} \Sq(\rho), \;\; \Saa{1}(\rho) \coloneqq \lim_{\alpha\rightarrow 1} \Salpha(\rho), \;\; \text{and} \;\; \Sqq{1}(\rho) = \Saa{1}(\rho) = \S\rbra{\rho} \coloneqq -\Tr\rbra*{\rho \ln \rbra{\rho}}. \]
    The quantum min entropy also arises as a limiting case of the quantum $\alpha$-R\'enyi entropy as $\alpha$ approaches $\infty$, where $\lambda_{\max}(\rho)$ denotes the largest eigenvalue of $\rho$:
    \[ \Saa{\infty}(\rho) = \Smin(\rho) \coloneqq -\ln\rbra*{\lambda_{\max}(\rho)}, \;\;\text{where}\;\; \Saa{\infty}(\rho) \coloneqq \lim_{\alpha\to\infty} \Salpha(\rho). \]
\end{definition}

We also present the promise problem for estimating quantum Tsallis entropies:

\begin{definition}[Quantum $q$-Tsallis Entropy Approximation, \TsallisQEA{}, adapted from~{\cite[Definition 5.1]{LW25}}]
	\label{def:TsallisQEA}
    Let $Q$ be a quantum circuit acting on $m$ qubits and having $n$ specified output qubits, where $m(n)$ is a polynomial in $n$. Let $\rho$ be the quantum state obtained by running $Q$ on $\ket{0}^{\otimes m}$ and tracing out the non-output qubits. Let $g(n)$ and $t(n)$ be nonnegative, efficiently computable functions. The promise problem $\TsallisQEA[t(n),g(n)]$ asks whether the following holds:
    \begin{itemize}
	   \item \emph{Yes:} The quantum circuit $Q$ satisfies that $\Sq(\rho) \geq t(n) + g(n)$;
	   \item \emph{No:} The quantum circuit $Q$ satisfies that $\Sq(\rho) \leq t(n) - g(n)$.
    \end{itemize}
\end{definition}

\subsection{Computational hardness of estimating the pure-state infidelity}

We start by defining a promise problem closely related to \textsc{Fidelity-Pure-Pure}, introduced in~\cite[Problem 1]{RASW23}: 

\begin{definition}[Pure-State Infidelity Estimation, \PureInfidelity{}]
	\label{def:PureStateInfidelity}
    Let $Q_0$ and $Q_1$ be quantum circuits acting on $m$ qubits with $n$ specified output qubits, where $m(n)$ is a polynomial in $n$. Let $\ket{\psi_0}$ and $\ket{\psi_1}$ be pure quantum states obtained by running $Q_0$ and $Q_1$ on $\ket{0}^{\otimes m}$, respectively. Let $a(n)$ and $b(n)$ be nonnegative efficiently computable functions. The promise problem $\PureInfidelity[a(n),b(n)]$ asks whether the following holds:
    \begin{itemize}
	   \item \emph{Yes:} The pair of quantum circuits $(Q_0,Q_1)$ satisfies $1-\abs{\innerprod{\psi_0}{\psi_1}}^2 \geq a(n)$;
	   \item \emph{No:} The pair of quantum circuits $(Q_0,Q_1)$ satisfies $1-\abs{\innerprod{\psi_0}{\psi_1}}^2 \leq b(n)$. 
    \end{itemize}
\end{definition}

The promise problem \PureInfidelity{}, essentially the task of estimating the pure-state infidelity, $1-\abs{\innerprod{\psi_0}{\psi_1}}^2$, to within \textit{constant} precision, is \BQP{}-hard: 

\begin{lemma}[\PureInfidelity{} is \BQP{}-hard, adapted from~{\cite[Theorem 12]{RASW23}}]
    \label{lemma:PureInfidelity-BQPhard}
    For every integer $n\geq 2$, it holds that: 
    \[\PureInfidelity\sbra*{1-2^{-2n},1-(1-2^{-n})^2} \text{ is } \BQP{}\text{-hard}.\]
\end{lemma}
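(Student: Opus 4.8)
The plan is to give a direct Karp reduction from an arbitrary promise problem in \BQP{} to $\PureInfidelity[(1-2^{-n})^2,2^{-2n}]$, encoding the acceptance probability of a \BQP{} verifier into the infidelity of two unitarily preparable pure states. First I would invoke standard error reduction: given a language $L\in\BQP$ and an input $x$, there is a polynomial-time uniform quantum circuit $V=V_x$ acting on $m=\poly(\abs{x})$ qubits with one designated output qubit whose acceptance probability $p\coloneqq\norm{(\bra{1}\otimes I)V\ket{0}^{\otimes m}}^2$ can be pushed, by repetition and majority vote, to satisfy $p\geq 1-\varepsilon$ on yes-instances and $p\leq \varepsilon$ on no-instances for any inverse-exponential $\varepsilon$; I will fix $\varepsilon=2^{-2n-1}$.

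Next I would build two circuits $Q_0,Q_1$ on $m+1$ qubits, where the extra ancilla $a$ is initialized to $\ket{0}$. Here $Q_0$ merely runs $V$ on the first $m$ qubits, producing $\ket{\psi_0}=(V\ket{0}^{\otimes m})\otimes\ket{0}_a$, while $Q_1$ runs $V$ and then applies a $\CNOT$ controlled on the output qubit and targeting $a$. Declaring all $m+1$ qubits to be output qubits keeps both states pure (so I set $n\coloneqq m+1$, padding with extra $\ket{0}$ output qubits to realize any larger target value of $n$). The key computation is the overlap: writing $V\ket{0}^{\otimes m}=\sqrt{p}\,\ket{1}\ket{\phi_1}+\sqrt{1-p}\,\ket{0}\ket{\phi_0}$ with the output qubit displayed first, the $\CNOT$ flips $a$ only on the accepting branch, so the two accepting branches are rendered orthogonal through the ancilla while the two rejecting branches coincide, giving
\[
\innerprod{\psi_0}{\psi_1}=1-p, \qquad 1-\abs{\innerprod{\psi_0}{\psi_1}}^2 = 1-(1-p)^2 = p(2-p),
\]
a quantity monotone increasing in $p$.

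Finally I would verify that the promise is respected. On yes-instances, $p\geq 1-2^{-2n-1}$ yields infidelity $\geq 1-2^{-2(2n+1)}\geq (1-2^{-n})^2$ for $n\geq 2$; on no-instances, $p\leq 2^{-2n-1}$ yields infidelity $\leq 2p\leq 2^{-2n}$. Since $Q_0,Q_1$ are produced in polynomial time and yes/no instances map to yes/no instances, this establishes \BQP{}-hardness under Karp reduction.

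The hard part will be engineering a \emph{unitarily preparable} pair whose overlap equals the rejection probability exactly: the naive idea of comparing $V\ket{0}^{\otimes m}$ against its post-measurement rejecting branch is not unitary, and it is precisely the $\CNOT$-entangling gadget that circumvents this. The remaining care lies in calibrating $\varepsilon$ to absorb the factor-of-two slack, so that the monotone quantity $p(2-p)$ lands inside the asymmetric window dictated by the thresholds $(1-2^{-n})^2$ and $2^{-2n}$, together with the bookkeeping that all qubits are designated as output qubits (to keep the states pure) and that padding lets the output-register size match the prescribed value of $n$.
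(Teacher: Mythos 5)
Your proof is correct, and at its core it uses the same construction as the paper: amplify the \BQP{} verifier, adjoin a fresh flag qubit, and apply a \CNOT{} controlled on the output qubit so that the overlap of two efficiently preparable pure states equals the rejection probability, $\innerprod{\psi_0}{\psi_1} = 1-p$. Omitting the final uncomputation $(C'_x)^{\dagger}$ that the paper applies is immaterial, since your pair of states differs from the paper's pair by the common unitary $(C'_x)^{\dagger} \otimes I$, which leaves the overlap unchanged.

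The substantive difference is your calibration $\varepsilon = 2^{-2n-1}$, and it is worth stressing that this is not optional bookkeeping: it is exactly what makes the stated thresholds come out, and it is the point where your argument is more careful than the paper's own proof. From the (correct) identity $\innerprod{\psi_0}{\psi_1} = 1-p$, the infidelity is $1-(1-p)^2 = p(2-p) \leq 2p$ on no-instances, so reaching the target $2^{-2n}$ forces $\varepsilon \leq 2^{-2n-1}$, precisely your choice. The paper instead amplifies only to error $2^{-n'-1} = 2^{-n}$ and asserts $\abs{\innerprod{\psi_0}{\psi_1}}^2 = 1 - \Pr[C'_x\text{ accepts}]^2$; but the quantity computed in its own display is $\rbra*{\Pr[\text{output qubit of } C'_x \text{ is } 0]}^2 = \rbra*{1-\Pr[C'_x\text{ accepts}]}^2$ under the standard convention that accepting means the output qubit is $1$. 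With that correction, the paper's weaker amplification only guarantees infidelity at most $2^{-n+1} - 2^{-2n}$ on no-instances, which misses the claimed bound $2^{-2n}$; your quadratically stronger amplification absorbs exactly this doubling-versus-squaring slack, so your write-up establishes the lemma as stated while the paper's derivation, read literally, does not. One caveat you share with the paper: the parameter $n$ appears both as the output length and in the thresholds, yet purity forces all $m+1$ qubits to be outputs, so $n$ must really be understood as a parameter the reduction chooses to grow with the input (your padding remark realizes larger $n$, but not small constant $n$); this looseness is inherited from the paper's definitions and does not distinguish your proof from theirs.
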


\begin{proof}
    Our proof strategy closely follows the construction in~\cite[Theorem 12]{RASW23}. For any promise problem $(\calP_{yes},\calP_{no})\in \BQP[a(\hat{n}),b(\hat{n})]$ with $a(\hat{n})-b(\hat{n}) \geq 1/\poly(\hat{n})$, we can construct a \BQP{} circuit $C'_x$ of output length $n'$, using error reduction for \BQP{} via sequential repetition, such that $\Pr[C'_x \text{ accepts}] \geq 1-2^{-n'-1}$ for \textit{yes} instances, whereas $\Pr[C'_x \text{ accepts}] \leq 2^{-n'-1}$ for \textit{no} instances. 

    We now construct a new quantum circuit $C_x$ of output length $n=n'+1$, where the additional qubit is denoted as the register $\sfF$, initialized to $\ket{0}$. Specifically, we consider $C_x \coloneqq (C'_x)^{\dagger} \CNOT_{\sfO\rightarrow \sfF} C'_x$, where the output qubit is denoted by the register $\sfO$. Moreover, the resulting circuit $C_x$ accepts if the measurement outcomes of all qubits are zero. Noting that $\CNOT_{\sfO \rightarrow \sfF} = \ketbra{0}{0}_{\sfO}\otimes I_{\sfF} + \ketbra{1}{1}_{\sfO} \otimes X_{\sfF}$, it holds that
    \begin{subequations}
    \label{eq:PureFidelity-BQPhard-pacc}
        \begin{align}
        \Pr[C_x \text{ accepts}] &= \big\| (\ketbra{\bar{0}}{\bar{0}} \otimes \ketbra{0}{0}_{\sfF}) C_x (\ket{\bar{0}} \otimes \ket{0}_{\sfF})\big\|_2^2\\ 
        &= \big\| (\bra{\bar{0}} \otimes \bra{0}_{\sfF}) C_x (\ket{\bar{0}} \otimes \ket{0}_{\sfF})\big\|_2^2 \coloneqq |\innerprod{\psi_0}{\psi_1}|^2\\
        &= \big| \bra{\bar{0}} (C'_x)^{\dagger} \ketbra{0}{0}_{\sfO} C'_x \ket{\bar{0}} \big|^2\\
        &= \rbra*{1 - \mathrm{Pr}[ C'_x \text{ accepts} ]}^2.
        \end{align}
    \end{subequations}
    Here, the two pure states in the second line are defined as $\ket{\psi_0} \coloneqq \ket{\bar{0}}\otimes \ket{0}_{\sfF}$ and $\ket{\psi_1} \coloneqq  C_x (\ket{\bar{0}}\otimes \ket{0}_{\sfF})$, and are prepared by the quantum circuits $Q_0 = I$ and $Q_1=C_x$, respectively. Consequently, \Cref{eq:PureFidelity-BQPhard-pacc} gives rise to the desired threshold parameters:
    \begin{itemize}
        \item For \emph{yes} instances, $1-|\innerprod{\psi_0}{\psi_1}|^2 = 1 - \rbra*{1 - \mathrm{Pr}[ C'_x \text{ accepts} ]}^2 \geq 1-2^{-2n}$. 
        \item For \emph{no} instances, $1-|\innerprod{\psi_0}{\psi_1}|^2 = 1 - \rbra*{1 - \mathrm{Pr}[ C'_x \text{ accepts} ]}^2 \leq 1-(1-2^{-n})^2$. \qedhere
    \end{itemize}
\end{proof}

It is worth mentioning that, subsequent to~\cite{RASW23}, constructions similar to \Cref{lemma:PureInfidelity-BQPhard} were used to establish hardness for closeness testing problems with respect to other closeness measures between \textit{pure} states, such as the (squared) Hilbert--Schmidt distance~\cite[Lemma 4.23]{LGLW23}, the trace distance~\cite[Theorem 4.1]{WZ24} and~\cite[Lemma 2.17]{LW25}.

\subsection{Useful identities from infinite series}

Following \cite[Section 25]{Knopp90}, we define the \textit{generalized binomial coefficients}, which are denoted by $\binom{a}{k}$, for any real $\alpha$ and non-negative integer $k$:
\begin{equation}
    \label{eq:generalizd-binomial-coeffs}
    \binom{a}{0} \coloneqq 1 \quad \text{and} \quad \binom{a}{k} \coloneqq \frac{ a(a-1) \cdots (a-k+1)}{1 \cdot 2 \cdot \cdots \cdot k} \text{ for } k\in\bbN_+.
\end{equation}

Moreover, we make use of the following properties of the generalized binomial series: 
\begin{proposition}[Identities for generalized binomial coefficients]
    \label{prop:generalized-binomial-coeffs}
    The following holds:
    \begin{enumerate}[label={\upshape(\arabic*)}, topsep=0.33em, itemsep=0.33em, parsep=0.33em]
        \item $\displaystyle\forall a \in \bbR_+, \quad (1+x)^a + (1-x)^a = 2 \sum_{k=0}^{\infty} \binom{a}{2k} x^{2k} \; \text{when} \; \abs{x} \leq 1$. \label{thmitem:generalizd-binomial-identity}
        \item For every real $a>1$, $\sum_{k=1}^{\infty} \binom{a}{2k} k = 2^{a-3} a$. \label{thmitem:generalizd-binomial-sum}
    \end{enumerate}
\end{proposition}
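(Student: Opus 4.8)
The plan is to derive both identities from the generalized binomial theorem, which (following Knopp, Section 25) asserts that for every real $a$ and every $\abs{x}<1$ the series $(1+x)^a = \sum_{k=0}^{\infty} \binom{a}{k} x^k$ converges absolutely. This is the one nontrivial analytic input; everything else is bookkeeping on power series.

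For Part~(1) I would substitute $x \mapsto -x$ to obtain $(1-x)^a = \sum_{k=0}^{\infty} \binom{a}{k}(-1)^k x^k$, which is again absolutely convergent for $\abs{x}<1$. Adding the two series term by term (justified by absolute convergence) places the factor $1+(-1)^k$ in front of $\binom{a}{k} x^k$; this vanishes for odd $k$ and equals $2$ for even $k$, so after reindexing $k=2j$ only the even-degree terms survive, giving $(1+x)^a + (1-x)^a = 2\sum_{j=0}^{\infty}\binom{a}{2j} x^{2j}$ as claimed.

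For Part~(2) I would work with the even part $g(x) \coloneqq \sum_{k=0}^{\infty}\binom{a}{2k} x^{2k} = \tfrac12\big((1+x)^a+(1-x)^a\big)$ just established. Differentiating term by term inside the radius of convergence and multiplying by $x$ produces the desired factor of $k$, namely $x\,g'(x) = 2\sum_{k=1}^{\infty}\binom{a}{2k}\,k\,x^{2k}$, so the target sum is $\tfrac12 g'(x)$ evaluated at $x=1$. Computing $g'$ in closed form gives $g'(x)=\tfrac{a}{2}\big((1+x)^{a-1}-(1-x)^{a-1}\big)$, and letting $x\to 1^-$ yields $g'(1)=\tfrac{a}{2}\,2^{a-1}=a\,2^{a-2}$, whence $\sum_{k=1}^{\infty}\binom{a}{2k}\,k = \tfrac12 g'(1) = 2^{a-3}a$.

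The main obstacle is the evaluation at the boundary point $x=1$, which lies exactly on the circle of convergence, so neither term-by-term differentiation up to $x=1$ nor passage to the limit is automatic. I would justify it via Abel's theorem, which requires that the series $\sum_{k\ge 1}\binom{a}{2k}\,k$ converge on its own. This follows from the standard asymptotic $\binom{a}{2k}\sim (2k)^{-a-1}/\Gamma(-a)$, so that the summands behave like $k^{-a}/(2^{a+1}\Gamma(-a))$ and converge absolutely precisely in the regime $a>1$ relevant to the applications; the same condition $a>1$ is exactly what forces $(1-x)^{a-1}\to 0$ as $x\to 1^-$, which is the step that kills the boundary term. (Some such restriction is genuinely needed, since for $a=1$ the binomial expansion terminates and the two sides no longer agree.)
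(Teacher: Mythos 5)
Your proof is correct and follows essentially the same route as the paper's: part (1) is obtained as the even part of the generalized binomial series (the paper simply cites Knopp's Equation (119) for exactly this identity), and part (2) is obtained by differentiating part (1) term by term and letting $x \to 1^-$, which is precisely the paper's argument. The one substantive difference is your care at the boundary, and it is worth keeping: the paper takes the limit $x \to 1$ without comment, whereas you justify it via Abel's theorem and observe that this forces the restriction $a > 1$, both for the convergence of $\sum_{k} \binom{a}{2k} k$ (whose terms behave like a constant multiple of $k^{-a}$ for non-integer $a$) and for the vanishing of the boundary term $(1-x)^{a-1}$. You are right that some such restriction is genuinely needed and that the statement as written, ``for all $a \in \bbR$,'' is too strong: at $a=1$ the left-hand side of (2) equals $0$ while the right-hand side equals $1/4$, and for non-integer $a<1$ the series diverges outright. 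The paper's proof silently assumes $(1-x)^{a-1} \to 0$, i.e.\ $a>1$; fortunately the proposition is only ever invoked with $q \ge 2$ (in the proof of \Cref{lemma:QJT2-vs-QJTq-Qeq2}), so nothing downstream is affected. A minor additional point in your favor: your bookkeeping of constants is correct, whereas the paper's intermediate display (\Cref{eq:generalized-binomial-coeffs-tmp}) is off by a factor of $2$ --- differentiating $2\sum_{k}\binom{a}{2k}x^{2k}$ yields $4\sum_{k}\binom{a}{2k}k\,x^{2k-1}$, not $2\sum_{k}\binom{a}{2k}k\,x^{2k-1}$ --- although the paper's final constant $2^{a-3}a$ is nevertheless correct.
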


\begin{proof}
    \Cref{thmitem:generalizd-binomial-identity} follows directly from the identity given in~\cite[Equation (119)]{Knopp90} when $|x|<1$. For $|x|=1$, it suffices to consider only $x=1$, since both sides are even functions of $x$. The left-hand side tends to $2^{\alpha}$ as $x\to 1^{-}$. On the other hand, applying Abel's theorem (cf.~\cite[Section 3.71]{WW21}) to the right-hand side yields:
    \[\lim_{x\to 1^-} 2 \sum_{k=0}^{\infty} \binom{a}{2k} x^{2k} = 2 \sum_{k=0}^{\infty} \binom{a}{2k} = 2^{\alpha},\]
    which proves the desired identity.
    To establish \Cref{thmitem:generalizd-binomial-sum}, we differentiate both sides of \Cref{thmitem:generalizd-binomial-identity} with respect to $x$, yielding 
    \begin{equation}
        \label{eq:generalized-binomial-coeffs-tmp}
        a (1+x)^{a-1} - a (1-x)^{a-1} = 4 \sum_{k=1}^{\infty} \binom{a}{2k} k x^{2k-1}.
    \end{equation}
    Taking the limit as $x\rightarrow 1$ on both sides of \Cref{eq:generalized-binomial-coeffs-tmp} for $a>1$, we obtain  \Cref{thmitem:generalizd-binomial-sum}. 
\end{proof}

\begin{proposition}[Sign conditions for generalized binomial coefficients]
    \label{prop:sign-conds-binomial-coeffs}
    For any real number $a>0$ and integer $k \geq 1$, the following holds:
    \begin{enumerate}[label={\upshape(\arabic*)}]
        \item If $a\in\bbN$ and $2k>a$, then $\binom{a}{2k}=0$. 
        \item Otherwise, $\sgn\binom{a}{2k} = (-1)^{\max\cbra*{0,2k-\ceil*{a}}}$. Equivalently, in the nonzero case, $\binom{a}{2k}>0$ if and only if $\max\cbra{0, 2k - \ceil{a}}$ is even, and $\binom{a}{2k}<0$ if and only if $\max\cbra{0, 2k - \ceil{a}}$ is odd. 
    \end{enumerate}
\end{proposition}

\begin{proof}
    Noting that $\binom{a}{2k} \cdot (2k)! = \prod_{j=0}^{2k-1} (a-j)$, the right-hand side is zero if the factor with $j=a$ appears in the product. Hence, if $a\in\bbN$ and $2k>a$, then $\binom{a}{2k}=0$. Otherwise, the sign of $\binom{a}{2k}$ is fully determined by the parity of the number of integers $j \in \{0,1,2,\cdots,2k-1\}$ satisfying $a-j<0$. It is evident that this count is zero when $a \geq 2k$ and equals $2k - \ceil{a}$ when $a < 2k$, which completes the proof. 
\end{proof}

We also require the following identity for power series, as stated in~\cite[Footnote 13]{Knopp90}:
\begin{equation}
    \label{eq:power-series-identity}
    \forall r \in \bbN_+, \quad 1-x^r = (1-x) \sum_{j=0}^{r-1} x^j.
\end{equation}


\section{New bounds for R\'enyi and Tsallis binary entropies}
\label{sec-new-binary-entropies-bounds}

In this section, we present new bounds for the $\alpha$-R\'enyi and $q$-Tsallis binary entropies: 

\begin{theorem}[New bounds for $\alpha$-R\'enyi binary entropy]
    \label{thm:Renyi-binary-entropy-bounds}
    For all $x\in[0,1]$, the following bounds with respect to the $2$-R\'enyi binary entropy hold:
    \begin{enumerate}[label={\upshape(\arabic*)}]
        \item For every $\alpha\in(0,2]$, $\Halpha(x) \leq \ln(2)^{1-\frac{\alpha}{2}} \cdot \Haa{2}(x)^{\frac{\alpha}{2}}$. 
        \item For every $\alpha \in [2,\infty]$, $\frac{\alpha}{2(\alpha-1)} \cdot \Haa{2}(x) \leq \Halpha(x)$. 
    \end{enumerate}
\end{theorem}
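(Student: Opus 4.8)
The plan is to handle the two bounds separately, reducing each to an elementary inequality about the weight vector $(x,1-x)$. Throughout I abbreviate $S_\alpha \coloneqq x^\alpha+(1-x)^\alpha$, so that $\Halpha(x)=\frac{\ln S_\alpha}{1-\alpha}$ and $\Haa{2}(x)=-\ln S_2$, and by the symmetry $x\mapsto 1-x$ I may assume $x\in[0,1/2]$.

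I would dispatch the lower bound (2) first, as it is the easier one. For finite $\alpha>1$, multiplying the claimed inequality $\frac{\alpha}{2(\alpha-1)}\Haa{2}(x)\le\Halpha(x)$ by $(\alpha-1)>0$ and substituting the definitions cancels the common factor and leaves the purely algebraic statement
\[
S_\alpha\le S_2^{\alpha/2},\qquad\text{that is,}\qquad x^\alpha+(1-x)^\alpha\le\bigl(x^2+(1-x)^2\bigr)^{\alpha/2}.
\]
Writing $a=x^2$, $b=(1-x)^2$, and $\beta=\alpha/2\ge1$, this is the super-additivity $a^\beta+b^\beta\le(a+b)^\beta$, which follows immediately from $\bigl(\tfrac{a}{a+b}\bigr)^\beta+\bigl(\tfrac{b}{a+b}\bigr)^\beta\le\tfrac{a}{a+b}+\tfrac{b}{a+b}=1$ using $t^\beta\le t$ for $t\in[0,1]$ (equivalently, the monotonicity of $\ell_p$-norms in $p$). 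The boundary case $\alpha=\infty$ is exactly the reformulation $\tfrac12\Haa{2}(x)\le\Hmin(x)$ of \Cref{prop:binary-min-entropy-lower-bound}, so it can be cited directly and is also recovered in the limit $\alpha\to\infty$ above.

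For the upper bound (1) I would first recast it in a scale-free form. Since $\Halpha(x),\Haa{2}(x)\in[0,\ln2]$, raising to the power $2/\alpha>0$ shows the claim is equivalent to $\bigl(\tfrac{\Halpha(x)}{\ln2}\bigr)^{2/\alpha}\le\tfrac{\Haa{2}(x)}{\ln2}$, with the order-$2$ value playing the role of the reference. My plan is then to change variables to $v\coloneqq\Haa{2}(x)\in[0,\ln2]$, a strictly monotone bijection on $x\in[0,1/2]$ (concretely $x=\tfrac12\bigl(1-\sqrt{2e^{-v}-1}\bigr)$), and to view $\Halpha$ as a function $\eta(v)$ of $v$. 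The target becomes
\[
M(v)\coloneqq\eta(v)^{2/\alpha}\le\ell(v)\coloneqq\ln(2)^{\,2/\alpha-1}\,v\qquad\text{on }[0,\ln2],
\]
where $\ell$ is precisely the line through the two endpoint values, since $M(0)=0$ and $M(\ln2)=(\ln2)^{2/\alpha}$. Hence it suffices to prove that $M$ is \emph{convex} on $[0,\ln2]$: a convex function lies below the chord joining its endpoint values, and that chord is exactly $\ell$.

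Establishing this convexity is the main obstacle, and it is where the restriction $\alpha\le2$ enters: for $\alpha>2$ the analogue of $M$ is concave, which is why the lower bound (2) rather than (1) holds in that range, and why the full monotonicity statement in \Cref{subsec:proof-techniques}(b) remains open. I would prove convexity by computing $\tfrac{d^2M}{dv^2}$ through the chain rule after the substitution $w=\sqrt{2e^{-v}-1}\in[0,1]$, using $S_\alpha=2^{1-\alpha}g_\alpha(w)$ with $g_\alpha(w)=\tfrac12\bigl((1+w)^\alpha+(1-w)^\alpha\bigr)=\sum_{k\ge0}\binom{\alpha}{2k}w^{2k}$ from \Cref{prop:generalized-binomial-coeffs}, so that $\eta(v)=\ln2-\frac{\ln g_\alpha(w)}{\alpha-1}$ (with $\alpha=1$ handled by continuity). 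The sign of $M''$ then reduces to a one-variable inequality in $w$ for each fixed $\alpha\in(0,2]$, which I expect to control using the sign pattern of the coefficients $\binom{\alpha}{2k}$ supplied by \Cref{prop:sign-conds-binomial-coeffs}. The delicate point is that $M$ and $\ell$ are not merely equal but \emph{tangent} at the endpoint $v=\ln2$ (that is, $x=1/2$), so the convexity has essentially no slack there; a crude bound will not suffice, and the estimate must be carried out sharply near $w=0$. Finally, I would note that the monotonicity of the R\'enyi binary entropy in the order (\Cref{lemma:Renyi-monotonicity}) supplies the complementary bound $\Haa{2}(x)\le\Halpha(x)$ for $\alpha\le2$, so that (1) pins $\Halpha$ between $\Haa{2}$ and $\ln(2)^{1-\alpha/2}\Haa{2}^{\alpha/2}$, as recorded in \Cref{table:Rank2RenyiQEA-hardness}.
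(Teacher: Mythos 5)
Your treatment of part (2) is correct and genuinely more elementary than the paper's. Multiplying by $\alpha-1>0$ turns the claim into $S_\alpha \le S_2^{\alpha/2}$, i.e.\ the superadditivity $a^{\beta}+b^{\beta}\le(a+b)^{\beta}$ with $a=x^2$, $b=(1-x)^2$, $\beta=\alpha/2\ge 1$, which your normalization argument settles in one line; the endpoint $\alpha=\infty$ is exactly \Cref{prop:binary-min-entropy-lower-bound} (or a pointwise limit). The paper proves the same inequality (\Cref{lemma:QJR2-vs-QJRalpha-Age2}) by differentiating $F(x;\alpha)=\frac{\alpha}{2}\rbra*{\ln 2+\ln(1+x^2)}-\ln\rbra*{(1-x)^\alpha+(1+x)^\alpha}$ in $x$ and running a monotonicity-in-$\alpha$ argument on the resulting factor; your route obtains the identical statement with no calculus at all, essentially as the monotonicity of $\ell_p$-norms.

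Part (1), however, is not proved, and the plan you outline cannot be completed: the convexity you need is false, and your own tangency observation is what kills it. You correctly reduce the claim to $M(v)\coloneqq\eta(v)^{2/\alpha}\le\ell(v)\coloneqq\ln(2)^{2/\alpha-1}v$ on $[0,\ln 2]$, and you correctly note that $M$ and $\ell$ are tangent at $v=\ln 2$: from $\Halpha\rbra*{\tfrac12+t}=\ln 2-2\alpha t^2+O(t^4)$ and $\Haa{2}\rbra*{\tfrac12+t}=\ln 2-4t^2+O(t^4)$ one gets $\eta'(\ln 2)=\alpha/2$ and hence $M'(\ln 2)=\ln(2)^{2/\alpha-1}=\ell'(\ln 2)$. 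But then $\ell$ is the \emph{tangent line} of $M$ at the right endpoint, and a convex function lies \emph{above} each of its tangent lines (for convex $M$ on $[0,b]$, $M(v)\ge M(b)+M'(b^{-})(v-b)$ for all $v\in[0,b]$). So convexity of $M$ would force $M\ge\ell$ on all of $[0,\ln 2]$ --- the reverse of the target. This is not merely a tension: for every $\alpha\in(0,2)$ one has $M(v)=o(v)$ as $v\to 0^{+}$ (e.g.\ $\eta(v)\approx\frac{\alpha}{\alpha-1}\cdot\frac{v}{2}$ for $\alpha\in(1,2)$, so $M(v)\asymp v^{2/\alpha}$; similarly $M(v)\asymp v^{2}$ for $\alpha\in(0,1)$), hence $M<\ell$ near the origin, contradicting $M\ge\ell$. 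Therefore $M$ is simply not convex. A numerical check at $\alpha=1$ confirms this directly: $x=0.1,\,0.25,\,0.5$ give $(v,M)\approx(0.1985,0.1057)$, $(0.4700,0.3162)$, $(0.6931,0.4805)$, and the middle value $0.3162$ lies \emph{above} the chord value $\approx 0.3114$ of the outer two. Consequently no ``sharp estimate near $w=0$'' and no sign analysis of the coefficients $\binom{\alpha}{2k}$ can rescue the argument; the failure is structural, not technical.

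This is precisely why the paper's proof of part (1) (\Cref{thm:Renyi-binary-entropy-upper-bound-0leAle2} via \Cref{lemma:QJR2-vs-QJRalpha-0leAle2}) has a different shape: instead of convexity in the entropy variable, it studies the ratio $\Halpha/\Haa{2}^{\alpha/2}$ (in the overlap variable $\abs{\innerprod{\psi_0}{\psi_1}}$) and shows, through a cascade of derivative sign analyses whose auxiliary facts are proved in \Cref{sec:omitted-proofs-new-binary-entropies-bounds}, that this ratio is maximized exactly at the tangency point $x=1/2$ (overlap $0$), which is what produces the constant $\ln(2)^{1-\alpha/2}$. Your part (2) can be kept as a simplification, but part (1) needs a replacement argument of this ratio-extremization kind.
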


\begin{theorem}[New bounds for $q$-Tsallis binary entropy]
    \label{thm:Tsallis-binary-entropy-bounds}
    For all $x\in[0,1]$, the following bounds with respect to the $2$-Tsallis binary entropy hold: 
    \begin{enumerate}[label={\upshape(\arabic*)}]
        \item For every $q\in (0,2]$, $\Hq(x) \leq 2^{\frac{q}{2}} \Hq\rbra*{\frac{1}{2}} \cdot \rbra*{\Hqq{2}(x)}^{\frac{q}{2}}$. 
        \item For every $q\in[2,3]$, $\frac{q}{2(q-1)} \cdot \Hqq{2}(x) \leq \Hq(x) \leq 2\Hq\rbra*{\frac{1}{2}} \cdot \Hqq{2}(x)$. 
        \item For every $q \in [3,\infty)$, $2 \cdot \Hq\rbra*{\frac{1}{2}} \cdot \Hqq{2}(x) \leq \Hq(x) \leq \frac{q}{2(q-1)} \cdot \Hqq{2}(x)$. 
    \end{enumerate}
\end{theorem}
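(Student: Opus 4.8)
The plan is to exploit the symmetry $\Hq(x)=\Hq(1-x)$ and $\Hqq{2}(x)=\Hqq{2}(1-x)$ to restrict to $x\in[0,1/2]$, set $t\coloneqq 1-2x\in[0,1]$, and record two facts that drive everything. First, $\Hqq{2}(x)=\tfrac12(1-t^2)$. Second, writing $a=1-t,\ b=1+t$ and expanding via \Cref{prop:generalized-binomial-coeffs}, one gets $\Hq(x)=\Hq(1/2)-\frac{2^{1-q}}{q-1}\sum_{k\ge1}\binom{q}{2k}t^{2k}$. From this I read off the two extreme values of the ratio $R(x)\coloneqq\Hq(x)/\Hqq{2}(x)$, namely $R(1/2)=2\Hq(1/2)$ and, using the summation identity $\sum_{k\ge1}k\binom{q}{2k}=2^{q-3}q$ of \Cref{prop:generalized-binomial-coeffs} (equivalently the limit $x\to0$), $\lim_{x\to0}R(x)=\frac{q}{2(q-1)}$. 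The linear bounds (2) and (3) assert exactly that $R(x)$ lies between these endpoint values, so they are monotonicity statements, whereas (1) is a genuinely different power-type bound.

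For the \emph{linear} bounds (items (2) and (3)) I would study $\phi(x)\coloneqq\Hq(x)-c\,\Hqq{2}(x)$ on $[0,1/2]$, with $c$ equal to one of the two endpoint constants. A direct computation gives $\phi''(x)=4c-q\,h(x)$ where $h(x)\coloneqq x^{q-2}+(1-x)^{q-2}$, and $h$ is monotone on $[0,1/2]$ (increasing for $q\in(2,3)$, decreasing for $q>3$), so $\phi''$ changes sign at most once. The point is that the constants are tuned so that $\phi$ has a double zero: when $c=\frac{q}{2(q-1)}$ one checks $\phi(0)=\phi'(0)=0$, and when $c=2\Hq(1/2)$ one checks $\phi(0)=\phi(1/2)=0$; in every case $\phi'(1/2)=0$. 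Combining the single sign change of $\phi''$ with these boundary data forces $\phi'$ to change sign at most once with known endpoint signs, which pins the sign of $\phi$ and yields each inequality. The lower bound in (3) is already \Cref{lemma:Tsallis-binary-entropy-lower-bound}, so only the upper bound of (3) and both bounds of (2) need this analysis; $q=3$ is the clean meeting point where $\Hqq{3}(x)=\frac34\Hqq{2}(x)$ identically. This step reduces to a few one-variable inequalities such as $(q-1)2^{3-q}\gtrless 2$ and $q-4+2^{3-q}\gtrless 0$, each settled by evaluating at $q=2,3$ and checking the sign of a single derivative.

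For the \emph{power} bound (item (1), $q\in(0,2]$) I would first note that $\Hqq{2}(x)=\tfrac12(1-t^2)$ makes it equivalent to $\Hq(x)\le\Hq(1/2)\,(1-t^2)^{q/2}$. Clearing the factor $q-1$ (which flips the inequality when $q<1$) and setting $u=(1+t)^{q/2},\ v=(1-t)^{q/2}$ shows this is equivalent to $2^{-q}(u-v)^2+uv\gtrless 1$; substituting $z=\sqrt{(1+t)/(1-t)}\ge1$ turns the left-hand side into $(z^q+z^{-q}+2^q-2)/(z+z^{-1})^q$, so the whole claim reduces to
\[
Q(q)\coloneqq z^q+z^{-q}+2^q-(z+z^{-1})^q-2\ \ge\ 0\ \ (q\in(1,2)),\qquad \le\ 0\ \ (q\in(0,1)).
\]
Writing $z=e^\theta$ makes $Q$ an exponential sum in $q$, and a direct calculation using $\cosh 2\theta=2\cosh^2\theta-1$ shows $Q$ vanishes at $q=0,1,2$ for \emph{every} $\theta$. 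The crux is that these are the \emph{only} zeros: with $L\coloneqq\ln(2\cosh\theta)>\max\{\theta,\ln2\}$, the function $Q''(q)e^{-Lq}$ is a constant plus three strictly decreasing exponentials, hence has at most one zero, so $Q''$ has at most one zero and by Rolle's theorem $Q$ has at most three. Therefore $Q$ has constant sign on $(0,1)$ and on $(1,2)$ for each $\theta>0$; one evaluation together with a connectedness argument in $\theta$ fixes these signs, giving (1) for $q\neq1$, with $q=1$ being the known Shannon bound \Cref{eq:binary-entropy-bounds-via-TsallisTwo} and $q=2$ an equality.

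I expect the power bound (1) to be the main obstacle. Unlike the linear bounds, the relevant single-variable function $t\mapsto\Hq(x)/(1-t^2)^{q/2}$ is neither convex nor concave (it is identically $\Hq(1/2)$ at $q=1,2$ but bulges strictly in between), so no elementary second-derivative argument in $t$ applies, and the cancellations in its $t^2$-power series preclude a term-by-term analysis via \Cref{prop:sign-conds-binomial-coeffs}. Recognizing the hidden exponential-sum structure in the \emph{order} variable $q$, with the three universal zeros $q\in\{0,1,2\}$, and controlling the zero count through the monotonicity of $Q''(q)e^{-Lq}$, is what makes the argument go through; by contrast the linear bounds, though requiring careful bookkeeping of sign changes, are routine once the double-zero structure at the tuned constants is observed.
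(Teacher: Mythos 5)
Your proposal is correct, but it runs on genuinely different engines from the paper's in both halves, so the comparison is worth recording. For the linear bounds (items (2)--(3)), the paper (\Cref{lemma:QJT2-vs-QJTq-Qeq2}) expands the ratio $\Hq(x)/\Hqq{2}(x)$ as a power series in the overlap variable via generalized binomial coefficients and reads off its monotonicity from the signs of $\binom{q}{2k}$ (\Cref{prop:sign-conds-binomial-coeffs}), which forces a case split on the parity of $\ceil{q}$, with the odd case $\ceil{q}\geq 5$ needing an extra second-derivative patch; you instead work with $\phi = \Hq - c\,\Hqq{2}$, observe that $\phi''(x) = 4c - q\left(x^{q-2}+(1-x)^{q-2}\right)$ is monotone in $x$ on $[0,1/2]$, and let the tuned double-zero boundary data ($\phi(0)=\phi'(0)=0$ for $c=\tfrac{q}{2(q-1)}$; $\phi(0)=\phi(1/2)=0$ for $c=2\Hq(1/2)$; always $\phi'(1/2)=0$) pin the sign. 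I checked your sign-pattern analysis: it closes in every case, the only extra numerical input being $q-4+2^{3-q}<0$ on $(2,3)$, so your route is more uniform in $q$ and avoids the parity cases, though it does not recover the monotonicity of the ratio itself, which the paper's expansion yields and which feeds its discussion of the normalized Tsallis binary entropy. For the power bound (item (1)), the divergence is sharper: the paper (\Cref{lemma:QJTq-leq-QJT2-0leQle2}) fixes $q$ and proves monotonicity in the overlap of $\Hq/(\Hqq{2})^{q/2}$, with two separate calculus arguments for $0<q<1$ and $1<q<2$; you fix the overlap and count zeros in the \emph{order} variable: $Q(q)=z^q+z^{-q}+2^q-(z+z^{-1})^q-2$ vanishes at $q=0,1,2$ for every $z>1$, and since $Q''(q)e^{-Lq}$ with $L=\ln(z+z^{-1})>\max\{\ln z,\ln 2\}$ is strictly decreasing, $Q''$ has at most one zero, so by Rolle those three zeros are the only ones; continuity on the connected regions $(0,1)\times(1,\infty)$ and $(1,2)\times(1,\infty)$ in $(q,z)$ then gives constant signs, fixed by a single evaluation (e.g.\ $z\to\infty$, where $Q\to 2^q-2$). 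This treats both subintervals in one stroke where the paper needs two bespoke arguments, and the universal zeros at $q\in\{0,1,2\}$ expose exactly the kind of order-variable structure the paper's open problem (b) asks about. One inaccuracy, harmless because it sits in a motivational aside rather than a proof step: the ratio $\Hq(x)/(1-t^2)^{q/2}$ is \emph{not} identically $\Hq(1/2)$ at $q=1$ -- it equals $\H(1/2)$ only at $t=0$, and that it is bounded by $\H(1/2)$ is precisely the Shannon-case bound in \Cref{eq:binary-entropy-bounds-via-TsallisTwo}; only $q=2$ gives a constant ratio.
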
 

Our proof relies on the correspondence among quantum Jensen-type divergence for pure states, the associated quantum entropies of rank-$2$ states, and the corresponding binary entropies, detailed in \Cref{subsec:correspondence-rankTwoEnt-binaryEnt} together with the series expansion of this quantity. The proof of \Cref{thm:Renyi-binary-entropy-bounds} is given in \Cref{subsec:Renyi-binary-entropy-bounds}, while that of \Cref{thm:Tsallis-binary-entropy-bounds} is deferred to \Cref{subsec:Tsallis-binary-entropy-bounds}. 

\subsection{Mapping quantum entropies of rank-\texorpdfstring{$2$}{2} states to binary entropies}
\label{subsec:correspondence-rankTwoEnt-binaryEnt}

\begin{theorem}[Characterizing \QJTq{} and $\QJRalpha$ between pure states via binary entropies]
    \label{thm:pureQJType-eq-BinEntropy}
    For any pure states $\ket{\psi_0}$ and $\ket{\psi_1}$ on the same number of qubits, the following holds:
    \begin{enumerate}[label={\upshape(\arabic*)}]
        \item \label{thmitem:pureQJT-eq-TsallisBinE} $\displaystyle\QJTq(\ketbra{\psi_0}{\psi_0},\ketbra{\psi_1}{\psi_1}) = \Sq\rbra*{ \frac{\ketbra{\psi_0}{\psi_0}+\ketbra{\psi_1}{\psi_1}}{2} } = \Hq\rbra*{ \frac{1-\abs{\innerprod{\psi_0}{\psi_1}}}{2} }.$ 
        \item \label{thmitem:pureQJR-eq-RenyiBinE} $\displaystyle\QJRalpha(\ketbra{\psi_0}{\psi_0},\ketbra{\psi_1}{\psi_1}) = \Salpha\rbra*{ \frac{\ketbra{\psi_0}{\psi_0}+\ketbra{\psi_1}{\psi_1}}{2} } = \Halpha\rbra*{ \frac{1-\abs{\innerprod{\psi_0}{\psi_1}}}{2} }.$ 
    \end{enumerate}
\end{theorem}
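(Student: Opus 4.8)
The plan is to establish both equalities in each item in two stages: first reduce the Jensen-type divergence to the entropy of the equal mixture, then diagonalize that mixture. Throughout, write $c \coloneqq \innerprod{\psi_0}{\psi_1}$ and let $\sigma \coloneqq \frac{1}{2}\rbra*{\ketbra{\psi_0}{\psi_0}+\ketbra{\psi_1}{\psi_1}}$ be the equal mixture, so that $\sigma = \frac{\rho_0+\rho_1}{2}$ with $\rho_i = \ketbra{\psi_i}{\psi_i}$. For the leftmost equality in each item I would recall that the quantum $q$-Jensen--Tsallis and $\alpha$-Jensen--R\'enyi divergences are the Jensen gaps $\QJTq(\rho_0,\rho_1) = \Sq(\sigma) - \tfrac{1}{2}\rbra*{\Sq(\rho_0)+\Sq(\rho_1)}$ and analogously for $\QJRalpha$ with $\Salpha$. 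Since $\rho_0$ and $\rho_1$ are pure, their spectra are $\cbra{1,0,\dots,0}$, so $\Sq(\rho_i)=\Salpha(\rho_i)=0$; hence both divergences collapse to $\Sq(\sigma)$ and $\Salpha(\sigma)$, respectively. (Here the middle expression should be read as the entropy of $\sigma$, matching the rank-$2$ specializations in \Cref{eq:rank-2-Tsallis-EQ-binary,eq:rank-2-Renyi-EQ-binary}.)

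The core of the argument, and the step I expect to be the main (though still routine) obstacle, is pinning down the spectrum of $\sigma$. The operator $\ketbra{\psi_0}{\psi_0}+\ketbra{\psi_1}{\psi_1}$ is supported on the at-most-two-dimensional span of $\ket{\psi_0}$ and $\ket{\psi_1}$, so it has at most two nonzero eigenvalues $\mu_\pm$, which I would determine from its first two power traces. Using $\Tr\rbra*{\ketbra{\psi_0}{\psi_0}\ketbra{\psi_1}{\psi_1}} = \abs{c}^2$, we get $\Tr\rbra*{\ketbra{\psi_0}{\psi_0}+\ketbra{\psi_1}{\psi_1}} = 2$ and $\Tr\rbra[\big]{\rbra*{\ketbra{\psi_0}{\psi_0}+\ketbra{\psi_1}{\psi_1}}^2} = 2 + 2\abs{c}^2$, hence $\mu_+ + \mu_- = 2$ and $\mu_+\mu_- = 1-\abs{c}^2$, giving $\mu_\pm = 1 \pm \abs{c}$. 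Thus the eigenvalues of $\sigma$ are $\tfrac{1\pm\abs{c}}{2}$ together with zeros; the degenerate case $\abs{c}=1$ (the states coincide up to a phase, so $\sigma$ is rank one) is covered by the same formula.

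With the spectrum in hand, the rightmost equalities follow by direct substitution into \Cref{def:quantum-entropies}. Setting $x \coloneqq \tfrac{1-\abs{c}}{2}$ so that $1-x = \tfrac{1+\abs{c}}{2}$, I would compute $\Tr\rbra*{\sigma^q} = x^q + (1-x)^q$, whence $\Sq(\sigma) = \frac{1-x^q-(1-x)^q}{q-1} = \Hq(x)$ by definition of the $q$-Tsallis binary entropy; likewise $\Tr\rbra*{\sigma^\alpha} = x^\alpha + (1-x)^\alpha$ yields $\Salpha(\sigma) = \frac{\ln\rbra*{x^\alpha+(1-x)^\alpha}}{1-\alpha} = \Halpha(x)$. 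This settles both items for all orders $q,\alpha \neq 1$. The order-$1$ (von Neumann) and order-$\infty$ (min-entropy) cases then follow by taking the corresponding limits, which agree on both sides since $\Hqq{1}=\Haa{1}=\H$ and $\Haa{\infty}=\Hmin$ arise as the same limits of the binary entropies; for $\alpha=\infty$ one may also argue directly, as $\max\cbra{x,1-x} = \tfrac{1+\abs{c}}{2}$ gives $\Smin(\sigma) = -\ln\rbra*{\tfrac{1+\abs{c}}{2}} = \Hmin(x)$.
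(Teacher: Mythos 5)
Your proposal is correct and takes essentially the same route as the paper's proof: both arguments first collapse the Jensen-type divergences to the entropy of the equal mixture using $\Sq(\ketbra{\psi}{\psi}) = \Salpha(\ketbra{\psi}{\psi}) = 0$, then identify the spectrum of the mixture as $\cbra[\big]{\frac{1\pm\abs{\innerprod{\psi_0}{\psi_1}}}{2}}$ and substitute into the entropy definitions. The only difference is cosmetic and lies in the eigenvalue step: the paper writes $\ketbra{\psi_0}{\psi_0}+\ketbra{\psi_1}{\psi_1} = AA^{\dagger}$ and passes to the $2\times 2$ Gram matrix $A^{\dagger}A$, whereas you determine the two nonzero eigenvalues from the power traces $\Tr\rbra*{\ketbra{\psi_0}{\psi_0}+\ketbra{\psi_1}{\psi_1}} = 2$ and $\Tr\rbra[\big]{\rbra*{\ketbra{\psi_0}{\psi_0}+\ketbra{\psi_1}{\psi_1}}^2} = 2+2\abs{\innerprod{\psi_0}{\psi_1}}^2$ -- both are routine and yield the same conclusion.
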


To establish \Cref{thm:pureQJType-eq-BinEntropy}, we first note that the first equality in both \Cref{thmitem:pureQJT-eq-TsallisBinE,thmitem:pureQJR-eq-RenyiBinE} holds immediately, since $\Sq(\ketbra{\psi}{\psi})=0$ and $\Salpha(\ketbra{\psi}{\psi})=0$ for any pure state $\ket{\psi}$ and for all orders $q$ and $\alpha$. To demonstrate the second equality, we require the following lemma concerning the trace of powers of a rank-$2$ quantum state, in particular \Cref{eq:rank-two-state-trace} from its proof: 

\begin{lemma}[Trace of uniform rank-$2$ quantum state powers]
    \label{lemma:rank-two-state-trace}
    For any pure quantum states $\ket{\psi_0}$ and $\ket{\psi_1}$ on the same number of qubits, the following holds: For any $q\in\bbR_+$, 
    \[\Tr\rbra*{ \rbra[\Big]{\frac{\ketbra{\psi_0}{\psi_0}+\ketbra{\psi_1}{\psi_1}}{2}}^q } = \sum_{b\in\binset} \frac{\rbra*{1 + (-1)^b \abs{\innerprod{\psi_0}{\psi_1}}}^q}{2^q}
    = 2^{-q+1} \sum_{k=0}^{\infty} \binom{q}{2k} \abs{\innerprod{\psi_0}{\psi_1}}^{2k}. \]
    Here, the generalized binomial coefficients $\binom{q}{2k}$ are defined in \Cref{eq:generalizd-binomial-coeffs}.
\end{lemma}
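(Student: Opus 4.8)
The plan is to prove \Cref{lemma:rank-two-state-trace} by first computing the eigenvalues of the rank-$2$ state $\sigma \coloneqq \frac{1}{2}\rbra*{\ketbra{\psi_0}{\psi_0}+\ketbra{\psi_1}{\psi_1}}$ explicitly, and then expressing $\Tr(\sigma^q)$ as a power of each eigenvalue. Since $\ket{\psi_0}$ and $\ket{\psi_1}$ together span at most a two-dimensional subspace, it suffices to diagonalize $\sigma$ restricted to $\spanspace\cbra{\ket{\psi_0},\ket{\psi_1}}$. Writing $c \coloneqq \abs{\innerprod{\psi_0}{\psi_1}}$, I expect the two nonzero eigenvalues of $\sigma$ to be $\lambda_{\pm} = \frac{1\pm c}{2}$. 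This is the crux of the first equality; once it is established, $\Tr(\sigma^q) = \lambda_+^q + \lambda_-^q = \sum_{b\in\binset} \frac{\rbra*{1+(-1)^b c}^q}{2^q}$ follows immediately.

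To compute the eigenvalues, I would first reduce to a genuinely two-dimensional problem. Up to an irrelevant global phase on one of the vectors, write $\innerprod{\psi_0}{\psi_1} = c$ with $c \in [0,1]$ real and nonnegative (absorbing the phase into $\ket{\psi_1}$ does not change $\sigma$). Then $\sigma$ acts as the zero operator on the orthogonal complement of $\spanspace\cbra{\ket{\psi_0},\ket{\psi_1}}$, so its nonzero spectrum equals that of the $2\times 2$ Gram-type matrix representing $\sigma$ on this subspace. The cleanest route is to observe that $2\sigma = \ketbra{\psi_0}{\psi_0}+\ketbra{\psi_1}{\psi_1}$ has $\Tr(2\sigma) = 2$ and $\Tr\rbra*{(2\sigma)^2} = \Tr\rbra*{\ketbra{\psi_0}{\psi_0} + \ketbra{\psi_1}{\psi_1} + \ketbra{\psi_0}{\psi_1}\innerprod{\psi_1}{\psi_0} + \ketbra{\psi_1}{\psi_0}\innerprod{\psi_0}{\psi_1}} = 2 + 2c^2$. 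Denoting the two eigenvalues of $2\sigma$ by $\mu_{\pm}$, these two trace identities give $\mu_+ + \mu_- = 2$ and $\mu_+^2 + \mu_-^2 = 2 + 2c^2$, whence $\mu_+ \mu_- = 1 - c^2$ and therefore $\mu_{\pm} = 1 \pm c$. Dividing by $2$ recovers $\lambda_{\pm} = \frac{1\pm c}{2}$, confirming the first equality of the lemma.

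For the second equality, I would substitute the eigenvalue formula and factor out the common power of $2$:
\[
\Tr(\sigma^q) = \frac{(1+c)^q + (1-c)^q}{2^q} = 2^{-q}\sbra*{(1+c)^q + (1-c)^q}.
\]
Applying \Cref{prop:generalized-binomial-coeffs}\ref{thmitem:generalizd-binomial-identity} with $a = q$ and $x = c$ gives $(1+c)^q + (1-c)^q = 2\sum_{k=0}^{\infty}\binom{q}{2k} c^{2k}$, valid for $\abs{c} < 1$, so that $\Tr(\sigma^q) = 2^{-q+1}\sum_{k=0}^{\infty}\binom{q}{2k} c^{2k}$, which is exactly the claimed series with $c = \abs{\innerprod{\psi_0}{\psi_1}}$.

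The main obstacle I anticipate is the boundary case $c = 1$, i.e.\ when $\ket{\psi_0}$ and $\ket{\psi_1}$ are equal up to phase, so that $\sigma$ is itself a rank-$1$ pure state. Here the generalized binomial identity of \Cref{prop:generalized-binomial-coeffs}\ref{thmitem:generalizd-binomial-identity} is only asserted for $\abs{x} < 1$, so the series form requires a separate continuity or limiting argument. In this degenerate case the closed form $\Tr(\sigma^q) = 2^{-q}\sbra*{(1+c)^q + (1-c)^q}$ still holds directly (it evaluates to $1$ at $c=1$, as it must for a pure state), so I would simply note that the first equality is valid for all $c\in[0,1]$ and invoke the series expansion only on $c\in[0,1)$, where it is needed for the subsequent entropy bounds; by continuity of both sides in $q$ the stated identities extend to the closure. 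A minor secondary point is justifying that absorbing the relative phase of $\innerprod{\psi_0}{\psi_1}$ into $\ket{\psi_1}$ leaves $\sigma$, and hence its spectrum, unchanged, which is immediate since $\ketbra{\psi_1}{\psi_1}$ is phase-invariant.
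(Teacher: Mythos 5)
Your proof is correct, and it shares the same two-step skeleton as the paper's: first pin down the nonzero spectrum of $\sigma \coloneqq \frac{1}{2}\rbra*{\ketbra{\psi_0}{\psi_0}+\ketbra{\psi_1}{\psi_1}}$ as $\cbra*{(1\pm c)/2}$ with $c = \abs{\innerprod{\psi_0}{\psi_1}}$, then invoke \Cref{prop:generalized-binomial-coeffs}\ref{thmitem:generalizd-binomial-identity}. The only genuine difference is how the spectrum is computed. The paper introduces the linear map $A\colon \bbC^2 \to \calH$ sending $(a,b)$ to $a\ket{\psi_0}+b\ket{\psi_1}$, writes $2\sigma = AA^{\dagger}$, and uses the fact that $AA^{\dagger}$ and $A^{\dagger}A$ have the same nonzero eigenvalues, reducing the computation to the $2\times 2$ Gram matrix with unit diagonal and off-diagonal entry $\innerprod{\psi_0}{\psi_1}$, whose eigenvalues are $1\pm c$. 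You instead bound the rank by the span argument and recover the two eigenvalues of $2\sigma$ from the power sums $\Tr(2\sigma)=2$ and $\Tr\rbra*{(2\sigma)^2}=2+2c^2$ via Newton's identities. Both routes are equally elementary; the Gram-matrix argument generalizes more readily to uniform mixtures of $r>2$ pure states (the nonzero spectrum is always that of the $r\times r$ Gram matrix), while your moment method is self-contained and avoids the auxiliary map $A$. One point in your favor: you explicitly flag that \Cref{prop:generalized-binomial-coeffs}\ref{thmitem:generalizd-binomial-identity} is stated only for $\abs{x}<1$ and handle the boundary case $c=1$ (where $\sigma$ is pure) separately, whereas the paper applies the identity after remarking only that $0 \leq c \leq 1$, leaving that endpoint implicit.
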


\begin{proof}
    We start by computing $\Tr\rbra*{ \rbra*{\ketbra{\psi_0}{\psi_0}+\ketbra{\psi_1}{\psi_1}}^q }$ using the eigenvalues of $\ketbra{\psi_0}{\psi_0}+\ketbra{\psi_1}{\psi_1}$. Let $\calH$ be the finite-dimensional Hilbert space to which $\ket{\psi_0}$ and $\ket{\psi_1}$ belong. Let $A \colon \bbC^2 \rightarrow \calH$ be a linear map such that $A \begin{pmatrix} a \\ b\end{pmatrix} = a \ket{\psi_0} + b \ket{\psi_1}$, with its adjoint map $A^{\dagger} \ket{\phi} = \begin{pmatrix} \innerprod{\phi}{\psi_0} \\ \innerprod{\phi}{\psi_1}\end{pmatrix}$ for any pure state $\ket{\phi}$. Since $\ketbra{\psi_0}{\psi_0} + \ketbra{\psi_1}{\psi_1} = AA^{\dagger}$ and the eigenvalues of $AA^{\dagger}$ and $A^{\dagger}A$ are identical, the following holds: 
    \[ \Tr\rbra*{ \rbra*{\ketbra{\psi_0}{\psi_0}+\ketbra{\psi_1}{\psi_1}}^q } = \Tr\rbra*{ \rbra[\big]{A^{\dagger}A}^q } = \Tr\rbra*{B^q}, \; \text{where} \; B \coloneqq \begin{pmatrix}
        1 & \innerprod{\psi_0}{\psi_1}\\
        \innerprod{\psi_1}{\psi_0} & 1
    \end{pmatrix}. \]

    A direct calculation shows that the eigenvalues of $B$ are $1-\abs{\innerprod{\psi_0}{\psi_1}}$ and $1+\abs{\innerprod{\psi_0}{\psi_1}}$. As a result, we obtain the following expression:
    \begin{equation}
        \label{eq:rank-two-state-trace}
        \Tr\rbra*{ \rbra*{\ketbra{\psi_0}{\psi_0}+\ketbra{\psi_1}{\psi_1}}^q } = \Tr\rbra[\big]{B^q} = (1-\abs{\innerprod{\psi_0}{\psi_1}})^q + (1+\abs{\innerprod{\psi_0}{\psi_1}})^q.
    \end{equation}
    
    Noting that $0 \leq \abs{\innerprod{\psi_0}{\psi_1}} \leq 1$, we complete the proof by combining \Cref{eq:rank-two-state-trace} with \Cref{prop:generalized-binomial-coeffs}\ref{thmitem:generalizd-binomial-identity}, which expresses the final expression in \Cref{eq:rank-two-state-trace} in terms of generalized binomial coefficients. 
\end{proof}

\subsection{New bounds for \texorpdfstring{$\alpha$}{}-R\'enyi binary entropy}
\label{subsec:Renyi-binary-entropy-bounds}

In this subsection, we present the proof of \Cref{thm:Renyi-binary-entropy-bounds}. 

\subsubsection{The cases of \texorpdfstring{$0 < \alpha < 2$}{}}

\begin{theorem}[$\alpha$-R\'enyi binary entropy upper bound when $0<\alpha \leq 2$]
    \label{thm:Renyi-binary-entropy-upper-bound-0leAle2}
    The following holds:
        \[\forall \alpha\in(0,2], \;\; \forall x\in[0,1], \quad  \Halpha(x) \leq \ln(2)^{1-\frac{\alpha}{2}} \cdot \Haa{2}(x)^{\frac{\alpha}{2}}. \] 
\end{theorem}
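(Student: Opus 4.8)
The plan is to recast the claimed bound as a single convexity statement in the order $\alpha$, and then to reduce that convexity, via the series expansion of \Cref{lemma:rank-two-state-trace}, to controlling the sign pattern of the generalized binomial coefficients in the regime $\alpha\le 2$. First I would use the symmetry $\Halpha(x)=\Halpha(1-x)$ to restrict to $x\in[0,1/2]$ and dispose of the degenerate cases by direct evaluation: at $x=1/2$ both sides equal $\ln 2$ (since $\Halpha(1/2)=\ln 2$ for every $\alpha$ and $\Haa{2}(1/2)=\ln 2$), and at $x=0$ both sides vanish, so equality holds there. For $x\in(0,1/2)$ I would set $c\coloneqq 1-2x\in(0,1)$ — exactly $\abs{\innerprod{\psi_0}{\psi_1}}$ in the rank-$2$ picture — and record the closed forms $x^\alpha+(1-x)^\alpha=2^{1-\alpha}\sum_{k\ge 0}\binom{\alpha}{2k}c^{2k}$, obtained from \Cref{prop:generalized-binomial-coeffs}\ref{thmitem:generalizd-binomial-identity} (equivalently \Cref{lemma:rank-two-state-trace}), together with $\Haa{2}(x)=\ln 2-\ln(1+c^2)$.

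The key reformulation is that, after taking logarithms of both sides (both positive for $x\in(0,1/2)$), the claimed inequality says exactly that the function $G(\alpha)\coloneqq\ln\Halpha(x)$ lies on or below the chord joining its endpoint values at $\alpha=0$ and $\alpha=2$. Indeed, using $\Haa{0}(x)=\ln 2$ for $x\in(0,1)$ one has $G(0)=\ln\ln 2$ and $G(2)=\ln\Haa{2}(x)$, so the target is equivalent to
\[ G(\alpha)\le \tfrac{2-\alpha}{2}\,G(0)+\tfrac{\alpha}{2}\,G(2),\qquad \alpha\in[0,2]. \]
Hence it suffices to prove that $G(\alpha)=\ln\Halpha(x)$ is convex in $\alpha$ on $[0,2]$ for each fixed $x\in(0,1)$; the chord bound then follows immediately. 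This viewpoint also makes transparent the monotonicity phenomenon raised in open problem~(b), of which the present statement is the endpoint case.

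To establish convexity I would write $K(\alpha)\coloneqq\ln\rbra*{x^\alpha+(1-x)^\alpha}$, which is the cumulant generating function of the variable taking values $\ln x,\ln(1-x)$, hence convex with $K(1)=0$ and $\Halpha(x)=K(\alpha)/(1-\alpha)$. Differentiating $G=\ln\abs{K}-\ln\abs{1-\alpha}$ twice reduces $G''\ge 0$ to
\[ K''(\alpha)\,K(\alpha)+\Halpha(x)^2\ge K'(\alpha)^2, \]
where $K'$ and $K''\ge 0$ are the mean and variance of $\cbra{\ln x,\ln(1-x)}$ under the $\alpha$-tilted distribution. I expect this inequality to be the main obstacle: both signs of $K$ occur ($K>0$ for $\alpha<1$, $K<0$ for $\alpha\in(1,2)$), and near $\alpha=0$ one checks $\Halpha(x)\le\abs{K'}$, so the term $\Halpha(x)^2$ alone cannot dominate $K'(\alpha)^2$ and the variance term $K''K$ must be retained. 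To close it I would substitute the power series for $K,K',K''$ in $c$ and invoke \Cref{prop:sign-conds-binomial-coeffs}: for $\alpha\le 2$ the coefficients $\binom{\alpha}{2k}$ with $k\ge 1$ all carry a single sign (negative for $\alpha\in(0,1)$, nonnegative for $\alpha\in(1,2)$, terminating at $\alpha=2$), which renders the series comparison tractable. This uniform sign pattern breaks once $\alpha>2$, which is precisely why the upper bound is confined to $(0,2]$.

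As a consistency check and to explain why a finer argument is genuinely needed, note that the norm-monotonicity $\norm{(x,1-x)}_\alpha\ge\norm{(x,1-x)}_2$ gives $x^\alpha+(1-x)^\alpha\ge\rbra*{x^2+(1-x)^2}^{\alpha/2}$, and feeding this into $\Halpha(x)=K(\alpha)/(1-\alpha)$ yields only the \emph{linear} bound $\Halpha(x)\le\frac{\alpha}{2(\alpha-1)}\Haa{2}(x)$ for $\alpha\in(1,2)$ (while its reverse directly produces the companion lower bound, part~(2) of \Cref{thm:Renyi-binary-entropy-bounds}, for $\alpha\ge 2$). This linear estimate does not imply the transcendental power-type bound of the present statement, confirming that the log-power inequality requires the convexity argument rather than power-mean comparison alone.
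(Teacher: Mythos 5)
Your reformulation is correct as far as it goes: taking logarithms and using $\Haa{0}(x)=\ln 2$ for $x\in(0,1)$, the target bound is indeed equivalent to the chord inequality $G(\alpha)\le \frac{2-\alpha}{2}G(0)+\frac{\alpha}{2}G(2)$ for $G(\alpha)\coloneqq\ln\Halpha(x)$, and your algebra reducing $G''\ge 0$ to $K''(\alpha)K(\alpha)+\Halpha(x)^2\ge K'(\alpha)^2$ is also right. The fatal problem is that the sufficient condition you reduce everything to --- convexity of $G$ on $[0,2]$ --- is \emph{false}, so the series comparison you sketch cannot be completed because it would be proving a false statement. Evaluate your inequality at $\alpha=0$: writing $u=-\ln x$ and $v=-\ln(1-x)$, one has $K(0)=\ln 2$, $K'(0)=-\frac{u+v}{2}$, $K''(0)=\frac{(u-v)^2}{4}$ (the variance of the uniform two-point distribution), and $\Haa{0}(x)=\ln 2$, so $G''(0)\ge 0$ is equivalent to
\[ \frac{(u-v)^2}{4}\,\ln 2 + (\ln 2)^2 \;\ge\; \frac{(u+v)^2}{4}. \]
Since $\ln 2<1$, as $x\to 0^+$ (so $u\to\infty$, $v\to 0$) the left side grows like $\frac{\ln 2}{4}u^2$ while the right side grows like $\frac{1}{4}u^2$, so the inequality fails for all sufficiently small $x$. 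It already fails at $x=0.1$: the left side is $\approx 1.207\cdot 0.693+0.480=1.317$ versus $1.450$ on the right, giving $G''(0)\approx -0.28<0$; a direct midpoint check ($G(0.05)+G(0.15)-2G(0.1)<0$ at $x=0.1$) confirms $G$ is strictly concave in a neighborhood of $\alpha=0$. You correctly flagged the region near $\alpha=0$ as the danger zone and hoped the variance term $K''K$ would rescue the bound there; it does not.

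Note that this does not contradict the theorem: the chord inequality holds without $G$ being convex (equality at $\alpha=0$ together with an initial slope below the chord slope is perfectly consistent with local concavity), but it does mean convexity in $\alpha$ is genuinely stronger than the statement, and it is simply not true. This is why the paper sweeps the \emph{other} variable: for fixed $\alpha$ it shows that the ratio $\Halpha/(\Haa{2})^{\alpha/2}$, viewed as a function of the overlap $c=\abs{\innerprod{\psi_0}{\psi_1}}=1-2x$, is monotonically non-increasing on $c\in[0,1]$, hence maximized at $c=0$ (i.e., $x=1/2$), where it equals $\ln(2)^{1-\alpha/2}$; the case analysis in $\alpha$ (namely $0<\alpha<1$ versus $1<\alpha<2$, with $\alpha\in\{1,2\}$ handled separately) enters only through the sign bookkeeping in that monotonicity argument. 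Your closing observation that power-mean comparison yields only the linear bound $\Halpha(x)\le\frac{\alpha}{2(\alpha-1)}\Haa{2}(x)$ is correct, but it does not repair the core step, so the proposal as written does not constitute a proof.
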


The proof of \Cref{thm:Renyi-binary-entropy-upper-bound-0leAle2} leverages the correspondence between $\QJRalpha$ for pure states and the $\alpha$-R\'enyi binary entropy (\Cref{thm:pureQJType-eq-BinEntropy}\ref{thmitem:pureQJR-eq-RenyiBinE}). Hence, it remains to establish the following lemma: 
\begin{lemma}[$\QJR{2}$ vs.~$\QJRalpha$ for $0<\alpha \leq 2$]
    \label{lemma:QJR2-vs-QJRalpha-0leAle2}
    For any pure states $\ket{\psi_0}$ and $\ket{\psi_1}$, it holds that:
    \[\forall \alpha\in(0,2], \quad  \QJRalpha(\ketbra{\psi_0}{\psi_0},\ketbra{\psi_1}{\psi_1}) \leq \ln(2)^{1-\frac{\alpha}{2}} \cdot \QJR{2}(\ketbra{\psi_0}{\psi_0},\ketbra{\psi_1}{\psi_1})^{\frac{\alpha}{2}}.\] 
\end{lemma}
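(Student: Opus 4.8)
The plan is to transfer the inequality from the quantum Jensen--R\'enyi divergence to the $\alpha$-R\'enyi binary entropy, and then to reduce it to a single convexity statement in the order $\alpha$.

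First I would invoke \Cref{thm:pureQJType-eq-BinEntropy}\ref{thmitem:pureQJR-eq-RenyiBinE}, which identifies $\QJRalpha(\ketbra{\psi_0}{\psi_0},\ketbra{\psi_1}{\psi_1}) = \Halpha(x)$ and $\QJR{2}(\ketbra{\psi_0}{\psi_0},\ketbra{\psi_1}{\psi_1}) = \Haa{2}(x)$, where $x \coloneqq \frac{1-\abs{\innerprod{\psi_0}{\psi_1}}}{2} \in [0,1/2]$. Hence it suffices to prove the binary-entropy inequality $\Halpha(x) \leq \ln(2)^{1-\frac{\alpha}{2}}\Haa{2}(x)^{\frac{\alpha}{2}}$. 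Writing $\sigma_x \coloneqq \diag(x,1-x)$, observe that $\Halpha(x) = \Salpha(\sigma_x)$, i.e.\ the binary entropy is exactly the $\alpha$-R\'enyi entropy of the rank-$\leq 2$ state $\sigma_x$. The boundary cases are immediate: if $x \in \{0,1/2\}$ (equivalently $\abs{\innerprod{\psi_0}{\psi_1}}\in\{0,1\}$), both sides coincide, so I may assume $x \in (0,1/2)$, in which case $\sigma_x$ has two positive eigenvalues and $\Halpha(x)\in(0,\ln 2)$ for every $\alpha\in(0,\infty)$.

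Next I would recast the target as a convexity inequality. Fix $x\in(0,1/2)$ and set $G(\alpha) \coloneqq \ln\rbra*{\Halpha(x)/\ln(2)}$. Since $\Haa{0}(x) = \ln(2)$ we have $G(0)=0$, and since $\Halpha(x)\leq \Haa{0}(x)=\ln(2)$ by the monotonicity of \Cref{lemma:Renyi-monotonicity} we have $G(\alpha)\leq 0$ throughout. Dividing the desired inequality by $\ln(2)$ and taking logarithms, it becomes exactly $G(\alpha)\leq \frac{\alpha}{2}G(2)$, which states that the graph of $G$ lies on or below the chord through the points $(0,G(0))=(0,0)$ and $(2,G(2))$. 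Consequently it is enough to show that $G$ is convex on $[0,2]$: applying the definition of convexity to the endpoints $\alpha=0$ and $\alpha=2$ with weight $\lambda=\alpha/2$ then yields $G(\alpha)\leq (1-\tfrac{\alpha}{2})G(0)+\tfrac{\alpha}{2}G(2)=\tfrac{\alpha}{2}G(2)$ for every intermediate $\alpha\in(0,2]$. (Equivalently, it suffices that $\alpha\mapsto\rbra*{\Halpha(x)/\ln(2)}^{2/\alpha}$ be non-decreasing on $(0,2]$, which is the restriction to $(0,2]$ of the monotonicity conjectured in the open problems and is implied by the convexity of $G$.)

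The main obstacle is establishing this convexity, equivalently the \emph{log-convexity} of $\alpha\mapsto\Halpha(x)$ on $[0,2]$. I would prove it analytically. Writing $C(\alpha)\coloneqq \ln\rbra*{x^\alpha+(1-x)^\alpha}=\ln\Tr(\sigma_x^\alpha)$, one has $\Halpha(x)=C(\alpha)/(1-\alpha)$, and $C$ is convex since it is the logarithm of a sum of exponentials in $\alpha$, with explicit second derivative $C''(\alpha)=\frac{x^\alpha(1-x)^\alpha\rbra*{\ln x-\ln(1-x)}^2}{\rbra*{x^\alpha+(1-x)^\alpha}^2}\geq 0$. Log-convexity of $\Halpha(x)$ is the statement $\Halpha(x)\cdot\frac{\dd^2}{\dd\alpha^2}\Halpha(x)\geq\rbra*{\frac{\dd}{\dd\alpha}\Halpha(x)}^2$, which I would verify from the derivatives of $C$, treating the removable singularity at $\alpha=1$ (where $C(\alpha)$ and $1-\alpha$ both vanish but $\Halpha(x)$ extends smoothly to the von Neumann value) by a limiting argument, and, if convenient, splitting into $\alpha\in(0,1)$ and $\alpha\in[1,2]$ according to the sign of $1-\alpha$. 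As an alternative route that reuses the machinery already in place, I would expand $\Tr(\sigma_x^\alpha)$ through the generalized binomial series of \Cref{lemma:rank-two-state-trace} and control the signs of the coefficients via \Cref{prop:sign-conds-binomial-coeffs}. In either case the positivity of this second derivative is the crux of the argument, whereas the reduction steps above are routine.
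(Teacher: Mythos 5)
Your reduction steps are fine and match the paper's own first move: invoking \Cref{thm:pureQJType-eq-BinEntropy}\ref{thmitem:pureQJR-eq-RenyiBinE} to pass to the binary-entropy inequality, and recasting it (since $\Haa{0}(x)=\ln 2$ for $x\in(0,1)$) as the chord condition $G(\alpha)\leq\frac{\alpha}{2}G(2)$ for $G(\alpha)\coloneqq\ln\rbra*{\Halpha(x)/\ln 2}$. The fatal problem is the key claim you defer to the end: convexity of $G$ on $[0,2]$, i.e.\ log-convexity of $\alpha\mapsto\Halpha(x)$, is \emph{false}, so the verification of $\Halpha\cdot\frac{\dd^2}{\dd\alpha^2}\Halpha\geq\rbra*{\frac{\dd}{\dd\alpha}\Halpha}^2$ that you plan to carry out "from the derivatives of $C$" cannot succeed, and the alternative binomial-series route serves the same false statement. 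To see the failure, expand around $x=1/2$: with $u\coloneqq 1-2x$ and $(1+u)^\alpha+(1-u)^\alpha=2\bigl[1+\binom{\alpha}{2}u^2+\binom{\alpha}{4}u^4+O(u^6)\bigr]$, one computes
\[ \Halpha(x) = \ln 2 - \frac{\alpha}{2}\,u^2 + \frac{\alpha(\alpha^2+\alpha-3)}{12}\,u^4 + O(u^6), \qquad \frac{\partial^2}{\partial\alpha^2} G(\alpha) = \sbra*{\frac{3\alpha+1}{6\ln 2} - \frac{1}{4(\ln 2)^2}}u^4 + O(u^6), \]
and the bracket is negative for all $\alpha < \frac{1}{3}\rbra*{\frac{3}{2\ln 2}-1}\approx 0.39$. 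Hence for $x$ near $1/2$ the function $G$ is strictly concave near $\alpha=0$. This is not an asymptotic artifact: at $x=0.4$ a direct numerical evaluation gives $G(0)-2G(0.1)+G(0.2)\approx -3.6\times 10^{-6}<0$. (The target inequality itself survives in this regime --- the chord is anchored at $(0,0)$ and $G$ still lies below it --- but convexity is not the mechanism, so your proof outline has no way to recover.)

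There is also a structural warning sign you passed over: your parenthetical fallback, monotonicity of $\alpha\mapsto\rbra*{\Halpha(x)/\ln 2}^{2/\alpha}$ (equivalently, star-shapedness of $G$, which is weaker than convexity but still stronger than the target), is precisely the statement the paper lists as open question (b) in the discussion section; it is not established anywhere, and the paper's \Cref{thm:Renyi-binary-entropy-upper-bound-0leAle2} only compares each $\alpha\in(0,2]$ against the single endpoint $\alpha'=2$, which is exactly the target and nothing more. This is why the paper's actual proof takes a transposed route: it fixes $\alpha$ and studies monotonicity in $x$ rather than fixing $x$ and studying convexity in $\alpha$. Concretely, it shows that $R(x;\alpha)\coloneqq(\alpha-1)\,\QJRalpha/\QJR{2}^{\alpha/2}$ (and its counterpart $\widehat{R}$ for $0<\alpha<1$) is monotonically non-increasing in $x$, hence maximized at $x=0$, which requires the lengthy sign analysis of the auxiliary functions $F$, $G$, $I_1$, $I_2$, $G_1$, $G_2$, $J_1$, $J_2$ in \Cref{subsec:Renyi-binary-entropy-bounds} and \Cref{sec:omitted-proofs-new-binary-entropies-bounds}, together with the separate treatment of $\alpha=1$ via~\cite{Topsoe01}. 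The single convexity statement you hoped would replace all of that analysis does not hold.
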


\begin{proof}
    We first observe that the equality holds when $\alpha=2$ by direct calculation. The case $\alpha=1$ follows by verifying $\H(x) \leq \ln(2) \cdot (4x(1-x))^{1/\ln{4}} \leq \sqrt{\ln{2}} \sqrt{\Haa{2}(x)}$, where the first inequality is proven in~\cite[Theorem 1.2]{Topsoe01}.  
    It thus remains to establish the result for the cases $1<\alpha < 2$ and $0 < \alpha < 1$. 

    \vspace{1em}
    \noindent\textbf{The case $1<\alpha<2$}. 
    We begin by introducing the function $U(x;\alpha) \coloneqq (1-x)^\alpha+(1+x)^\alpha$ for convenience, and define the following function:
    \begin{equation}
        \label{eq:RenyiBinEntUB-1leAle2-R}
        R(\abs{\innerprod{\psi_0}{\psi_1}};\alpha) \coloneqq (\alpha-1) \cdot \frac{\QJRalpha(\ketbra{\psi_0}{\psi_0}, \ketbra{\psi_1}{\psi_1})}{\QJR{2}(\ketbra{\psi_0}{\psi_0}, \ketbra{\psi_1}{\psi_1})^{\alpha/2}} = \frac{\alpha\ln(2) - \ln U(x;\alpha)}{\rbra*{\ln(2) - \ln(1+x^2)}^{\alpha/2}}.
    \end{equation}

    To derive an upper bound for $R(x;\alpha)$, we examine the first-order derivative $\frac{\partial}{\partial x} R(x;\alpha)$ with respect to $x$. Since $\frac{1}{\alpha} \frac{\partial}{\partial x} U(x;\alpha) = (1+x)^{\alpha-1} - (1-x)^{\alpha-1}$, a direct calculation yields
    \begin{subequations}
        \label{eq:RenyiBinEntUB-1leAle2-F}
        \begin{align}
            F(x;\alpha) &\coloneqq \frac{1+x^2}{\alpha} \cdot \ln\rbra*{\frac{2}{1+x^2}}^{\frac{\alpha}{2}+1} \cdot \frac{\partial}{\partial x} R(x;\alpha) \\
            &= x\rbra*{ \alpha\ln(2) - \ln U(x;\alpha)} - \rbra*{1+x^2} \cdot \ln\rbra*{\frac{2}{1+x^2}} \cdot\frac{\frac{\partial}{\partial x} U(x;\alpha)}{\alpha \cdot U(x;\alpha)}
        \end{align}
    \end{subequations}

    Noting that $(1+x^2)/\alpha \geq 0$ and $\rbra*{\ln(2) - \ln(1+x^2)}^{-\frac{\alpha}{2}+1}\geq 0$ hold for $x\in[0,1]$ and $\alpha\in(1,2)$, we see that the sign of $\frac{\partial}{\partial x} R(x;\alpha)$ is entirely determined by the sign of $F(x;\alpha)$. Since $F(x;1)=F(x;2)$ for $0 \leq x \leq 1$ at the endpoints $\alpha=1$ and $\alpha=2$, it therefore suffices to prove that the monotonicity of $F(x;\alpha)$ with respect to $\alpha$ changes exactly once, decreasing up to a certain point $\alpha^*(x)$ and then increasing, over the whole interval, specifically: 
    \begin{equation}
        \label{eq:RenyiBinEntUB-1leAle2-partialF-conds}
        \forall x\in[0,1], \;\;\exists \alpha^*(x)\in(1,2), \;\;\text{s.t.}\;\; \frac{\partial}{\partial\alpha} F(x;\alpha) \begin{cases}
            \leq 0, & \alpha \in \rbra*{1,\alpha^*(x)}\\
            \geq 0, & \alpha \in \rbra*{\alpha^*(x),2}\\
        \end{cases}.
    \end{equation}
    Bringing together \Cref{eq:RenyiBinEntUB-1leAle2-partialF-conds} and the evaluations at the endpoints $F(x;1)$ and $F(x;2)$, we conclude that $\frac{\partial}{\partial x} R(x;\alpha) \leq 0$ over the same interval. As a result, $R(x;\alpha)$ is monotonically non-increasing on $x\in[0,1]$ for any fixed $\alpha\in(1,2)$, which implies the desired upper bound: 
    \[ \forall \alpha\in(1,2),\;\; \forall x\in[0,1], \quad R(x;\alpha) \leq R(0;\alpha) = (\alpha-1) \cdot \ln(2)^{1-\frac{\alpha}{2}}.\]

    \vspace{1em}
    To complete the proof by establishing \Cref{eq:RenyiBinEntUB-1leAle2-partialF-conds}, we consider the function $G(x;\alpha)$ and compute its first-order derivative $\frac{\partial}{\partial \alpha} G(x;\alpha)$: 
    \begin{subequations}
        \label{eq:RenyiBinEntUB-1leAle2-partialF}
        \begin{align}
            G(x;\alpha) &\coloneqq U(x;\alpha)^2 \cdot \frac{\partial}{\partial \alpha} F(x;\alpha),\\
            \frac{\partial}{\partial \alpha} G(x;\alpha) &= I_1(x;\alpha) + (1-x^2)^{\alpha-1} \ln(1-x^2) I_2(x).
        \end{align}
    \end{subequations}

    Here, the functions $I_1(x;\alpha)$ and $I_2(x)$ are defined as the following: 
    \begin{subequations}
        \label{eq:RenyiBinEntUB-1leAle2-I1I2}
        \begin{align}
            I_1(x;\alpha) &\coloneqq -2 x (1-x)^{2\alpha} \ln\rbra*{\frac{1-x}{2}} \ln(1-x) - 2x(1+x)^{2\alpha} \ln\rbra*{\frac{1+x}{2}} \ln(1+x)\\
            I_2(x)&\coloneqq - x (1-x^2) \ln\rbra*{\frac{1-x^2}{4}} + 2(1+x^2) \ln\rbra*{\frac{1+x}{1-x}} \ln\rbra*{ \frac{1+x^2}{2} }.
        \end{align}
    \end{subequations}

    A direct calculation shows that $\frac{\partial}{\partial \alpha} I_1(x;\alpha) \geq 0$ for $1 < \alpha < 2$ and $0 \leq x \leq 1$, since each term in the expression for $\frac{\partial}{\partial \alpha} I_1(x;\alpha)$ is non-negative: 
    \begin{equation}
         \frac{\partial}{\partial \alpha} I_1(x;\alpha) = -4x(1-x)^{2 \alpha} \ln\rbra*{\frac{1-x}{2}} \ln(1-x)^2 - 4x (1+x)^{2 \alpha} \ln\rbra*{\frac{1+x}{2}} \ln(1+x)^2 \geq 0.
    \end{equation}
    As a result, $I_1(x;\alpha)$ is monotonically non-decreasing on $\alpha \in (1,2)$. This implies that 
    \[\frac{I_1(x;\alpha)}{2x} \geq \frac{I_1(x;1)}{2x} = -(1-x)^2 \ln\rbra*{\frac{1-x}{2}} \ln(1-x) - (1+x)^2 \ln\rbra*{\frac{1+x}{2}} \ln(1+x).\]

    By showing that $I_1(x;1)$ is non-negative for $0\leq x \leq 1$, as stated in \Cref{fact:RenyiBinEntUB-1leAle2-I1-I2}\ref{thmitem:RenyiBinEntUB-1leAle2-I1}, we obtain that $I_1(x;\alpha) \geq I_1(x;1) \geq 0$. Analogously, we prove that $I_2(x)$ is non-positive on the same interval, as presented in \Cref{fact:RenyiBinEntUB-1leAle2-I1-I2}\ref{thmitem:RenyiBinEntUB-1leAle2-I2}. The proofs of \Cref{fact:RenyiBinEntUB-1leAle2-I1-I2} are deferred to \Cref{sec:omitted-proofs-new-binary-entropies-bounds}. 
    \begin{restatable}{fact}{RenyiBinEntUBOneleAleTwoI}
        \label{fact:RenyiBinEntUB-1leAle2-I1-I2}
        The functions $I_1(x,\alpha)$ and $I_2(x)$, as defined in \Cref{eq:RenyiBinEntUB-1leAle2-I1I2}, satisfy:
        \begin{enumerate}[label={\upshape(\arabic*)}, topsep=0.33em, itemsep=0.33em, parsep=0.33em]
            \item $\forall x \in [0,1], \quad I_1(x;1) \geq 0.$ \label{thmitem:RenyiBinEntUB-1leAle2-I1}
            \item $\forall x \in [0,1], \quad I_2(x) \leq 0.$
            \label{thmitem:RenyiBinEntUB-1leAle2-I2}
        \end{enumerate}
    \end{restatable}

    Utilizing \Cref{fact:RenyiBinEntUB-1leAle2-I1-I2}, together with the fact that $(1-x^2)^{\alpha-1} \ln(1-x^2) \leq 0$ for $0 \leq x \leq 1$, we conclude the following bound:  
    \begin{equation}
        \forall x\in[0,1], \;\; \forall \alpha \in (1,2), \quad \frac{\partial}{\partial\alpha} G(x;\alpha) \geq 0.
    \end{equation}
    Therefore, $G(x;\alpha)$ is monotonically non-decreasing on $\alpha \in (1,2)$ for all fixed $x\in[0,1]$. Consequently, it remains to analyze the behavior of $G(x;\alpha)$ at the endpoints, specifically:
    \begin{subequations}
        \label{eq:RenyiBinEntUB-1leAle2-G1G2}
        \begin{align}
            G_1(x) \coloneqq G(x;1) =& 2 (1+x^2) \ln\rbra*{\frac{1-x}{1+x}} \ln\rbra*{\frac{2}{1+x^2}} + 4x \H\rbra*{\frac{1-x}{2}},\\
            G_2(x) \coloneqq \frac{G(x;2)}{2(1+x^2)} =& (1-x)\ln(1-x) \rbra*{(1+x) \ln\rbra*{\frac{2}{1+x^2}} - x(1-x)}\\
            &- (1+x)\ln(1+x) \rbra*{(1-x) \ln\rbra*{\frac{2}{1+x^2}} + x(1+x)} \\
            &+ 2x(1+x^2)\ln(2). 
        \end{align}
    \end{subequations}

    We can show that $G_1(x)$ is non-positive for $0\leq x \leq 1$, as given in \Cref{fact:RenyiBinEntUB-1leAle2-G1-G2}\ref{thmitem:RenyiBinEntUB-1leAle2-G1}. Similarly, we prove that $G_2(x)$ is non-negative on the same interval, as detailed in \Cref{fact:RenyiBinEntUB-1leAle2-G1-G2}\ref{thmitem:RenyiBinEntUB-1leAle2-G2}, which implies that $G(x;2)$ is also non-negative on this interval. The proofs of \Cref{fact:RenyiBinEntUB-1leAle2-G1-G2} are provided in \Cref{sec:omitted-proofs-new-binary-entropies-bounds}. 
    \begin{restatable}{fact}{RenyiBinEntUBOneleAleTwoG}
    \label{fact:RenyiBinEntUB-1leAle2-G1-G2}
        The functions $G_1(x)$ and $G_2(x)$, as defined in \Cref{eq:RenyiBinEntUB-1leAle2-G1G2}, satisfy:
        \begin{enumerate}[label={\upshape(\arabic*)}, topsep=0.33em, itemsep=0.33em, parsep=0.33em]
            \item $\forall x \in [0,1], \quad G_1(x) \leq 0.$ \label{thmitem:RenyiBinEntUB-1leAle2-G1}
            \item $\forall x \in [0,1], \quad G_2(x) \geq 0.$
            \label{thmitem:RenyiBinEntUB-1leAle2-G2}
        \end{enumerate}
    \end{restatable}

    In accordance with \Cref{eq:RenyiBinEntUB-1leAle2-partialF}, we recall that $\frac{\partial}{\partial \alpha} F(x;\alpha)$ has the same sign as $G(x;\alpha)$.  
    Hence, by combining the properties of $G_1(x)$ and $G_2(x)$ with the monotonicity of $G(x;\alpha)$ with respect to $\alpha$, we conclude that there exists some $\alpha^*(x) \in (1,2)$ such that $\frac{\partial}{\partial \alpha} F(x;\alpha)$ is non-positive on $\alpha\in(1,\alpha^*(x))$ and non-negative on $\alpha \in (\alpha^*(x),2)$. As a result, for any fixed $x\in[0,1]$, the function $F(x;\alpha)$ is monotonically non-increasing on $\alpha\in(1,\alpha^*(x))$ and monotonically non-decreasing on $\alpha \in (\alpha^*(x),2)$, which establishes \Cref{eq:RenyiBinEntUB-1leAle2-partialF-conds} and completes the proof. 

    \vspace{1em}
    \noindent\textbf{The case $0<\alpha<1$}. Analogous to \Cref{eq:RenyiBinEntUB-1leAle2-R}, we define the function:
    \[ \widehat{R}(\abs{\innerprod{\psi_0}{\psi_1}};\alpha) \coloneqq (1-\alpha) \cdot \frac{\QJRalpha(\ketbra{\psi_0}{\psi_0}, \ketbra{\psi_1}{\psi_1})}{\QJR{2}(\ketbra{\psi_0}{\psi_0}, \ketbra{\psi_1}{\psi_1})^{\alpha/2}}. \]

    To establish an upper bound for $\widehat{R}(x;\alpha)$, we compute the first-order derivative $\frac{\partial}{\partial x}\widehat{R}(x;\alpha)$ with respect to $x$. In analogy with \Cref{eq:RenyiBinEntUB-1leAle2-F}, we have
    \[ \widehat{F}(x;\alpha) \coloneqq \frac{1+x^2}{\alpha} \cdot \ln\rbra*{\frac{2}{1+x^2}}^{\frac{\alpha}{2}+1} \cdot \frac{\partial}{\partial x} \widehat{R}(x;\alpha) = -F(x;\alpha). \]

    By reasoning similar to the case $1<\alpha<2$, we find that the sign of $\frac{\partial}{\partial x} \widehat{R}(x;\alpha)$ is fully determined by the sign of $\widehat{F}(x;\alpha)$. Since $\widehat{F}(x;1)=0$ for $0 \leq x \leq 1$ at the endpoint $\alpha=1$, it suffices to show that $\widehat{F}(x;\alpha)$ is monotonically non-decreasing with respect to $\alpha$, particularly: 
    \begin{equation}
        \label{eq:RenyiBinEntUB-0leAle1-partialHatF-cond}
        \forall x\in[0,1], \;\; \forall \alpha \in (0,1), \quad \frac{\partial}{\partial\alpha} \widehat{F}(x;\alpha) \geq 0. 
    \end{equation}
    Combining \Cref{eq:RenyiBinEntUB-0leAle1-partialHatF-cond} with the evaluation $\widehat{F}(x;1)$, we conclude that $\frac{\partial}{\partial x} \widehat{R}(x;\alpha) \leq 0$ throughout the same interval. Hence, $\widehat{R}(x;\alpha)$ is monotonically non-increasing on $x\in[0,1]$ for each fixed $\alpha\in(0,1)$, leading to the desired upper bound: 
    \[ \forall \alpha\in(0,1), \;\; \forall x\in[0,1], \quad \widehat{R}(x;\alpha) \leq \widehat{R}(0;\alpha) \leq (1-\alpha) \cdot \ln(2)^{1-\frac{\alpha}{2}}.\]

    \vspace{1em}
    To finish the proof by establishing \Cref{eq:RenyiBinEntUB-0leAle1-partialHatF-cond}, we analyze the first-order derivative of $\widehat{F}(x;\alpha)$ with respect to $\alpha$:
    \[ \rbra*{ (1-x)^\alpha+(1+x)^\alpha }^2 (1-x^2)^{1-\alpha} \cdot \frac{\partial}{\partial\alpha} \widehat{F}(x;\alpha) = J_1(x;\alpha) + J_2(x).\]
    Here, the functions $J_1(x;\alpha)$ and $J_2(x)$ are defined as follows: 
    \begin{subequations}
        \label{eq:RenyiBinEntUB-0leAle1-J1J2}
        \begin{align}
            J_1(x;\alpha) \coloneqq~& (1+x^2) \ln\rbra*{\frac{2}{1+x^2}} \ln\rbra*{\frac{1+x}{1-x}} -x (1+x)^{1+\alpha} (1-x)^{1-\alpha} \ln\rbra*{\frac{2}{1+x}}\\
            &-x (1+x)^{1-\alpha} (1-x)^{1+\alpha} \ln\rbra*{\frac{2}{1-x}}\\
            J_2(x) \coloneqq~& x (1-x^2) \ln\rbra*{\frac{1-x^2}{4}} + (1+x^2) \ln\rbra*{\frac{2}{1+x^2}} \ln\rbra*{\frac{1+x}{1-x}}.
        \end{align}
    \end{subequations}

    Noting that the sign of $J_1(x;\alpha) + J_2(x)$ coincides with the sign of $\frac{\partial}{\partial\alpha} \widehat{F}(x;\alpha)$, we see that establishing \Cref{eq:RenyiBinEntUB-0leAle1-partialHatF-cond} reduces to proving that both $J_1(x;\alpha)$ and $J_2(x)$ are non-negative for $0<\alpha<1$ and $0\leq x \leq 1$. 

    We now proceed to show that $J_1(x;\alpha)\geq 0$ over this interval.    
    A direct calculation reveals that the second derivative $\frac{\partial^2}{\partial\alpha^2} J_1(x;\alpha)$ is non-positive throughout the interval, as all terms in its expression are easily verified to be non-negative: 
    \[ \frac{\partial^2}{\partial\alpha^2} J_1(x;\alpha) = -x \rbra*{1-x^2}^{1-\alpha} \ln\rbra*{\frac{1+x}{1-x}}^2 \rbra*{ (1-x)^{2\alpha} \ln\rbra[\bigg]{\frac{2}{1-x}} + (1+x)^{2\alpha} \ln\rbra[\bigg]{\frac{2}{1+x}} } \leq 0. \]
    
    Since $J_1(x;\alpha)$ is concave in $\alpha$ on the interval $(0,1)$, it suffices, in order to show that $J_1(x;\alpha)\geq 0$, to evaluate the function at the endpoints and verify that these values are non-negative. In particular, these endpoint evaluations are as follows: 
    \begin{subequations}
        \label{eq:RenyiBinEntUB-0leAle1-J1-endpoints}
        \begin{align}
            J_1(x;0) =~& (1+x^2) \ln\rbra*{\frac{2}{1+x^2}} \ln\rbra*{\frac{1+x}{1-x}} + x (1-x^2) \ln\rbra*{\frac{4}{1-x^2}},\\
            J_1(x;1) =~& (1+x^2) \ln\rbra*{\frac{2}{1+x^2}} \ln\rbra*{\frac{1+x}{1-x}}\\
            &-x \rbra*{ (1-x)^2 \ln\rbra*{\frac{2}{1-x}} + (1+x)^2 \ln\rbra*{\frac{2}{1+x}}}.
        \end{align}
    \end{subequations}

    We can verify that both $J_1(x;0)$ and $J_1(x;1)$ are indeed non-negative for $0\leq x \leq 1$, as stated in \Cref{thmitem:RenyiBinEntUB-0leAle1-J1zero,thmitem:RenyiBinEntUB-0leAle1-J1one} of \Cref{fact:RenyiBinEntUB-0leAle1-J1J2}. These facts imply $J_1(x;\alpha) \geq 0$ for all $\alpha\in(0,1)$ and $x\in[0,1]$. Similarly, we can prove that $J_2(x)$ is also non-negative on the same interval, as detailed in \Cref{fact:RenyiBinEntUB-0leAle1-J1J2}\ref{thmitem:RenyiBinEntUB-0leAle1-J2}. 
    The proof of \Cref{fact:RenyiBinEntUB-0leAle1-J1J2} is provided in \Cref{sec:omitted-proofs-new-binary-entropies-bounds}. 
    \begin{restatable}{fact}{RenyiBinEntUBZeroleAleOneJone}
    \label{fact:RenyiBinEntUB-0leAle1-J1J2}
        The functions $J_1(x;\alpha)$ and $J_2(x)$, as defined in \Cref{eq:RenyiBinEntUB-0leAle1-J1J2}, satisfy:
        \begin{enumerate}[label={\upshape(\arabic*)}, topsep=0.33em, itemsep=0.33em, parsep=0.33em]
            \item $\forall x \in [0,1], \quad J_1(x;0) \geq 0.$ \label{thmitem:RenyiBinEntUB-0leAle1-J1zero}
            \item $\forall x \in [0,1], \quad J_1(x;1) \geq 0.$
            \label{thmitem:RenyiBinEntUB-0leAle1-J1one}
            \item $\forall x \in [0,1], \quad J_2(x) \geq 0.$
            \label{thmitem:RenyiBinEntUB-0leAle1-J2}
        \end{enumerate}
    \end{restatable}

    Finally, noting that both $J_1(x;\alpha)$ and $J_2(x)$ are non-negative for all $0<\alpha<1$ and $0\leq x \leq 1$, it follows that $\frac{\partial}{\partial\alpha} \widehat{F}(x;\alpha) \geq 0$ throughout the same interval. This establishes \Cref{eq:RenyiBinEntUB-0leAle1-partialHatF-cond} as desired and completes the proof. 
\end{proof}

\subsubsection{The cases of \texorpdfstring{$\alpha \geq 2$}{}}

\begin{theorem}[$\alpha$-R\'enyi binary entropy lower bound when $\alpha \geq 2$]
    \label{thm:Renyi-binary-entropy-lower-bound-Age2}
    The following holds:
    \[\forall \alpha \geq 2, \;\; \forall x\in[0,1], \quad \frac{\alpha}{2(\alpha-1)} \cdot \Haa{2}(x) \leq \Halpha(x). \] 
\end{theorem}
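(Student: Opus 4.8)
The plan is to reduce the claimed entropy inequality to an elementary inequality between power sums. Writing $P_\beta(x) \coloneqq x^\beta + (1-x)^\beta$, recall from the definitions that $\Haa{2}(x) = -\ln P_2(x)$ and, since $1-\alpha < 0$ for $\alpha \ge 2$, that $\Halpha(x) = -\tfrac{1}{\alpha-1}\ln P_\alpha(x)$. Substituting these into the target $\frac{\alpha}{2(\alpha-1)}\Haa{2}(x) \le \Halpha(x)$ and multiplying through by $\alpha - 1 > 0$, the claim becomes $-\tfrac{\alpha}{2}\ln P_2(x) \le -\ln P_\alpha(x)$, i.e.\ $\ln P_\alpha(x) \le \tfrac{\alpha}{2}\ln P_2(x)$. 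Exponentiating (a monotone step), this is equivalent to the single clean inequality
\[ \forall \alpha \ge 2,\ \forall x\in[0,1], \qquad P_\alpha(x) \le P_2(x)^{\alpha/2}. \]

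I would then prove this power-sum inequality directly. Since $\alpha \ge 2$ gives $2/\alpha \le 1$, the map $t \mapsto t^{2/\alpha}$ is concave on $[0,\infty)$ and vanishes at $t=0$, hence is subadditive, i.e.\ $f(a+b) \le f(a) + f(b)$ for all $a,b \ge 0$. Applying this with $a = x^\alpha$ and $b = (1-x)^\alpha$ yields $\rbra*{x^\alpha + (1-x)^\alpha}^{2/\alpha} \le x^2 + (1-x)^2$, and raising both sides to the power $\alpha/2 > 0$ gives exactly $P_\alpha(x) \le P_2(x)^{\alpha/2}$. Equivalently, this is nothing more than the monotonicity of the $\ell^p$-norm of the nonnegative vector $(x,1-x)$ in $p$, specialized to $2 \le \alpha$; one could invoke the norm-monotonicity that underlies \Cref{lemma:Renyi-monotonicity}, but the concavity argument is fully self-contained and avoids circularity (that lemma gives the \emph{opposite}-direction bound $\Halpha \le \Haa{2}$).

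Finally, I would record the boundary and limiting behavior. At $x \in \cbra{0,1}$ both sides vanish, and at $\alpha = 2$ the inequality is an equality, so these cases are automatic. For the endpoint $\alpha = \infty$ the prefactor $\frac{\alpha}{2(\alpha-1)}$ tends to $\frac{1}{2}$ and $\Halpha \to \Hmin$, so the statement degenerates to $\Haa{2}(x) \le 2\,\Hmin(x)$, which is precisely \Cref{prop:binary-min-entropy-lower-bound}; thus the $\alpha=\infty$ case is consistent with, and indeed recovered from, a known bound.

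As for difficulty, in contrast to the $0<\alpha<2$ upper bound there is no genuine analytic obstacle here. The only step requiring care is the sign bookkeeping in the reduction: tracking that $1-\alpha<0$, that $\ln P_2(x) \le 0$, and that multiplying by $\alpha-1>0$ and exponentiating each preserve the inequality direction. Once the problem is correctly reduced to $P_\alpha(x) \le P_2(x)^{\alpha/2}$, the concavity/subadditivity argument closes it in one line.
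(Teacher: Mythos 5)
Your proof is correct, and it takes a genuinely different---and considerably more elementary---route than the paper's. Your reduction is sound: since $(\alpha-1)\Halpha(x) = -\ln\rbra*{x^\alpha+(1-x)^\alpha}$ and $\Haa{2}(x) = -\ln\rbra*{x^2+(1-x)^2}$, the claim is equivalent to $x^\alpha+(1-x)^\alpha \le \rbra*{x^2+(1-x)^2}^{\alpha/2}$, and your subadditivity argument (a concave function on $[0,\infty)$ vanishing at $0$ is subadditive, applied to $t\mapsto t^{2/\alpha}$) proves this in one line; it is exactly monotonicity of $\norm{(x,1-x)}_p$ in $p$. The paper proves the same inequality in the overlap parametrization instead: via \Cref{thm:pureQJType-eq-BinEntropy} it recasts the claim as non-negativity of $F(x;\alpha) = \frac{\alpha}{2}\rbra*{\ln(2)+\ln(1+x^2)} - \ln\rbra*{(1-x)^\alpha+(1+x)^\alpha}$, where $x$ is the pure-state overlap, and runs a two-stage derivative analysis (\Cref{lemma:QJR2-vs-QJRalpha-Age2}): the sign of $\partial F/\partial x$ is controlled by an auxiliary function $G(x;\alpha)$, which is shown to be non-increasing in $\alpha$ with $G(x;2)=0$, so $F$ is non-increasing in $x$ and $F(x;\alpha) \ge F(1;\alpha) = 0$. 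The mathematical content is identical ($F\ge 0$ becomes your power-sum inequality under the substitution $x \mapsto \frac{1-x}{2}$), but your argument buys brevity and transparency, exposing the bound as a standard norm-monotonicity fact, whereas the paper's calculus template has the advantage of uniformity with the genuinely harder regime $0<\alpha<2$ (\Cref{thm:Renyi-binary-entropy-upper-bound-0leAle2}), where no such soft argument is available. Your handling of $\alpha=\infty$ (degenerating to \Cref{prop:binary-min-entropy-lower-bound}, or equivalently obtained by letting $\alpha\to\infty$ in the finite-order inequality) matches how the paper treats this endpoint in \Cref{thm:Rank2RenyiQEA-BQPhard-Age2}. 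The only imprecision is your aside that norm monotonicity ``underlies'' \Cref{lemma:Renyi-monotonicity}: monotonicity of the R\'enyi \emph{entropy} in the order is a different statement with a different standard proof, but since your concavity argument is self-contained, nothing in your proof depends on that remark.
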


To establish \Cref{thm:Renyi-binary-entropy-lower-bound-Age2}, we utilize the correspondence between $\QJRalpha$ for pure states and the $\alpha$-R\'enyi binary entropy (\Cref{thm:pureQJType-eq-BinEntropy}\ref{thmitem:pureQJR-eq-RenyiBinE}). It therefore suffices to prove the following lemma:

\begin{lemma}[$\QJR{2}$ vs.~$\QJRalpha$ for $\alpha \geq 2$]
    \label{lemma:QJR2-vs-QJRalpha-Age2}
    For any pure states $\ket{\psi_0}$ and $\ket{\psi_1}$, it holds that:
    \[ \forall \alpha \geq 2, \quad \frac{\alpha}{2(\alpha-1)} \cdot \QJR{2}(\ketbra{\psi_0}{\psi_0},\ketbra{\psi_1}{\psi_1}) \leq \QJRalpha(\ketbra{\psi_0}{\psi_0},\ketbra{\psi_1}{\psi_1}). \] 
\end{lemma}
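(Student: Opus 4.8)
The plan is to translate the claimed divergence inequality into an elementary inequality about real powers, using the closed-form expression for the trace of powers of the rank-$2$ average state. First, set $z \coloneqq \abs{\innerprod{\psi_0}{\psi_1}} \in [0,1]$ and write $\rho \coloneqq \frac{1}{2}\rbra*{\ketbra{\psi_0}{\psi_0} + \ketbra{\psi_1}{\psi_1}}$. Following the reasoning recorded just before \Cref{lemma:rank-two-state-trace}, since any pure state has zero R\'enyi entropy we have $\QJRalpha(\ketbra{\psi_0}{\psi_0},\ketbra{\psi_1}{\psi_1}) = \Salpha(\rho)$ and $\QJR{2}(\ketbra{\psi_0}{\psi_0},\ketbra{\psi_1}{\psi_1}) = \Saa{2}(\rho)$. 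Invoking \Cref{eq:rank-two-state-trace} from the proof of \Cref{lemma:rank-two-state-trace} gives the closed forms $\Tr(\rho^\alpha) = 2^{-\alpha}\rbra*{(1-z)^\alpha + (1+z)^\alpha}$ and $\Tr(\rho^2) = (1+z^2)/2$, so that
\[\Salpha(\rho) = \frac{\alpha\ln 2 - \ln\rbra*{(1-z)^\alpha + (1+z)^\alpha}}{\alpha - 1} \qquad\text{and}\qquad \Saa{2}(\rho) = \ln 2 - \ln(1+z^2).\]

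Second, I would substitute these expressions into the target inequality and clear denominators. Multiplying $\frac{\alpha}{2(\alpha-1)}\Saa{2}(\rho) \leq \Salpha(\rho)$ through by $\alpha - 1 > 0$, collecting the $\ln 2$ terms, and exponentiating shows that the claim is equivalent to the single power inequality
\[(1-z)^\alpha + (1+z)^\alpha \leq \rbra*{2(1+z^2)}^{\alpha/2}.\]
Here it is crucial to track that $1-\alpha < 0$ for $\alpha \geq 2$, which flips the inequality direction correctly and keeps both entropies non-negative on $z\in[0,1]$.

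Third, I would prove this power inequality by a superadditivity argument. Writing $u \coloneqq (1-z)^2 \geq 0$, $v \coloneqq (1+z)^2 \geq 0$, and $\beta \coloneqq \alpha/2 \geq 1$, and noting that $u + v = 2(1+z^2)$, the inequality becomes the superadditivity of the $\beta$-th power, $u^\beta + v^\beta \leq (u+v)^\beta$. This holds for every $\beta \geq 1$: since $u \leq u+v$ and $\beta - 1 \geq 0$, monotonicity of $t\mapsto t^{\beta-1}$ gives $u^\beta = u\cdot u^{\beta-1} \leq u\,(u+v)^{\beta-1}$, and symmetrically $v^\beta \leq v\,(u+v)^{\beta-1}$; adding the two yields $u^\beta + v^\beta \leq (u+v)(u+v)^{\beta-1} = (u+v)^\beta$. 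This settles the finite range $\alpha \in [2,\infty)$. The endpoint $\alpha = \infty$ follows by taking $\alpha \to \infty$, where $\frac{\alpha}{2(\alpha-1)} \to \tfrac12$, recovering $\Saa{2}(\rho) \leq 2\Smin(\rho)$, which is exactly \Cref{prop:binary-min-entropy-lower-bound}.

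The computation is essentially routine once the closed forms are in place, so the only real content is recognizing that the logarithmic R\'enyi inequality collapses to the superadditivity $u^\beta + v^\beta \le (u+v)^\beta$ for $\beta\ge 1$. I expect no genuine obstacle beyond carefully verifying the reduction step and the sign bookkeeping around $\alpha - 1$.
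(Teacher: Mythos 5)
Your proof is correct, and at the key step it takes a genuinely different---and more elementary---route than the paper. Both arguments begin identically: using the observation before \Cref{lemma:rank-two-state-trace} that pure states have zero R\'enyi entropy, together with \Cref{eq:rank-two-state-trace}, the claim reduces (after multiplying by $\alpha-1>0$ and exponentiating) to the single power inequality $(1-z)^\alpha + (1+z)^\alpha \leq \rbra*{2(1+z^2)}^{\alpha/2}$ with $z=\abs*{\innerprod{\psi_0}{\psi_1}}$; the paper encodes exactly this statement as the non-negativity of its function $F(x;\alpha)$. The difference lies in how this inequality is established. The paper runs a two-stage calculus argument: it computes $\frac{\partial}{\partial x}F(x;\alpha)$, shows that its sign is governed by an auxiliary function $G(x;\alpha)$, proves $\frac{\partial}{\partial \alpha}G(x;\alpha)\leq 0$ so that $G(x;\alpha)\leq G(x;2)=0$, and then concludes $F(x;\alpha)\geq F(1;\alpha)=0$ from monotonicity in $x$. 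You instead observe that with $u=(1-z)^2$, $v=(1+z)^2$, and $\beta=\alpha/2\geq 1$ one has $u+v=2(1+z^2)$, so the inequality is precisely the superadditivity $u^\beta+v^\beta\leq (u+v)^\beta$, which you prove in one line via $u^\beta\leq u\,(u+v)^{\beta-1}$ and its symmetric counterpart. This is shorter, avoids all derivative computations, and makes the equality cases ($\alpha=2$, or $z=1$) transparent. What the paper's heavier monotonicity machinery buys is uniformity with the harder regime $0<\alpha<2$ (\Cref{lemma:QJR2-vs-QJRalpha-0leAle2}), where no such algebraic collapse is available and a similar derivative analysis appears unavoidable; for $\alpha\geq 2$ your argument is strictly simpler. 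Your treatment of $\alpha=\infty$ by taking limits, recovering \Cref{prop:binary-min-entropy-lower-bound}, also matches how the paper itself handles that endpoint (it does so downstream, in \Cref{thm:Rank2RenyiQEA-BQPhard-Age2}, rather than inside this lemma).
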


\begin{proof}
    Following \Cref{lemma:rank-two-state-trace}, it holds that 
    \begin{subequations}
        \label{eq:QJTalpha-expression}
        \begin{align}
            \QJRalpha(\ketbra{\psi_0}{\psi_0},\ketbra{\psi_1}{\psi_1}) &= \Salpha\rbra*{ \frac{\ketbra{\psi_0}{\psi_0}+\ketbra{\psi_1}{\psi_1}}{2} }\\
            &= \frac{\ln\rbra*{ (1-\abs{\innerprod{\psi_0}{\psi_1}})^\alpha + (1+\abs{\innerprod{\psi_0}{\psi_1}})^\alpha } - \alpha \ln(2)}{1-\alpha}
        \end{align}
    \end{subequations}

    Using the identity $\QJR{2}(\ketbra{\psi_0}{\psi_0},\ketbra{\psi_1}{\psi_1}) = \ln(2) - \ln(1+\abs{\innerprod{\psi_0}{\psi_1}}^2)$, which follows from direct calculation, and combining it with \Cref{eq:QJTalpha-expression}, it suffices to prove that the function $F(x;\alpha)$ is non-negative for $\alpha \geq 2$ and $0 \leq x \leq 1$: 
    \begin{align*}
        F(\abs{\innerprod{\psi_0}{\psi_1}};\alpha) =&~ (\alpha-1) \QJRalpha(\ketbra{\psi_0}{\psi_0},\ketbra{\psi_1}{\psi_1}) - \frac{\alpha}{2} \cdot \QJR{2}(\ketbra{\psi_0}{\psi_0},\ketbra{\psi_1}{\psi_1}) \geq 0,\\
        \text{where} \;\; F(x;\alpha) \coloneqq&~ \frac{\alpha}{2} \cdot (\ln(2) + \ln(1+x^2)) - \ln\rbra*{ (1-x)^\alpha + (1+x)^\alpha }.
    \end{align*}

    To this end, we compute the derivative of $F(x;\alpha)$ with respect to $x$: 
    \begin{align*}
        &\frac{(1+x^2)((1-x)^{\alpha}+(1+x)^{\alpha})}{\alpha} \cdot \frac{\partial}{\partial x} F(x;\alpha) \\
        =~& x\rbra*{ (1-x)^{\alpha}+(1+x)^{\alpha} } - \rbra*{(1+x)^{\alpha-1} - (1-x)^{\alpha-1}} \cdot (1+x^2) \coloneq G(x;\alpha).
    \end{align*}

    Since $1+x^2 \geq 0$ and $(1-x)^{\alpha}+(1+x)^{\alpha}) \geq 0$ for $x\in[0,1]$ and $\alpha \geq 2$, the sign of $\frac{\partial F}{\partial x}$ is fully determined by the sign of $G(x;\alpha)$. Noting that $(1\pm x)^{\alpha -1} \geq 0$ for $\alpha \geq 2$, together with $\ln(1-x) \leq 0$, $\ln(1+x) \geq 0$, and $1\pm x \geq 0$ for $x\in[0,1]$, a direct calculation shows that
    \[ \frac{\partial}{\partial \alpha} G(x;\alpha) = (1+x) (1-x)^{\alpha-1} \ln(1-x) - (1-x) (1+x)^{\alpha-1} \ln(x+1) \leq 0.\]

    As a result, $G(x;\alpha)$ is monotonically non-increasing on $\alpha \geq 2$ for any fixed $x\in [0,1]$, which implies $G(x;\alpha) \leq G(x;2) = 0$. Consequently, $\frac{\partial}{\partial x} F(x;\alpha) \leq 0$, and thus $F(x;\alpha)$ is monotonically non-increasing on $x\in[0,1]$ for any fixed $\alpha \geq 2$. We therefore conclude the proof by noting that $F(x;\alpha) \geq F(1;\alpha) = 0$, as desired. 
\end{proof}

\subsection{New bounds for \texorpdfstring{$q$}{}-Tsallis binary entropy}
\label{subsec:Tsallis-binary-entropy-bounds}

In this subsection, we demonstrate the proof of \Cref{thm:Tsallis-binary-entropy-bounds}. 

\subsubsection{The cases of \texorpdfstring{$0< q \leq 2$}{}}

\begin{theorem}[$q$-Tsallis binary entropy upper bound for $0<q\leq 2$]
    \label{thm:Tsallis-binary-entropy-upper-bound-0leQle2}
    The following holds:
    \[ \forall q\in(0,2], \quad \forall x\in[0,1], \;\; \Hq(x) \leq 2^{\frac{q}{2}} \Hq\rbra*{\frac{1}{2}} \cdot \rbra*{\Hqq{2}(x)}^{\frac{q}{2}}. \]
\end{theorem}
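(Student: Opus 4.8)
The plan is to reduce the claim, via the correspondence in \Cref{thm:pureQJType-eq-BinEntropy}\ref{thmitem:pureQJT-eq-TsallisBinE}, to a coefficient-wise comparison of two power series. Since both $\Hq$ and $\Hqq{2}$ are symmetric about $x=1/2$, I would assume $x\in[0,1/2]$ without loss of generality and pick pure states with $c\coloneqq\abs{\innerprod{\psi_0}{\psi_1}}=1-2x\in[0,1]$, so that $\Hq(x)=\QJTq$ and $\Hqq{2}(x)=\QJT{2}$. Using \Cref{lemma:rank-two-state-trace} together with \Cref{prop:generalized-binomial-coeffs}\ref{thmitem:generalizd-binomial-identity}, I would write $\QJTq=\Hq(1/2)\rbra*{1-\sum_{k\geq1}w_k c^{2k}}$ with $w_k\coloneqq\binom{q}{2k}/(2^{q-1}-1)$, and $\QJT{2}=\frac{1-c^2}{2}$. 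After cancelling $\Hq(1/2)>0$, the target bound reduces to
\[ \sum_{k\geq1}w_k\,y^k \;\geq\; 1-(1-y)^{q/2}, \qquad y\coloneqq c^2\in[0,1]. \]

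Next I would verify that both sides are genuine convex combinations. By \Cref{prop:sign-conds-binomial-coeffs}, for $q\in(0,1)$ every $\binom{q}{2k}$ is negative (and $2^{q-1}-1<0$), while for $q\in(1,2)$ every $\binom{q}{2k}$ is positive (and $2^{q-1}-1>0$); in both cases $w_k>0$, and $\sum_{k\geq1}w_k=1$ because $\sum_{k\geq1}\binom{q}{2k}=2^{q-1}-1$ (the $x\to1$ limit of \Cref{prop:generalized-binomial-coeffs}\ref{thmitem:generalizd-binomial-identity}). Expanding $1-(1-y)^{q/2}=\sum_{k\geq1}b_k\,y^k$ with $b_k\coloneqq(-1)^{k+1}\binom{q/2}{k}$, one has $b_k>0$ for $0<q/2<1$ and $\sum_{k\geq1}b_k=1$. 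Setting $a_k\coloneqq w_k-b_k$, the target becomes $\sum_{k\geq1}a_k\,y^k\geq0$ on $[0,1]$, where $\sum_{k\geq1}a_k=0$. By Abel summation this equals $(1-y)\sum_{m\geq1}S_m\,y^m$ with $S_m\coloneqq\sum_{k\leq m}a_k$, so it suffices to show $S_m\geq0$ for every $m$.

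The crux is therefore to prove that $\cbra{a_k}$ changes sign exactly once, from nonnegative to nonpositive; together with $\sum_k a_k=0$ this forces every $S_m$ to be nonnegative. This single-crossing property follows once the ratio $w_k/b_k$ is monotonically non-increasing in $k$, which reduces to the monotonicity of $r_k\coloneqq\abs{\binom{q}{2k}}/\abs{\binom{q/2}{k}}$. The key computation is that the consecutive ratio telescopes: using $\binom{q}{2k+2}/\binom{q}{2k}=\frac{(q-2k)(q-2k-1)}{(2k+1)(2k+2)}$ and $\binom{q/2}{k+1}/\binom{q/2}{k}=\frac{q/2-k}{k+1}$, and resolving the signs via $q<2\leq 2k$ and $q/2<1\leq k$, one finds
\[ \frac{r_{k+1}}{r_k}=\frac{2k+1-q}{2k+1}=1-\frac{q}{2k+1}\in(0,1)\quad\text{for all }q>0,\;k\geq1, \]
so $r_k$ is strictly decreasing and the single crossing follows; note that $\sum_k a_k=0$ automatically forces $a_1\geq0$ (otherwise $w_k<b_k$ for all $k$, contradicting $\sum_k a_k=0$), so no separate base case is needed.

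Finally I would handle the two boundary orders directly: $q=2$ gives equality, since there $w_1=b_1=1$ and all higher coefficients vanish; and $q=1$ is the Shannon case, which is precisely Topsøe's bound already recorded in \Cref{eq:binary-entropy-bounds-via-TsallisTwo}. I expect the main obstacle to lie not in any single estimate but in organizing the series comparison and the Abel/single-crossing argument cleanly; the one place where the calculation could have turned messy — the binomial-coefficient ratio $r_{k+1}/r_k$ — is exactly where it collapses to the simple identity above, which is what makes the whole approach go through uniformly in $q$.
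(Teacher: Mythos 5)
Your proposal is correct, but it follows a genuinely different route from the paper's proof. The paper (via \Cref{lemma:QJTq-leq-QJT2-0leQle2}) argues by calculus: after the same reduction through \Cref{thm:pureQJType-eq-BinEntropy}\ref{thmitem:pureQJT-eq-TsallisBinE}, it shows that $\abs{1-q}\cdot \QJT{q}/\QJT{2}^{q/2}$ is monotonically non-increasing in the overlap $x=\abs{\innerprod{\psi_0}{\psi_1}}$, by differentiating and reducing the sign of the derivative to that of $T(x;q)=(1+x)^q-(1-x)^q-2^qx$ (for $0<q<1$) or $U(x;q)=2^qx+(1-x)^q-(1+x)^q$ (for $1<q<2$), each handled by a second-derivative and endpoint analysis; the bound then follows by evaluating at $x=0$. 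You instead compare power-series coefficients in $y=\abs{\innerprod{\psi_0}{\psi_1}}^2$: writing $\QJT{q}=\Hq\rbra*{1/2}\rbra*{1-\sum_k w_k y^k}$ (your normalization of \Cref{lemma:rank-two-state-trace} is right, since $(q-1)\Hq\rbra*{1/2}=2^{1-q}\rbra*{2^{q-1}-1}$ gives $w_k=\binom{q}{2k}/(2^{q-1}-1)$ and $\sum_k w_k=1$) and $1-(1-y)^{q/2}=\sum_k b_k y^k$, you reduce the claim to $\sum_k(w_k-b_k)y^k\geq0$ and obtain it from a single-crossing property plus Abel summation. I verified the key steps: positivity of $w_k$ on both sides of $q=1$ (the signs of $\binom{q}{2k}$ and of $2^{q-1}-1$ flip together), the telescoping ratio $r_{k+1}/r_k=1-q/(2k+1)$ (the cancellation $(2k+2)(k-q/2)=(k+1)(2k-q)$ indeed makes it collapse), and the implication from single crossing with zero total sum to nonnegative partial sums; all are sound. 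Notably, your method is the same generalized-binomial-series technique the paper itself uses for the $q\geq 2$ regime (\Cref{lemma:QJT2-vs-QJTq-Qeq2}), so your argument treats $0<q<1$ and $1<q<2$ uniformly and effectively unifies the two regimes; what the paper's derivative argument buys in exchange is that it never has to worry about convergence or rearrangement of infinite series.

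Two small points to patch in a write-up. First, the parenthetical claim that $r_{k+1}/r_k=1-\frac{q}{2k+1}$ holds ``for all $q>0$, $k\geq1$'' overreaches: your sign resolution $\abs{(q-2k)(q-2k-1)}=(2k-q)(2k+1-q)$ and $\abs{q/2-k}=k-q/2$ uses $q<2\leq 2k$, so the identity as derived is justified only when $2k>q$ --- which is all you need here. Second, at $q=1$ the weights $w_k$ are of the form $0/0$, so the series argument genuinely excludes $q=1$; your fallback to the Shannon-case bound recorded in \Cref{eq:binary-entropy-bounds-via-TsallisTwo} (or, alternatively, continuity of both sides of the inequality in $q$) is necessary, and it does close that case.
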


It is worth noting that \Cref{thm:Tsallis-binary-entropy-upper-bound-0leQle2} improves the previous bound, 
\[ \forall q\in[1,2], \quad \forall x\in[0,1], \;\; \Hq(x) \leq \sqrt{2} \Hq(1/2) \cdot \Hqq{2}(x)^{1/2},\] 
which was established in~\cite[Lemma 4.9]{LW25}. To demonstrate \Cref{thm:Tsallis-binary-entropy-upper-bound-0leQle2}, we utilize the correspondence between \QJTq{} for pure states and the Tsallis $q$-binary entropy (\Cref{thm:pureQJType-eq-BinEntropy}\ref{thmitem:pureQJT-eq-TsallisBinE}). As a result, it remains to establish the following lemma: 

\begin{lemma}[$\QJT{2}$ vs.~$\QJTq$ for $0<q\leq 2$]
    \label{lemma:QJTq-leq-QJT2-0leQle2}
    For any pure states $\ket{\psi_0}$ and $\ket{\psi_1}$, it holds that
    \[\forall q\in(0,2], \quad \QJTq(\ketbra{\psi_0}{\psi_0},\ketbra{\psi_1}{\psi_1}) \leq 2^{\frac{q}{2}} \Hq\rbra*{\frac{1}{2}} \cdot \rbra*{\QJT{2}(\ketbra{\psi_0}{\psi_0},\ketbra{\psi_1}{\psi_1})}^{\frac{q}{2}}.\]
\end{lemma}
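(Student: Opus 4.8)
The plan is to reduce this operator inequality to a one-variable inequality about binary entropies and then prove the latter by a ratio-monotonicity argument. First I would set $t \coloneqq \abs{\innerprod{\psi_0}{\psi_1}} \in [0,1]$ and invoke \Cref{thm:pureQJType-eq-BinEntropy}\ref{thmitem:pureQJT-eq-TsallisBinE}, which gives $\QJTq(\ketbra{\psi_0}{\psi_0},\ketbra{\psi_1}{\psi_1}) = \Hq\rbra*{\frac{1-t}{2}}$; specializing to $q=2$ (equivalently, using \Cref{eq:rank-2-Tsallis-EQ-binary}) yields $\QJT{2}(\ketbra{\psi_0}{\psi_0},\ketbra{\psi_1}{\psi_1}) = \Hqq{2}\rbra*{\frac{1-t}{2}} = \frac{1-t^2}{2}$. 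Substituting these into the claimed bound and absorbing the factor $2^{q/2}$, the lemma becomes the scalar inequality
\[ \Hq\rbra*{\frac{1-t}{2}} \leq \Hq\rbra*{1/2}\cdot (1-t^2)^{q/2}, \qquad t\in[0,1],\ q\in(0,2]. \]

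The core of the argument is to show that the ratio $R(t;q) \coloneqq \Hq\rbra*{\frac{1-t}{2}} \big/ (1-t^2)^{q/2}$ is non-increasing in $t$ on $[0,1]$; since $R(0;q) = \Hq(1/2)$, this immediately gives $R(t;q) \leq R(0;q)$, which is exactly the displayed inequality. To this end I would differentiate $R$ in $t$ and clear the manifestly positive factors $(1-t^2)^{q/2-1}$ and the squared denominator, so that the sign of $\partial_t R$ is that of $H'(t)(1-t^2) + q t\, H(t)$, where $H(t) \coloneqq \Hq\rbra*{\frac{1-t}{2}} = \frac{2^q-(1-t)^q-(1+t)^q}{2^q(q-1)}$. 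The hard part will be the one genuinely delicate algebraic step: verifying that this expression collapses to the clean form
\[ H'(t)(1-t^2) + qt\,H(t) = \frac{q}{2^q(q-1)}\, g(t;q), \qquad g(t;q) \coloneqq (1-t)^q - (1+t)^q + 2^q t, \]
after the cross terms proportional to $t(1\mp t)^q$ cancel against the terms coming from $H'(t)(1-t^2)$.

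It then remains to determine the sign of $g(t;q)$ weighted by the prefactor $\frac{q}{2^q(q-1)}$. I would first record the boundary values $g(0;q)=g(1;q)=0$, and then compute $\partial_t^2 g(t;q) = q(q-1)\sbra*{(1-t)^{q-2}-(1+t)^{q-2}}$. For $t\in(0,1)$ and $q\in(0,2)$ the exponent $q-2$ is negative, so $(1-t)^{q-2}>(1+t)^{q-2}$ and the bracket is positive; hence $g(\cdot\,;q)$ is strictly convex when $q\in(1,2)$ and strictly concave when $q\in(0,1)$. Combined with $g(0;q)=g(1;q)=0$, this yields $g(t;q)\leq 0$ on $[0,1]$ for $q\in(1,2)$ and $g(t;q)\geq 0$ for $q\in(0,1)$. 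Since the prefactor $\frac{q}{2^q(q-1)}$ is positive precisely when $q\in(1,2)$ and negative when $q\in(0,1)$, in both regimes $\partial_t R \leq 0$, as required.

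Finally I would dispatch the two degenerate orders. At $q=2$ one checks $g(t;2)\equiv 0$, so the inequality holds with equality (consistent with $\QJT{2}$ appearing on both sides). At $q=1$ the prefactor is singular while $g(t;1)\equiv 0$, so I would instead obtain the bound by continuity in $q$, or directly from Topsøe's inequality $\H(x)\leq\sqrt{2}\,\H(1/2)\sqrt{\Hqq{2}(x)}$ already used elsewhere in the paper, which is exactly the $q=1$ instance. The main obstacle is thus purely the simplification to $g(t;q)$ together with the bookkeeping that matches its convexity sign to the sign of the prefactor; once that is in place the convexity argument closes the proof without any case-by-case estimation of transcendental terms.
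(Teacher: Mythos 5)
Your proposal is correct, and at the structural level it retraces the paper's proof: both reduce the lemma through \Cref{thm:pureQJType-eq-BinEntropy}\ref{thmitem:pureQJT-eq-TsallisBinE} to the scalar inequality $\Hq\rbra*{\frac{1-t}{2}} \leq \Hq\rbra*{1/2}\cdot(1-t^2)^{q/2}$ in $t=\abs{\innerprod{\psi_0}{\psi_1}}$, both study the resulting ratio as a function of $t$, and both reduce its monotonicity to the sign of the very same elementary function --- your $g(t;q)=(1-t)^q-(1+t)^q+2^qt$ is exactly $-T(x;q)$ in the paper's case $0<q<1$ and $+U(x;q)$ in its case $1<q<2$ --- with $q=1$ handled via Lin/Tops{\o}e and $q=2$ by direct equality in both write-ups. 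The genuine difference is the final sign-determination step. The paper runs two separate sub-cases and argues through the monotonicity of $\partial T/\partial x$ (resp.\ $\partial U/\partial x$) together with endpoint evaluations of these first derivatives; you instead use the single observation $g(0;q)=g(1;q)=0$ combined with $\partial_t^2 g=q(q-1)\sbra*{(1-t)^{q-2}-(1+t)^{q-2}}$, so that $g$ is concave for $q\in(0,1)$ and convex for $q\in(1,2)$ while the prefactor $\frac{q}{2^q(q-1)}$ flips sign in tandem, yielding $\partial_t R\leq 0$ in both regimes at once. Your version is not only more unified but also more robust: the paper's stated second-derivative signs are in fact off (for $0<q<1$ it asserts $\partial^2 T/\partial x^2\leq 0$, whereas $q(q-1)<0$ and $(1+x)^{q-2}-(1-x)^{q-2}<0$ make it $\geq 0$; its evaluation $\partial T/\partial x\big|_{x=1}=2^{q-1}(q-2)$ drops the term $q(1-x)^{q-1}$, which diverges as $x\to 1$ when $q<1$; and the claimed concavity of $U$ for $1<q<2$ is inconsistent with the claimed sign change of $\partial U/\partial x$ from negative at $x=0$ to positive at $x=1$). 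The paper's conclusions $T\leq 0$ and $U\leq 0$ are nonetheless true, and they hold precisely by the convexity-plus-vanishing-endpoints argument you give, so your proposal not only matches the paper's strategy but supplies a clean and correct justification for its key step.
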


\begin{proof}
    We start by noting that the equality holds for $q=2$ by direct calculation, and that the case $q=1$ was previously established in~\cite[Theorem 8]{Lin91}. It therefore suffices to prove the cases $0<q<1$ and $1<q<2$. 

    \vspace{1em}
    \noindent\textbf{The case $0<q<1$.}
    We first define the following functions: 
    \begin{subequations}
        \label{eq:Tsallis-0leQle1-func}
        \begin{align}
            F(|\innerprod{\psi_0}{\psi_1}|;q) &\coloneqq  (1-q) \cdot \frac{\QJTq(\ketbra{\psi_0}{\psi_0},\ketbra{\psi_1}{\psi_1})}{\QJT{2}(\ketbra{\psi_0}{\psi_0},\ketbra{\psi_1}{\psi_1})^{q/2}}\\
            &= 2^{-q/2} F_1(|\innerprod{\psi_0}{\psi_1}|;q) F_2(|\innerprod{\psi_0}{\psi_1}|,q),\\
            \text{where}\;\; F_1(x;q)&\coloneqq (1-x^2)^{-q/2} \;\; \text{and} \;\; F_2(x;q)\coloneqq (1+x)^q+(1-x)^q-2^q. 
        \end{align}
    \end{subequations}
    
    It is easy to verify that 
    \[\frac{\partial F_1(x;q)}{\partial x} = qx(1-x^2)^{-\frac{q}{2}-1} \quad \text{and} \quad \frac{\partial F_2(x;q)}{\partial x} = q\rbra*{(1+x)^{q-1}-(1-x)^{q-1}}.\] Using the chain rule, the following holds: 
    \begin{subequations}
        \label{eq:Tsallis-0leQle1-derivative}
        \begin{align}
            \frac{\partial}{\partial x} F(x;q) &= 2^{-\frac{q}{2}} \rbra*{ \frac{\partial F_1(x;q)}{\partial x} F_2(x;q) + \frac{\partial F_2(x;q)}{\partial x} F_1(x;q)} \\
            &= 2^{-\frac{q}{2}} q (1-x^2)^{-\frac{q}{2}-1}\underbrace{ \rbra*{x F_2(x;q) + (1-x^2) \rbra*{ (1+x)^{q-1} - (1-x)^{q-1} }}}_{T(x;q)}.  
        \end{align}
    \end{subequations}

    Since $2^{-\frac{q}{2}} q (1-x^2)^{-\frac{q}{2}-1} \geq 0$ for $0 \leq x \leq 1$ and $q>0$, the sign of $\frac{\partial F}{\partial x}$ is fully determined by the sign of $T(x;q)$. We deduce the following via a direct calculation: 
    \begin{subequations}
        \label{eq:Tsallis-0leqQleq1-sign}
        \begin{align}
            T(x;q) &= x \rbra*{(1+x)^q + (1-x)^q - 2^q} + (1-x^2)\rbra*{(1+x)^{q-1} - (1-x)^{q-1}}\\
            &= x \rbra*{(1+x)^q + (1-x)^q - 2^q} + (1-x) (1+x)^q - (1+x) (1-x)^q\\
            &= (1+x)^q - (1-x)^q - 2^q x. 
        \end{align}
    \end{subequations}
    Here, the second line owes to the identity $(1-x^2)(1 \pm x)^{q-1} = (1\mp x)(1\pm x)^q$. 

    Since $0<q<1$ and $1+x > 1-x > 0$, one can readily verify that
    \begin{align*}
        \frac{\partial}{\partial x} T(x;q) &= q \rbra*{(1+x)^{q-1} + (1-x)^{q-1}} - 2^q x,\\  
        \frac{\partial^2}{\partial x^2} T(x;q) &= q(q-1) \rbra*{(1+x)^{q-2} - (1-x)^{q-2}} \geq 0,
    \end{align*}
    because $q(q-1) < 0$ and $(1+x)^{q-2} \leq (1-x)^{q-2}$. Hence, $T(x;q)$ is convex on $x\in(0,1)$. Evaluating $T(x;q)$ at the endpoints, we obtain $T(0;q)=0$ and $T(1;q)=0$. Since $T(x;q)$ is continuous on $x\in[0,1]$ and convex on $x\in(0,1)$, it follows
    \[ \forall x\in[0,1], \quad T(x;q) \leq (1-x) T(0;q) + x T(1;q) = 0. \]
    Therefore,  $\frac{\partial F}{\partial x} \leq 0$ on this interval, so $F(x;q)$ is monotonically non-increasing on $x \in [0,1]$ for $0 < q < 1$. It follows that 
    \begin{equation}
        \label{eq:Tsallis-0leqQleq1-bound}
        \frac{\QJTq(\ketbra{\psi_0}{\psi_0},\ketbra{\psi_1}{\psi_1})}{\QJT{2}(\ketbra{\psi_0}{\psi_0},\ketbra{\psi_1}{\psi_1})^{q/2}} \leq \frac{F(0;q)}{1-q} = \frac{2^{-q/2}(2-2^q)}{1-q} = 2^{q/2} \Hq\rbra*{\frac{1}{2}}.
    \end{equation}

    \vspace{1em}
    \noindent\textbf{The case $1<q<2$.}
    Similar to \Cref{eq:Tsallis-0leQle1-func}, we define the following function, where $G_1(x;q) \coloneqq F_1(x,q)$ and $G_2(x,q) \coloneqq -F_2(x,q)$: 
    \begin{align*}
        G(|\innerprod{\psi_0}{\psi_1}|;q) &\coloneqq  (q-1) \cdot \frac{\QJTq(\ketbra{\psi_0}{\psi_0},\ketbra{\psi_1}{\psi_1})}{\QJT{2}(\ketbra{\psi_0}{\psi_0},\ketbra{\psi_1}{\psi_1})^{q/2}}\\
        &= 2^{-q/2} G_1(|\innerprod{\psi_0}{\psi_1}|;q) G_2(|\innerprod{\psi_0}{\psi_1}|,q).
    \end{align*}
    It is straightforward to verify that $\frac{\partial G_2(x;q)}{\partial x} = q\rbra*{(1-x)^{q-1}-(1+x)^{q-1}}$. Since $\frac{\partial G_1(x;q)}{\partial x} = \frac{\partial F_1(x;q)}{\partial x}$, analogous to \Cref{eq:Tsallis-0leQle1-derivative}, we have derived the following:
    \[  \frac{\partial}{\partial x} G(x;q) = 2^{-\frac{q}{2}} q (1-x^2)^{-\frac{q}{2}-1}\underbrace{ \rbra*{x G_2(x;q) + (1-x^2) \rbra*{ (1-x)^{q-1} - (1+x)^{q-1} }}}_{U(x;q)}.  \]
    Consequently the sign of $\frac{\partial G}{\partial x}$ is also fully determined by the sign of $U(x;q)$. 
    
    Similar to \Cref{eq:Tsallis-0leqQleq1-sign}, we have $U(x;q) = 2^q x + (1-x)^q - (1+x)^q$. Noting that $1<q<2$ and $1-x < 1+x$, it is easy to verify that
    \begin{align*}
        \frac{\partial}{\partial x} U(x;q) &= 2^q - q\rbra[\big]{(1-x)^{q-1} + (1+x)^{q-1}},\\
        \frac{\partial^2}{\partial x^2} U(x;q) &= q(q-1)\rbra[\big]{(1-x)^{q-2} - (1+x)^{q-2}} \geq 0, 
    \end{align*}
    because now $q(q-1) > 0$ and $(1-x)^{q-2} \geq (1+x)^{q-2}$. 
    Thus, $\frac{\partial U}{\partial x}$ is monotonically non-decreasing on $x \in [0,1]$ for any fixed $q\in(1,2)$. Evaluating $\frac{\partial U}{\partial x}$ at the endpoints, we obtain 
    \[\frac{\partial U}{\partial x}\bigg|_{x=0} = 2^q-2q < 0 \quad \text{and} \quad \frac{\partial U}{\partial x}\bigg|_{x=1} 2^{q-1}(2-q) > 0,\] 
    which implies that $\frac{\partial U}{\partial x} = 0$ has a root in $x\in(0,1)$. Therefore, it holds that 
    \[ \forall q\in(1,2), \quad \max_{x\in[0,1]} U(x;q) \leq \max\cbra{U(0;q),U(1;q)} = 0,\] 
    and thus $\frac{\partial G}{\partial x} \leq 0$. As a consequence, we know that $G(x;q)$ is monotonically non-increasing on $x\in[0,1]$ for $1<q<2$, and so \Cref{eq:Tsallis-0leqQleq1-bound} also holds for $1<q<2$. 
\end{proof}

\subsubsection{The cases of \texorpdfstring{$q \geq 2$}{}}
\begin{theorem}[$q$-Tsallis binary entropy bounds for $q \geq 2$]
    \label{thm:improved-Tsallis-binary-entropy-lower-bound}
    The following holds:
    \begin{enumerate}[label={\upshape(\arabic*)}]
        \item \label{thmitem:Tsallis-2leQle3} $\displaystyle \forall q\in[2,3], \;\; \forall x\in[0,1], \quad \frac{q}{2(q-1)} \cdot \Hqq{2}(x) \leq \Hq(x) \leq 2\Hq\rbra*{\frac{1}{2}} \cdot \Hqq{2}(x)$. 
        \item \label{thmitem:Tsallis-Qge3} $\displaystyle \forall q \geq 3, \;\; \forall x\in[0,1], \quad 2\Hq\rbra*{\frac{1}{2}} \cdot \Hqq{2}(x) \leq \Hq(x) \leq \frac{q}{2(q-1)} \cdot \Hqq{2}(x)$. 
    \end{enumerate}
\end{theorem}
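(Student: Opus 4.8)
The plan is to prove all four inequalities by working directly with the elementary functions $\Hq(x)$, $\Hqq{2}(x)=2x(1-x)$ and $\Hq\rbra*{1/2}$, using the symmetry $\Hq(x)=\Hq(1-x)$ (and likewise for $\Hqq{2}$) to restrict to $x\in[0,1/2]$. The guiding observation is that the two constants appearing are exactly the boundary values of the ratio $\Hq(x)/\Hqq{2}(x)$: it equals $2\Hq\rbra*{1/2}$ at $x=1/2$ and tends to $\frac{q}{2(q-1)}$ as $x\to0$. Each inequality therefore asserts that this ratio never leaves the interval between its endpoint values, and it is convenient to split the four bounds into the two involving $\frac{q}{2(q-1)}$ and the two involving $2\Hq\rbra*{1/2}$. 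I also record that the relevant auxiliary difference vanishes identically at $q\in\{2,3\}$, which is why the two cases meet continuously at $q=3$.

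For the $\frac{q}{2(q-1)}$ bounds I would introduce $f(x)\coloneqq x^q+(1-x)^q+qx(1-x)-1$. Rewriting $\Hqq{2}(x)=2x(1-x)$ shows that $f(x)\ge0$ is equivalent to the upper bound $\Hq(x)\le\frac{q}{2(q-1)}\Hqq{2}(x)$ of part~(2), and $f(x)\le0$ to the lower bound of part~(1). Since $f(0)=f\rbra*{1/2}=0$ and $f'(x)=q\,g(x)$ with $g(x)\coloneqq x^{q-1}-(1-x)^{q-1}+1-2x$ satisfying $g(0)=g\rbra*{1/2}=0$, it suffices to control $g$. A direct computation gives $g''(x)=(q-1)(q-2)\rbra*{x^{q-3}-(1-x)^{q-3}}$, whose sign on $(0,1/2)$ is exactly that of $q-3$ (because $x<1-x$ there): $g$ is convex for $q\in(2,3)$ and concave for $q>3$. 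A convex (resp.\ concave) function vanishing at both endpoints is nonpositive (resp.\ nonnegative) on $[0,1/2]$, so $g\le0$ for $q\in[2,3]$ and $g\ge0$ for $q\ge3$; monotonicity of $f$ then yields $f\le0$ and $f\ge0$ on $[0,1/2]$ respectively, and the $x\leftrightarrow1-x$ symmetry extends this to $[0,1]$.

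For the $2\Hq\rbra*{1/2}$ bounds, the lower bound in part~(2) ($q\ge3$) is exactly \Cref{lemma:Tsallis-binary-entropy-lower-bound}, so nothing new is needed there. The remaining upper bound $\Hq(x)\le2\Hq\rbra*{1/2}\Hqq{2}(x)$ for $q\in[2,3]$ is the delicate case: the naive second-derivative test on the difference fails, since that difference changes convexity on $[0,1/2]$. Instead I would expand using the generalized-binomial series (\Cref{prop:generalized-binomial-coeffs}\ref{thmitem:generalizd-binomial-identity}, equivalently \Cref{lemma:rank-two-state-trace}): setting $t\coloneqq1-2x$ so that $x^q+(1-x)^q=2^{1-q}\sum_{k\ge0}\binom{q}{2k}t^{2k}$ and $\Hqq{2}(x)=\tfrac{1-t^2}{2}$, one finds $2\Hq\rbra*{1/2}\Hqq{2}(x)-\Hq(x)=\frac{t^2}{q-1}\,\Lambda(t^2)$, where $\Lambda(s)\coloneqq 2^{1-q}\sum_{k\ge1}\binom{q}{2k}s^{k-1}-\rbra*{1-2^{1-q}}$. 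It then suffices to show $\Lambda\ge0$ on $[0,1]$. Here $\Lambda(1)=0$ follows from $\sum_{k\ge0}\binom{q}{2k}=2^{q-1}$, $\Lambda$ is monotonically decreasing in $s$ because $\binom{q}{2k}\le0$ for every $k\ge2$ when $q\in(2,3)$ (by \Cref{prop:sign-conds-binomial-coeffs}, as $\ceil*{q}=3$), and $\Lambda(0)\ge0$ reduces to the elementary inequality $2^q\le q^2-q+2$ on $[2,3]$; a function decreasing to $0$ is nonnegative, giving the claim.

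The main obstacle is precisely this last upper bound for $q\in[2,3]$. Unlike the $\frac{q}{2(q-1)}$ bounds, where the auxiliary function $g$ has a cleanly sign-definite second derivative, the difference controlling the $2\Hq\rbra*{1/2}$ bound is neither convex nor concave on $[0,1/2]$, so one is forced to pass to the power-series representation and exploit the uniform negativity of the tail coefficients $\binom{q}{2k}$ for $q\in(2,3)$---this is exactly where \Cref{prop:sign-conds-binomial-coeffs} is indispensable, while \Cref{prop:generalized-binomial-coeffs} pins down the boundary value $\Lambda(1)=0$. Identifying the auxiliary function $g$ with its sign-flipping second derivative is the other key step, but once found it renders the $\frac{q}{2(q-1)}$ bounds routine.
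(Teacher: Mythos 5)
Your proof is correct, and it takes a genuinely different route from the paper's. The paper proves all four bounds simultaneously by expressing the ratio $\Hq(x)/\Hqq{2}(x)$ (phrased via \Cref{thm:pureQJType-eq-BinEntropy} as $\QJTq/\QJT{2}$ for pure states) as the series $\frac{2^{-q+2}}{q-1}\sum_{k\ge1}\binom{q}{2k}\sum_{l=0}^{k-1}s^{l}$ in the variable $s=\abs{\innerprod{\psi_0}{\psi_1}}^2$ (which is exactly your $t^2$), and proving this is monotone in $s$ -- decreasing for $q\in(2,3]$, increasing for $q>3$ -- so that both constants emerge as the two endpoint values; the $q>3$ regime forces a delicate parity analysis of $\ceil{q}$ (the paper's cases for $\ceil{q}\geq 4$, including a second-derivative subargument for odd $\ceil{q}\ge5$) and needs the extra identity $\sum_{k\ge1}\binom{q}{2k}k=2^{q-3}q$ from \Cref{prop:generalized-binomial-coeffs}. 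You instead split by constant rather than by regime: the two $\frac{q}{2(q-1)}$ bounds follow from a purely elementary convexity argument on $g(x)=x^{q-1}-(1-x)^{q-1}+1-2x$, with no series at all, which entirely sidesteps the paper's hardest case; the $2\Hq\rbra*{1/2}$ upper bound for $q\in[2,3]$ uses the same series-plus-sign-condition machinery as the paper (\Cref{prop:generalized-binomial-coeffs,prop:sign-conds-binomial-coeffs}) but applied to the difference $2\Hq\rbra*{1/2}\Hqq{2}(x)-\Hq(x)=\frac{t^2}{q-1}\Lambda(t^2)$ rather than to the ratio, needing only $\sum_{k\ge0}\binom{q}{2k}=2^{q-1}$, and confined to the one regime where all tail coefficients $\binom{q}{2k}$, $k\ge2$, share a sign; the remaining bound ($q\ge3$ lower) you quote from \Cref{lemma:Tsallis-binary-entropy-lower-bound}, mirroring the paper's own remark that this bound was already established there. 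The paper's route buys a single unified mechanism that explains why precisely these two constants appear (boundary values of one monotone ratio); yours buys modularity and a more elementary treatment of half of the inequalities.

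Three harmless slips to fix when writing this up. First, $f\rbra*{1/2}=2^{1-q}+q/4-1$ is not zero for $q\notin\cbra{2,3}$; your argument never needs it, since $f(0)=0$, the sign of $f'=qg$ on $[0,1/2]$, and the symmetry $f(x)=f(1-x)$ already suffice. Second, the sign of $g''$ on $(0,1/2)$ is that of $3-q$, not $q-3$; your stated conclusions (convex on $(2,3)$, concave for $q>3$) are the correct ones and are what the argument uses. Third, the inequality $2^q\le q^2-q+2$ on $[2,3]$ is true but redundant: once $\Lambda$ is non-increasing on $[0,1]$ with $\Lambda(1)=0$, nonnegativity of $\Lambda$ on all of $[0,1]$ -- including $\Lambda(0)\ge0$ -- is automatic, so no separate elementary inequality is required.
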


It is noteworthy that the lower bound in \Cref{thm:improved-Tsallis-binary-entropy-lower-bound}\ref{thmitem:Tsallis-Qge3} was already established in \Cref{lemma:Tsallis-binary-entropy-lower-bound} (cf.~\cite[Lemma 4.8]{LW25}). To prove \Cref{thm:improved-Tsallis-binary-entropy-lower-bound}, we use the correspondence between \QJTq{} for pure states and the Tsallis $q$-binary entropy (\Cref{thm:pureQJType-eq-BinEntropy}\ref{thmitem:pureQJT-eq-TsallisBinE}), together with the observation that, for any pure states $\ket{\psi_0}$ and $\ket{\psi_1}$, 
\[\QJT{3}(\ketbra{\psi_0}{\psi_0},\ketbra{\psi_1}{\psi_1}) = \smash{\frac{3}{4}} \cdot \QJT{2} (\ketbra{\psi_0}{\psi_0},\ketbra{\psi_1}{\psi_1}).\]

Consequently, it suffices to prove the following lemma, which considers the intervals $q\in[2,3]$ and $q\in[3,\infty)$ separately: 

\begin{lemma}[$\QJT{2}$ vs.~$\QJTq$ for $q\geq 2$]
    \label{lemma:QJT2-vs-QJTq-Qeq2}
    For any pure states $\ket{\psi_0}$ and $\ket{\psi_1}$, it holds that:
    \begin{enumerate}[label={\upshape(\arabic*)}]
        \item $\displaystyle \forall q\in[2,3], \quad \frac{q}{2(q-1)} \leq \frac{\QJTq(\ketbra{\psi_0}{\psi_0},\ketbra{\psi_1}{\psi_1})}{\QJT{2}(\ketbra{\psi_0}{\psi_0},\ketbra{\psi_1}{\psi_1})} \leq 2 \Hq\rbra*{\frac{1}{2}}$. \label{thmitem:QJTpure-2leQle3}
        \item $\displaystyle \forall q\geq 3, \quad 2 \Hq\rbra*{\frac{1}{2}} \leq \frac{\QJTq(\ketbra{\psi_0}{\psi_0},\ketbra{\psi_1}{\psi_1})}{\QJT{2}(\ketbra{\psi_0}{\psi_0},\ketbra{\psi_1}{\psi_1})} \leq \frac{q}{2(q-1)}$. \label{thmitem:QJTpure-Qge3}
    \end{enumerate}
\end{lemma}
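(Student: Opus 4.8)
My plan is to translate the statement, via \Cref{thm:pureQJType-eq-BinEntropy}\ref{thmitem:pureQJT-eq-TsallisBinE}, into inequalities between the binary entropies $\Hq$ and $\Hqq{2}$, and then to prove the four resulting bounds using two unrelated techniques. Writing $x \coloneqq \abs*{\innerprod{\psi_0}{\psi_1}}\in[0,1]$ and $y \coloneqq \frac{1-x}{2}\in[0,\tfrac12]$, that theorem gives $\QJTq = \Hq(y)$ and $\QJT{2} = \Hqq{2}(y) = 2y(1-y)$, so the ratio in question equals $\Hq(y)/\Hqq{2}(y)$. Its two endpoint values are exactly the quantities appearing in the lemma: at $x=0$ (i.e. $y=\tfrac12$) it equals $\frac{\Hq(1/2)}{1/2}=2\Hq(1/2)$, and at $x=1$ (i.e. $y=0$) it tends to $\frac{q}{2(q-1)}$. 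Thus the lemma asserts that the ratio stays between these two endpoint values, with their ordering reversing precisely at $q=3$; I would establish the two families of bounds separately.

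For the bounds involving $\frac{q}{2(q-1)}$ — the lower bound in part \ref{thmitem:QJTpure-2leQle3} and the upper bound in part \ref{thmitem:QJTpure-Qge3} — it suffices, after clearing the positive factor $2(q-1)y(1-y)$, to show that $P(y;q)\coloneqq qy(1-y)-\bigl(1-y^q-(1-y)^q\bigr)$ is nonpositive for $q\in[2,3]$ and nonnegative for $q\ge 3$. The crucial observation is that $P(y;\cdot)$ is \emph{convex} in $q$ for each fixed $y\in(0,1)$, since $\frac{\partial^2}{\partial q^2}P(y;q)=y^q(\ln y)^2+(1-y)^q(\ln(1-y))^2\ge 0$, while a one-line expansion shows $P(y;2)=P(y;3)=0$ identically in $y$. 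A convex function of $q$ vanishing at $q=2$ and $q=3$ is nonpositive on $[2,3]$ and nonnegative on $[3,\infty)$, which is exactly what is required.

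For the bounds involving $2\Hq(1/2)$, the lower bound in part \ref{thmitem:QJTpure-Qge3} (i.e.\ $q\ge 3$) is already \Cref{lemma:Tsallis-binary-entropy-lower-bound}, so nothing new is needed there. The remaining case — the upper bound in part \ref{thmitem:QJTpure-2leQle3}, namely $\Hq(y)\le 2\Hq(1/2)\,\Hqq{2}(y)$ for $q\in[2,3]$ — is the step I expect to be the main obstacle, because the convexity trick above fails here: the analogous function of $q$ has an indeterminate second derivative, so the clean two-zeros endpoint argument is unavailable. Instead I would pass to a power series in $x^2$. Combining \Cref{lemma:rank-two-state-trace}, the identity $\sum_{k\ge 0}\binom{q}{2k}=2^{q-1}$ (the $x\to 1$ case of \Cref{prop:generalized-binomial-coeffs}\ref{thmitem:generalizd-binomial-identity}), and \Cref{eq:power-series-identity} applied with base $x^2$, one rewrites
\[
  \frac{\QJTq}{\QJT{2}} = \frac{2^{2-q}}{q-1}\sum_{j=0}^{\infty} c_j(q)\,x^{2j},
  \qquad c_j(q)\coloneqq\sum_{k=j+1}^{\infty}\binom{q}{2k},
\]
valid for $x\in[0,1)$ and extended to $x=1$ by continuity.

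To finish this case, I would invoke the sign information: for $q\in(2,3)$ we have $\ceil{q}=3$, so \Cref{prop:sign-conds-binomial-coeffs} forces $\binom{q}{2k}<0$ for every $k\ge 2$; hence $c_j(q)<0$ for all $j\ge 1$, and the series is therefore maximized at $x=0$. This yields $\frac{\QJTq}{\QJT{2}}\le \frac{2^{2-q}}{q-1}\,c_0(q)=\frac{2^{2-q}}{q-1}\bigl(2^{q-1}-1\bigr)=2\Hq(1/2)$, while the two boundary orders $q=2$ and $q=3$ hold with equality. The delicate points in this last argument — and the places I would be most careful — are justifying the interchange of summation (absolute convergence of the generalized binomial series for $x<1$), correctly reading off the coefficient signs from \Cref{prop:sign-conds-binomial-coeffs}, and handling the regularization at $x=1$ by continuity rather than term-by-term.
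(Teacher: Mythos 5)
Your proposal is correct, and it takes a route that differs substantially from the paper's. The paper proves all four bounds through a single mechanism: it expands the ratio as $\frac{2^{-q+2}}{q-1} F\rbra*{\abs{\innerprod{\psi_0}{\psi_1}}^2;q}$ with $F(x;q) = \sum_{k\geq 1}\binom{q}{2k}\sum_{l=0}^{k-1}x^{l}$, establishes that $F$ is monotone in $x$ (non-increasing for $q\in(2,3]$, non-decreasing for $q>3$) via the sign conditions of \Cref{prop:sign-conds-binomial-coeffs}, and reads off both bounds from the endpoint evaluations in \Cref{eq:Tsallis-Qge2}; the monotonicity step for $q>3$ requires the paper's most delicate case analysis (odd $\ceil{q}\geq 5$, where the coefficient signs are mixed and a second-derivative argument on $F$ is needed). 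You avoid that case entirely: the two bounds involving $\frac{q}{2(q-1)}$ follow from convexity in $q$ of $P(y;q)=qy(1-y)-\rbra[\big]{1-y^q-(1-y)^q}$ together with the identities $P(y;2)=P(y;3)=0$ (both of which check out), an argument more elementary than anything in the paper's proof that also explains structurally why the ordering reverses exactly at $q=3$; the lower bound for $q\geq 3$ you import from \Cref{lemma:Tsallis-binary-entropy-lower-bound}, which is legitimate and not circular since that lemma is quoted from prior work (the paper itself remarks, just before the lemma in question, that this bound was already established there); and you invoke the series machinery only for the one remaining bound, in the regime $q\in(2,3)$ where every relevant $\binom{q}{2k}$, $k\geq 2$, is negative -- precisely the paper's easy case. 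What the paper's unified approach buys is the slightly stronger structural fact that the ratio is monotone in $\abs{\innerprod{\psi_0}{\psi_1}}$ for every $q>2$; what yours buys is the elimination of the hard sign-pattern analysis at large $q$, with the summation-interchange concern you flag confined to one case where it is unproblematic (absolute convergence holds since $\abs[\big]{\binom{q}{2k}} = O(k^{-q-1})$ for fixed $q>1$).
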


\begin{proof}
    Following \Cref{lemma:rank-two-state-trace}, it holds that
    \begin{subequations}
        \label{eq:QJTq-expression}
        \begin{align}
            \QJTq(\ketbra{\psi_0}{\psi_0},\ketbra{\psi_1}{\psi_1}) &= \Sq\rbra*{\frac{\ketbra{\psi_0}{\psi_0}+\ketbra{\psi_1}{\psi_1}}{2}}\\
            &= \frac{2^{-q}}{q-1} \rbra*{ 2^q - \Tr\rbra*{ \rbra[\Big]{\frac{\ketbra{\psi_0}{\psi_0}+\ketbra{\psi_1}{\psi_1}}{2}}^q }}\\
            &= \frac{2^{-q}}{q-1} \rbra*{ 2^q - 2 \sum_{k=0}^{\infty} \binom{q}{2k} \abs{\innerprod{\psi_0}{\psi_1}}^{2k} }\\
            &= \frac{2^{-q+1}}{q-1} \sum_{k=0}^{\infty} \binom{q}{2k} \rbra*{1- \abs{\innerprod{\psi_0}{\psi_1}}^{2k}}\\
            &= \frac{2^{-q+1}}{q-1} \sum_{k=1}^{\infty} \binom{q}{2k} \rbra*{1- \abs{\innerprod{\psi_0}{\psi_1}}^2} \sum_{l=0}^{k-1} \abs{\innerprod{\psi_0}{\psi_1}}^{2l}.
        \end{align}
    \end{subequations}
    Here, the fourth line is derived from \Cref{prop:generalized-binomial-coeffs}\ref{thmitem:generalizd-binomial-identity} by substituting $x=1$ and $a=q$, while the last line follows from  \Cref{eq:power-series-identity} with $r=k$ and $x=\abs{\innerprod{\psi_0}{\psi_1}}^2$. 

    Combining the identity $\QJT{2}(\ketbra{\psi_0}{\psi_0},\ketbra{\psi_1}{\psi_1}) = \frac{1}{2} \rbra[\big]{1-\abs{\innerprod{\psi_0}{\psi_1}}^2}$, obtained by direct calculation, with \Cref{eq:QJTq-expression}, the following holds: 
    \[ \frac{\QJTq(\ketbra{\psi_0}{\psi_0},\ketbra{\psi_1}{\psi_1})}{\QJT{2}(\ketbra{\psi_0}{\psi_0},\ketbra{\psi_1}{\psi_1})} = \frac{2^{-q+2}}{q-1} \sum_{k=1}^{\infty} \binom{q}{2k} \sum_{l=0}^{k-1} \abs{\innerprod{\psi_0}{\psi_1}}^{2l} \coloneqq \frac{2^{-q+2}}{q-1} F\rbra*{\abs{\innerprod{\psi_0}{\psi_1}}^2;q}. \]

    A direct calculation shows that $\frac{\partial}{\partial x} F(x;q) =  \sum_{k=2}^{\infty} \binom{q}{2k} \sum_{l=1}^{k-1} l x^{l-1}$. We observe that $l x^{l-1} \geq 0$ holds for all $l\geq 1$ and $x\in[0,1]$. The sign of $\frac{\partial}{\partial x} F(x;q)$ can then be determined as follows: 
    \begin{enumerate}[label={\upshape(\alph*)}]
        \item\label{item:Tsallis-2lqQle3} When $q \in (2,3]$, the integer $2k-\ceil{q}$ is both positive and odd for all integers $k \geq 2$. Hence, \Cref{prop:sign-conds-binomial-coeffs} implies that $\binom{q}{2k} \leq 0$ for all such $k$, which yields $\frac{\partial F}{\partial x} \leq 0$. 
        \item\label{item:Tsallis-Qge3} When $\ceil{q} \geq 4$, we use a different argument. By \Cref{prop:generalized-binomial-coeffs}\ref{thmitem:generalizd-binomial-identity}, it follows that
        \begin{align*}
            F(t^2;q) = \sum_{k=1}^{\infty}\binom{q}{2k}\sum_{l=0}^{k-1} t^{2l} &= \frac{1}{1-t^2} \sum_{k=1}^{\infty}\binom{q}{2k} (1-t^{2k})\\
            &= \frac{2^q - (1+t)^q - (1-t)^q}{2(1-t^2)} 
            = \frac{\int^1_t K(\tau;q) 4\tau \dd\tau}{\int^1_t 4 \tau \dd\tau}. 
        \end{align*}
        Here, the kernel function $K(\tau;q)$ admits the integral representation:
        \[ K(\tau;q) \coloneqq \frac{q}{4\tau} \rbra*{(1+\tau)^{q-1} - (1-\tau)^{q-1}} = \frac{q(q-1)}{4} \int^{1}_{-1} (1+\tau s)^{q-2} \dd s. \]
        
        Since $\ceil{q} \geq 4$, a direct calculation shows that $\frac{\partial K}{\partial \tau}$ admits an integral representation with a nonnegative integrand: 
        \[ \frac{\partial}{\partial\tau} K(\tau;q) = \frac{q(q-1)(q-2)}{4} \int^1_0 s\rbra*{(1+\tau s)^{q-3} - (1-\tau s)^{q-3}} \dd s \geq 0. \]
        Consequently, since $K(\tau;q)$ is monotonically non-decreasing in $\tau$, we obtain $\frac{\partial F}{\partial x} \geq 0$: 
        \[ \frac{\partial}{\partial t} F(t^2;q) = \frac{4t}{\int^1_t 4\tau \dd \tau} \int^1_t \rbra*{ K(\tau;q) - K(t;q) } 4\tau \dd \tau \geq 0. \]
    \end{enumerate}
    
    Therefore, \Cref{item:Tsallis-2lqQle3} implies that $F(x;q)$ is monotonically non-increasing on the interval $x\in[0,1)$ when $q\in(2,3]$, while \Cref{item:Tsallis-Qge3} imply that $F(x;q)$ is monotonically non-decreasing on $x\in[0,1)$ when $q>3$. Using the identities in \Cref{prop:generalized-binomial-coeffs}, we then evaluate $F(x;q)$ at the points $x=0$ and $x \rightarrow 1^-$:
    \begin{subequations}
        \label{eq:Tsallis-Qge2}
        \begin{align}
            \frac{2^{-q+2}}{q-1} F(x;q)|_{x=0} &= \frac{2^{-q+2}}{q-1} \sum_{k=1}^{\infty} \binom{q}{2k} = \frac{2^{-q+2}}{q-1} \cdot \rbra*{2^{q-1}-1} = 2 \cdot \Hq\rbra*{\frac{1}{2}},\\
            \frac{2^{-q+2}}{q-1} \lim_{x\rightarrow 1^{-}} F(x;q) &= \frac{2^{-q+2}}{q-1} \sum_{k=1}^{\infty} \binom{q}{2k} k = \frac{2^{-q+2}}{q-1} \cdot 2^{q-3} q = \frac{q}{2(q-1)}.
        \end{align}
    \end{subequations}

    Finally, noting that $\frac{q}{2(q-1)}=2 \Hq(1/2)$ when $q\in\cbra{2,3}$, we conclude the proof by combining the monotonicity of $F(x;q)$ with respect to $x$ for $2<q\leq 3$ (the first item) and $q>3$ (the second item), along with the endpoint values in \Cref{eq:Tsallis-Qge2}. 
\end{proof}


\section{Computational hardness of \texorpdfstring{\RankTwoRenyiQEA{}}{}}

We begin by introducing a restricted version of the \textsc{Quantum $\alpha$-R\'enyi Entropy Approximation Problem} (\RenyiQEA{}), where the quantum state has rank at most \textit{two}: 
\begin{definition}[Rank-Two Quantum $\alpha$-R{\'e}nyi Entropy Approximation, \RankTwoRenyiQEA{}]
	\label{def:Rank2RenyiQEA}
    Let $Q$ be a quantum circuit acting on $m$ qubits and having $n$ specified output qubits, where $m(n)$ is a polynomial in $n$. Let $\rho$ be a quantum state obtained by running $Q$ on $\ket{0}^{\otimes m}$ and tracing out the non-output qubits, such that the rank of $\rho$ is at most two. Let $g(n)$ and $t(n)$ be nonnegative efficiently computable functions. The promise problem $\RankTwoRenyiQEA[t(n),g(n)]$ asks whether the following holds:
    \begin{itemize}
	   \item \emph{Yes:} The quantum circuit $Q$ satisfies that $\Salpha(\rho) \geq t(n) + g(n)$;
	   \item \emph{No:} The quantum circuit $Q$ satisfies that $\Salpha(\rho) \leq t(n) - g(n)$.
    \end{itemize}
\end{definition}

The main result of this section is that \RankTwoRenyiQEA{} is \BQP{}-hard for every positive order $\alpha$, even under a constant promise gap (i.e., precision): 

\begin{theorem}[Computational hardness of \RankTwoRenyiQEA{}]
    \label{thm:comp-hardness-Renyi}
    There exists a family of threshold functions $t(n;\alpha)$ and gap functions $g(n;\alpha)$, with the gap function bounded below by some universal constant, such that the following statements hold: 
    \begin{enumerate}[label={\upshape(\arabic*)}]
        \item For every real-valued order $\alpha \in (0,1)$, $\RankTwoRenyiQEA[t(n;\alpha), g(n;\alpha)]$ is \BQP{}-hard for all integers $n\geq \ceil*{2/\alpha}$. 
        \item For every order $\alpha \in [1,\infty]$, $\RankTwoRenyiQEA[t(n;\alpha), g(n;\alpha)]$ is \BQP{}-hard for all integers $n\geq 2$. 
    \end{enumerate}
    The explicit forms of $t(n;\alpha)$ and $g(n;\alpha)$ depend on the interval of $\alpha$ -- namely, $(0,1)$, $[1,2)$, $\cbra{2}$, and $(2,\infty]$ -- and are provided in \Cref{thm:Rank2RenyiQEA2-BQPhard,thm:Rank2RenyiQEA-BQPhard-Age2,thm:Rank2RenyiQEA-BQPhard-0leAle2}. 
\end{theorem}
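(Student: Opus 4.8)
The plan is to fix a single ``anchor'' order $\alpha=2$, prove \BQP{}-hardness there directly from pure-state infidelity estimation, and then transport the hardness to every other positive order by feeding the yes/no bounds through the binary-entropy inequalities of \Cref{thm:Renyi-binary-entropy-bounds}. Crucially, the reduction never alters the instance: a $\RankTwoRenyiQEAnoa_2$ circuit $Q$ producing the rank-$2$ state $\rho=\tfrac12(\ketbra{\psi_0}{\psi_0}+\ketbra{\psi_1}{\psi_1})$ is reused verbatim, and only the threshold $t(n;\alpha)$ and gap $g(n;\alpha)$ are recomputed. This is sound because, by \Cref{thm:pureQJType-eq-BinEntropy}\ref{thmitem:pureQJR-eq-RenyiBinE}, both $\Saa{2}(\rho)$ and $\Salpha(\rho)$ are the \emph{same} binary entropy evaluated at $x=\tfrac{1-\abs{\innerprod{\psi_0}{\psi_1}}}{2}$, namely $\Haa{2}(x)$ and $\Halpha(x)$ respectively, so an inequality between $\Haa{2}$ and $\Halpha$ is literally an inequality between the two quantum entropies of the \emph{one} state $\rho$.

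First I would settle the anchor case. By \Cref{eq:rank-2-Renyi-EQ-binary} (equivalently \Cref{thm:pureQJType-eq-BinEntropy}\ref{thmitem:pureQJR-eq-RenyiBinE} at $\alpha=2$),
\[ \Saa{2}(\rho)=\ln(2)-\ln\!\rbra*{1+\abs{\innerprod{\psi_0}{\psi_1}}^2}, \]
which is a strictly decreasing, efficiently invertible function of $\abs{\innerprod{\psi_0}{\psi_1}}^2$, hence of the infidelity $1-\abs{\innerprod{\psi_0}{\psi_1}}^2$. Plugging the parameters of \Cref{lemma:PureInfidelity-BQPhard} into this identity converts the promise $1-\abs{\innerprod{\psi_0}{\psi_1}}^2\ge(1-2^{-n})^2$ versus $\le 2^{-2n}$ into $\Saa{2}(\rho)$ being close to $\ln 2$ versus close to $0$, i.e.\ a constant-gap $\RankTwoRenyiQEAnoa_2$ instance, establishing \Cref{thm:Rank2RenyiQEA2-BQPhard}.

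Next I would handle the remaining orders in two regimes. For $\alpha\in(0,2)$, \Cref{lemma:Renyi-monotonicity} gives $\Haa{2}(x)\le\Halpha(x)$ and \Cref{thm:Renyi-binary-entropy-bounds}(1) gives $\Halpha(x)\le\ln(2)^{1-\alpha/2}\,\Haa{2}(x)^{\alpha/2}$; for $\alpha\in(2,\infty]$, \Cref{thm:Renyi-binary-entropy-bounds}(2) gives $\tfrac{\alpha}{2(\alpha-1)}\Haa{2}(x)\le\Halpha(x)$ while \Cref{lemma:Renyi-monotonicity} gives $\Halpha(x)\le\Haa{2}(x)$. In each regime I would take a yes (respectively, no) instance of $\RankTwoRenyiQEAnoa_2$, for which $\Saa{2}(\rho)=\Haa{2}(x)$ is bounded below (respectively, above) by the anchor thresholds, and push these bounds through the appropriate side of the inequality to obtain lower (respectively, upper) bounds on $\Salpha(\rho)=\Halpha(x)$; the midpoint and half-width of the resulting interval then define $t(n;\alpha)$ and $g(n;\alpha)$, with the explicit forms recorded in \Cref{thm:Rank2RenyiQEA-BQPhard-0leAle2,thm:Rank2RenyiQEA-BQPhard-Age2}. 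For $\alpha\in(2,\infty]$ both inequalities are linear with coefficients $\tfrac{\alpha}{2(\alpha-1)}\in[\tfrac12,1)$ and $1$, so a constant anchor gap yields a constant gap immediately, valid for all $n\ge 2$; for $\alpha\in[1,2)$ the exponent $\alpha/2\ge\tfrac12$ already makes the no-side value small at $n\ge 2$.

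The main obstacle is guaranteeing a \emph{universal} constant lower bound on $g(n;\alpha)$ in the regime $\alpha\in(0,1)$, where only the power-type upper bound $\Halpha(x)\le\ln(2)^{1-\alpha/2}\,\Haa{2}(x)^{\alpha/2}$ is available. On no instances $\Haa{2}(x)=\Saa{2}(\rho)$ is only of order $2^{-2n}$, so the bound yields $\Salpha(\rho)\lesssim\ln(2)^{1-\alpha/2}\,2^{-(2n+1)\alpha/2}$, which decays in $n$ but at a rate shrinking with $\alpha$; to keep it bounded away, by a universal constant, from the yes value $\Salpha(\rho)\ge\Haa{2}(x)\approx\ln 2$, one needs the product $n\alpha$ bounded below, which is precisely the hypothesis $n\ge\floor*{2/\alpha}$ of the theorem. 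The binding constraint is the smallest admissible $n$, where $n\alpha\approx 2$ forces the no-side value toward $\ln(2)\cdot 2^{-2}$ while the yes-side value stays near $\ln 2$; the delicate step is the parameter bookkeeping that places $t(n;\alpha)$ so that the power-amplified no-side bound and the monotonicity-based yes-side bound remain separated by a fixed constant uniformly over all $\alpha\in(0,1)$ and all $n\ge\floor*{2/\alpha}$.
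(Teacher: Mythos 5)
Your proposal matches the paper's proof essentially step for step: the same anchor reduction from \PureInfidelity{} at $\alpha=2$ via \Cref{eq:rank-2-Renyi-EQ-binary}, the same instance-preserving reductions to other orders through \Cref{thm:pureQJType-eq-BinEntropy}\ref{thmitem:pureQJR-eq-RenyiBinE}, with \Cref{lemma:Renyi-monotonicity} and \Cref{thm:Renyi-binary-entropy-bounds} playing exactly the roles you assign them in each regime, and the same diagnosis that the $\alpha\in(0,1)$ case forces $n\alpha$ to be bounded below (the paper's \Cref{thm:Rank2RenyiQEA-BQPhard-0leAle2} carries out the gap bookkeeping you flag as the delicate step, via the estimate $-\ln(1-x)\leq 2x$ and monotonicity of $g(n;\alpha)$ in $n$). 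The only work left implicit in your outline is that explicit positivity verification of $g(n;\alpha)$, which the paper completes by elementary calculus.
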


The proof of \Cref{thm:comp-hardness-Renyi} will be developed in the remainder of this section by analyzing each interval of $\alpha$ specified in the theorem separately. In particular, due to the correspondence between the quantum $\alpha$-R\'enyi entropy of $\frac{1}{2}\rbra*{\ketbra{\psi_0}{\psi_0}+\ketbra{\psi_1}{\psi_1}}$ and the $\alpha$-R\'enyi binary entropy of $\frac{1-\abs{\innerprod{\psi_0}{\psi_1}}}{2}$, as provided in \Cref{thm:pureQJType-eq-BinEntropy}\ref{thmitem:pureQJR-eq-RenyiBinE}, we will prove the cases of orders $\alpha \in (0,2)\cup(2,\infty]$ via the reductions from $\RankTwoRenyiQEAnoa_2$ to \RankTwoRenyiQEA{}.

\subsection{The case of \texorpdfstring{$\alpha=2$}{}}

\begin{theorem}[$\RankTwoRenyiQEAnoa_2$ is \BQP{}-hard]
    \label{thm:Rank2RenyiQEA2-BQPhard}
    Let $t(n)$ and $g(n)$ be efficiently computable functions. For every integer $n \geq 2$, 
    \[ \RankTwoRenyiQEAnoa_2[t(n),g(n)] \text{ is } \BQP{}\text{-hard}.\]
    Here, the threshold function is chosen as $t(n) = \ln\rbra[\big]{\sqrt{2}}-2^{-2n-1}+2^{-n-1}$, and the gap function is given by $g(n) = \ln\rbra[\big]{\sqrt{2}}-2^{-2n-1}-2^{-n-1}$. 
\end{theorem}

\begin{proof}
    From \Cref{lemma:PureInfidelity-BQPhard}, deciding whether $1-\abs{\innerprod{\psi_0}{\psi_1}}^2$ is at least $1-2^{-2n}$ or at most $1-\rbra*{1-2^{-n}}^2$ is \BQP{}-hard for all integers $n\geq 2$, where the quantum states $\ket{\psi_0}$ and $\ket{\psi_1}$ can be prepared by polynomial-size quantum circuits of output length $n$. 
    Next, we reduce from $1-\abs{\innerprod{\psi_0}{\psi_1}}^2$ to the quantum $2$-R{\'e}nyi entropy of the state $(\ketbra{\psi_0}{\psi_0}+\ketbra{\psi_1}{\psi_1})/2$, which can also be prepared by a quantum circuit of output length $n$,\footnote{\label{footnote:Q-construction}The construction of $Q$, which uses only a single query to each of the quantum circuits $Q_0$ and $Q_1$, is as follows. Let $\sfA$ be a single-qubit register initialized to $\ket{0}$. The quantum circuit $C$ first applies a Hadamard gate to $\sfA$, followed by a controlled-$Q_1$ operation with $\sfA$ as the control qubit, and then applies an $X$ gate to $\sfA$. It then performs the same controlled operation for $Q_0$, along with another $X$ gate on $\sfA$. Finally, the circuit traces out the register $\sfA$.} via the following identity in  \Cref{thm:pureQJType-eq-BinEntropy}\ref{thmitem:pureQJR-eq-RenyiBinE}:
    \begin{equation}
        \label{eq:Renyi-rank-two}
        \Saa{2}\rbra*{\frac{\ketbra{\psi_0}{\psi_0}+\ketbra{\psi_1}{\psi_1}}{2}} = \ln(2) - \ln\rbra*{1 + \abs{\innerprod{\psi_0}{\psi_1}}^2}. 
    \end{equation}
    
    Noting that $\ln(1+x)$ is monotonically increasing for $0 \leq x \leq 1$, we obtain the following inequalities from \Cref{eq:Renyi-rank-two}: 
    \begin{itemize}
        \item For \textit{yes} instances, $|\innerprod{\psi_0}{\psi_1}|^2  \leq 2^{-2n}$ implies that 
        \begin{align*}
            \Saa{2}\rbra*{\frac{\ketbra{\psi_0}{\psi_0}+\ketbra{\psi_1}{\psi_1}}{2}} \geq \ln(2) - \ln\rbra*{1+2^{-2n}}
            \geq \ln(2) - 2^{-2n} \coloneqq p_\yes(n). 
        \end{align*}
        Here, the last inequality holds because $\ln(1+x) \leq x$ for $0 \leq x \leq 1$. 

        \item For \textit{no} instances, $|\innerprod{\psi_0}{\psi_1}|^2 \geq \rbra*{1-2^{-n}}^2$ yields that 
        \[ \Saa{2}\rbra*{\frac{\ketbra{\psi_0}{\psi_0}+\ketbra{\psi_1}{\psi_1}}{2}} \leq \ln(2) - \ln\rbra*{1+\rbra*{1-2^{-n}}^2} \leq 2^{-n} \coloneqq p_\no(n). \]
        Here, the last inequality follows from the fact that $\exp(-t) \leq 1-t+t^2/2$ for all $t\geq 0$.
    \end{itemize}

    Next, we define the threshold and gap functions as $t(n) \coloneqq \rbra[\big]{p_\yes(n)+p_\no(n)}/2$ and $g(n) \coloneqq \rbra[\big]{p_\yes(n)-p_\no(n)}/2$, respectively. The explicit expressions are 
    \[t(n) = \ln\rbra[\big]{\sqrt{2}}-2^{-2n-1}+2^{-n-1} \quad\text{and}\quad g(n) = \ln\rbra[\big]{\sqrt{2}}-2^{-2n-1}-2^{-n-1}.\] 
    We conclude the proof by observing that $g(n) > 0$ for integer $n \geq 2$. 
\end{proof}

\subsection{The cases of \texorpdfstring{$0<\alpha<2$}{}}

\begin{theorem}[\RankTwoRenyiQEA{} is \BQP{}-hard when $0 < \alpha < 2$]
    \label{thm:Rank2RenyiQEA-BQPhard-0leAle2}
    Let $t(n;\alpha)$ and $g(n;\alpha)$ be efficiently computable functions, where $n\in \bbN$ and $\alpha\in\bbR$. The following statements hold: 
    \begin{enumerate}[label={\upshape(\arabic*)}, topsep=0.33em, itemsep=0.33em, parsep=0.33em]
        \item $\forall \alpha\in(0,1)$, $\forall n \geq \ceil*{2/\alpha}$, $\RankTwoRenyiQEA[t(n;\alpha),g(n;\alpha)]$ is \BQP{}-hard. \label{thmitem:RenyiQEA-BQPhard-0leQle1}
        \item $\forall \alpha\in [1,2)$, $\forall n \geq 2$, $\RankTwoRenyiQEA[t(n;\alpha),g(n;\alpha)]$ is \BQP{}-hard. \label{thmitem:RenyiQEA-BQPhard-1leQle2}
    \end{enumerate}
    Here, the threshold function is given by $t(n;\alpha) = \frac{\ln(2)}{2} -2^{-2n-1} + \frac{\ln(2)}{2} \cdot \rbra*{2^{-n}/\ln(2)}^{\alpha/2}$, and the gap function is chosen as $g(n;\alpha) = \frac{\ln(2)}{2} -2^{-2n-1} - \frac{\ln(2)}{2} \cdot \rbra*{2^{-n}/\ln(2)}^{\alpha/2}$.
\end{theorem}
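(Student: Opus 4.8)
The plan is to give a Karp reduction from $\RankTwoRenyiQEAnoa_2$, shown to be \BQP{}-hard in \Cref{thm:Rank2RenyiQEA2-BQPhard}, keeping the circuit $Q$ (and hence the rank-$2$ state $\rho = \frac{1}{2}(\ketbra{\psi_0}{\psi_0}+\ketbra{\psi_1}{\psi_1})$) unchanged and only reinterpreting the threshold and gap. Writing $x \coloneqq \frac{1-\abs{\innerprod{\psi_0}{\psi_1}}}{2}$, \Cref{thm:pureQJType-eq-BinEntropy}\ref{thmitem:pureQJR-eq-RenyiBinE} supplies the two identities $\Saa{2}(\rho) = \Haa{2}(x)$ and $\Salpha(\rho) = \Halpha(x)$, so both R\'enyi entropies of $\rho$ are governed by binary entropies at the same point $x$. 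From the proof of \Cref{thm:Rank2RenyiQEA2-BQPhard}, yes instances satisfy $\Saa{2}(\rho) \geq p_\yes(n) \coloneqq \ln(2) - 2^{-n+1} + 2^{-2n}$ and no instances satisfy $\Saa{2}(\rho) \leq p_\no(n) \coloneqq -\ln(1-2^{-2n-1})$.

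The two directions then use the two available comparisons of $\Halpha$ with $\Haa{2}$. For a yes instance I would invoke the monotonicity of the R\'enyi binary entropy in the order (\Cref{lemma:Renyi-monotonicity}): since $\alpha \leq 2$ we have $\Halpha(x) \geq \Haa{2}(x)$, hence $\Salpha(\rho) \geq \Saa{2}(\rho) \geq p_\yes(n)$. For a no instance I would apply the power-type upper bound of \Cref{thm:Renyi-binary-entropy-upper-bound-0leAle2}, namely $\Halpha(x) \leq \ln(2)^{1-\alpha/2}\Haa{2}(x)^{\alpha/2}$; since $t\mapsto t^{\alpha/2}$ is increasing and $\Saa{2}(\rho)\leq p_\no(n)$, this yields $\Salpha(\rho)\leq \ln(2)^{1-\alpha/2}\,p_\no(n)^{\alpha/2}$. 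Using $p_\no(n)=\ln(2)\cdot(-\log_2(1-2^{-2n-1}))$, the right-hand side collapses to $\ln(2)\cdot(-\log_2(1-2^{-2n-1}))^{\alpha/2}$. Taking $t(n;\alpha)$ to be half the sum and $g(n;\alpha)$ half the difference of $p_\yes(n)$ and this quantity reproduces exactly the threshold and gap in the statement, so that $t+g=p_\yes(n)$ settles the yes case and $t-g=\ln(2)(-\log_2(1-2^{-2n-1}))^{\alpha/2}$ settles the no case; the reduction is therefore immediate once the two binary-entropy bounds are in hand.

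What remains, and where I expect the real work to lie, is checking that $g(n;\alpha)$ is strictly positive over the claimed ranges of $n$, equivalently that $\tfrac{\ln 2}{2}\bigl(1-s^{\alpha/2}\bigr) > 2^{-n}-2^{-2n-1}$ with $s\coloneqq -\log_2(1-2^{-2n-1})$. For $n$ large relative to $1/\alpha$ this is clear, since $s\approx 2^{-2n-1}/\ln 2$ is tiny and $s^{\alpha/2}\approx 2^{-\alpha(2n+1)/2}$ is then bounded well below $1$, leaving the left-hand side near $\tfrac{\ln 2}{2}$ against a right-hand side of order $2^{-n}$. The delicate regime is small $\alpha$: because $s^{\alpha/2}\to 1$ as $\alpha\to 0$, one needs $n$ large enough that $\alpha(2n+1)/2\gtrsim 1$, which is precisely what $n\geq\ceil{2/\alpha}$ secures in part~\ref{thmitem:RenyiQEA-BQPhard-0leQle1} (it forces $\alpha n\geq 2$, hence $s^{\alpha/2}$ comfortably below $1$); for $\alpha\in[1,2)$ the exponent $\alpha/2$ is bounded away from $0$, so $n\geq 2$ already suffices, as in part~\ref{thmitem:RenyiQEA-BQPhard-1leQle2}. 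Carrying out this positivity estimate cleanly, ideally by a monotone bound on $s^{\alpha/2}$ in terms of $2^{-\alpha n}$ and then a comparison with $2^{-n}$, is the main obstacle of the proof.
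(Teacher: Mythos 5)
Your proposal is correct and takes essentially the same route as the paper: the identical Karp reduction from $\RankTwoRenyiQEAnoa_2$ (\Cref{thm:Rank2RenyiQEA2-BQPhard}), with \Cref{lemma:Renyi-monotonicity} handling \emph{yes} instances, \Cref{thm:Renyi-binary-entropy-upper-bound-0leAle2} handling \emph{no} instances, and $t(n;\alpha)$, $g(n;\alpha)$ defined as the half-sum and half-difference of $p_\yes$ and $p_\no$. The positivity of $g(n;\alpha)$ that you flag as the remaining work is settled in the paper in the same spirit you outline -- using $-\ln(1-x)\le 2x$ and the constraint $\alpha n\ge 2$ for part (1), after first showing $g(n;\alpha)$ is increasing in $n$ so that only the smallest admissible $n$ (namely $n=2$ for $\alpha\in[1,2)$ and $n=\ceil*{2/\alpha}$ for $\alpha\in(0,1)$) needs to be checked -- so your sketch is sound and completable as written.
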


\begin{proof}
    Noting that $\RankTwoRenyiQEAnoa_2[t(n),g(n)]$ is \BQP{}-hard for all integers $n\geq 2$, where $t(n)$ and $g(n)$ are specified in \Cref{thm:Rank2RenyiQEA2-BQPhard}, we establish the reduction from $\RankTwoRenyiQEAnoa_2$ to \RankTwoRenyiQEA{} for $\alpha\in(0,2)$:

    \begin{itemize}
        \item For \textit{yes} instances, the monotonicity of the R\'enyi binary entropy (\Cref{lemma:Renyi-monotonicity}) and \Cref{thm:pureQJType-eq-BinEntropy}\ref{thmitem:pureQJR-eq-RenyiBinE} together yields that
        \begin{align*}
            \Saa{\alpha}\rbra*{\frac{\ketbra{\psi_0}{\psi_0}+\ketbra{\psi_1}{\psi_1}}{2}} &\geq \Saa{2}\rbra*{\frac{\ketbra{\psi_0}{\psi_0}+\ketbra{\psi_1}{\psi_1}}{2}}\\
            &\geq \ln(2) - 2^{-2n} \coloneqq p_\yes(n;\alpha).
        \end{align*}
        
        \item For \textit{no} instances, combining the upper bound for the $\alpha$-R\'enyi entropy (\Cref{thm:Renyi-binary-entropy-upper-bound-0leAle2}) with \Cref{thm:pureQJType-eq-BinEntropy}\ref{thmitem:pureQJR-eq-RenyiBinE} implies that
        \begin{align*}
            \Saa{\alpha}\rbra*{\frac{\ketbra{\psi_0}{\psi_0}+\ketbra{\psi_1}{\psi_1}}{2}} &\leq \ln(2)^{-\frac{\alpha}{2}+1} \cdot \Saa{2}\rbra*{\frac{\ketbra{\psi_0}{\psi_0}+\ketbra{\psi_1}{\psi_1}}{2}}^{\frac{\alpha}{2}}\\
            &\leq \ln(2) \cdot \rbra*{2^{-n}/\ln(2)}^{\alpha/2} \coloneqq p_\no(n;\alpha). 
        \end{align*}        
    \end{itemize}
    
    We now define the threshold and gap functions as $t(n;\alpha) \coloneqq \rbra[\big]{p_\yes(n;\alpha)+p_\no(n;\alpha)}/2$ and $g(n;\alpha) \coloneqq \rbra[\big]{p_\yes(n;\alpha)-p_\no(n;\alpha)}/2$, respectively. These expressions simplify to 
    \begin{align*}
        t(n;\alpha) &= \frac{\ln(2)}{2} -2^{-2n-1} + \frac{\ln(2)}{2} \cdot \rbra*{2^{-n}/\ln(2)}^{\alpha/2},\\
        g(n;\alpha) &= \frac{\ln(2)}{2} -2^{-2n-1} - \frac{\ln(2)}{2} \cdot \rbra*{2^{-n}/\ln(2)}^{\alpha/2}. 
    \end{align*}

    We next observe that, for every fixed $0<\alpha<2$, the continuous extension of $g(n;\alpha)$ to $n>0$ is monotonically increasing in $n$, since a direct calculation implies that
    \[ \frac{\partial}{\partial n} g(n;\alpha) = \ln(2)\cdot 2^{-2n} + \frac{\alpha \ln(2)^2}{4} \cdot \rbra*{2^{-n}/\ln(2)}^{\alpha/2}.\]
    It thus suffices to check the smallest allowed $n$ in each relevant range of $\alpha$.
    
    For simplicity, we first prove \Cref{thmitem:RenyiQEA-BQPhard-1leQle2}. To this end, it suffices to consider the case $n=2$, since $g(n;\alpha) \geq g(2;\alpha)$ for $\alpha\in[1,2)$. A direct calculation shows that 
    \[ g(2;\alpha) = \frac{ \ln(2) - 2^{-4} - \ln(2) \cdot \rbra{4\ln(2)}^{-\alpha/2} }{2} \geq \frac{ \ln(2) - 2^{-4} - \sqrt{\ln(2)}/2 }{2} > \frac{1}{10} > 0.\]
    Here, the first inequality uses that the function $\rbra{4\ln(2)}^{-\alpha/2}$ is monotonically non-increasing in $\alpha$, since $4 \ln(2) > 1$. Hence, $g(n;\alpha)>0$ for all $\alpha\in[1,2)$ and all $n\geq 2$ as desired. 
    
    To establish \Cref{thmitem:RenyiQEA-BQPhard-0leQle1}, we note that $g(n;\alpha) \geq g(\ceil*{2/\alpha};\alpha) \geq g(2/\alpha;\alpha)$ for all $\alpha\in(0,1)$ and for all $n\geq\ceil*{2/\alpha}$. Hence, it remains to show that $g(2/\alpha;\alpha)$ is positive in this range of $\alpha$. A direct calculation reveals that: 
    \begin{align*}
        g(2/\alpha;\alpha) &= \frac{\ln(2) - 2^{-4/\alpha} - \frac{1}{2} \ln(2)^{1-\alpha/2}}{2}\\ 
        &\geq \lim_{\alpha\to 1^-} g(2/\alpha;\alpha) = \frac{ \ln(2) - 2^{-4} - \sqrt{\ln(2)}/2 }{2} > \frac{1}{10} > 0
    \end{align*}
    Here, the second line follows from the facts that both $2^{-4/\alpha}$ and $\ln(2)^{1-\alpha/2}$ are monotonically increasing in $\alpha \in (0,1)$, and thus $g(2/\alpha;\alpha)$ is monotonically non-increasing in $\alpha$. Therefore, we conclude that $g(n;\alpha) > 0$ for all $\alpha \in(0,1)$ and all $n\geq \ceil*{2/\alpha}$. 
\end{proof}

\subsection{The cases of \texorpdfstring{$\alpha\in(2,\infty]$}{}}

\begin{theorem}[\RankTwoRenyiQEA{} is \BQP{}-hard when $\alpha > 2$]
    \label{thm:Rank2RenyiQEA-BQPhard-Age2}
    Let $t(n;\alpha)$ and $g(n;\alpha)$ be efficiently computable functions. For all $\alpha \in (2,\infty]$ and all integers $n \geq 2$, 
    \[ \RankTwoRenyiQEA[t(n;\alpha),g(n;\alpha)] \text{ is } \BQP{}\text{-hard}.\]
    Here, the threshold and gap functions are given by $t(n;\alpha) = \frac{\alpha}{4(\alpha-1)} \cdot \rbra*{\ln(2) - 2^{-2n}} + 2^{-n-1}$, and $g(n;\alpha) = \frac{\alpha}{4(\alpha-1)} \cdot \rbra*{\ln(2) - 2^{-2n}} - 2^{-n-1}$, respectively. 
    Moreover, when $\alpha=\infty$, the threshold and gap functions satisfy $t(n,\infty) = \lim_{\alpha \to \infty} t(n,\alpha)$ and $g(n,\infty) = \lim_{\alpha \to \infty} g(n,\alpha)$, respectively. 
\end{theorem}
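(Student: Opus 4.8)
The plan is to reduce from $\RankTwoRenyiQEAnoa_2$, which is \BQP{}-hard by \Cref{thm:Rank2RenyiQEA2-BQPhard}, reusing the \emph{same} instance $\rho = \frac{1}{2}\rbra*{\ketbra{\psi_0}{\psi_0}+\ketbra{\psi_1}{\psi_1}}$ and merely reinterpreting its order-$\alpha$ entropy. The reduction is the identity map on circuits; all the work lies in translating the order-$2$ promise into an order-$\alpha$ promise. From the proof of \Cref{thm:Rank2RenyiQEA2-BQPhard}, a \textit{yes} instance satisfies $\Saa{2}(\rho) \geq p_\yes(n) \coloneqq \ln(2)-2^{-n+1}+2^{-2n}$, whereas a \textit{no} instance satisfies $\Saa{2}(\rho) \leq p_\no(n) \coloneqq -\ln(1-2^{-2n-1})$.

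First I would handle the two promise directions using the two inequalities that flip direction at $\alpha=2$. For \textit{yes} instances I apply the lower bound of \Cref{thm:Renyi-binary-entropy-lower-bound-Age2} through the characterization \Cref{thm:pureQJType-eq-BinEntropy}\ref{thmitem:pureQJR-eq-RenyiBinE}, obtaining $\Salpha(\rho) \geq \frac{\alpha}{2(\alpha-1)}\Saa{2}(\rho) \geq \frac{\alpha}{2(\alpha-1)}p_\yes(n)$. For \textit{no} instances I invoke the monotonicity of the R\'enyi binary entropy (\Cref{lemma:Renyi-monotonicity}), which gives $\Salpha(\rho) \leq \Saa{2}(\rho) \leq p_\no(n)$ for every $\alpha \geq 2$. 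Setting $t(n;\alpha)$ and $g(n;\alpha)$ to be the midpoint and half-width of the interval $[p_\no(n),\frac{\alpha}{2(\alpha-1)}p_\yes(n)]$ reproduces exactly the stated threshold and gap functions.

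The main obstacle is then the positivity of the gap, $g(n;\alpha)>0$, uniformly in both parameters and with a universal constant lower bound. This is delicate because the multiplicative factor $\frac{\alpha}{2(\alpha-1)}$ shrinks toward $\frac{1}{2}$ as $\alpha\to\infty$, while the underlying order-$2$ gap also shrinks as $n$ grows, so one must verify that these two effects do not conspire to close the gap. I would decouple the two parameters as follows. Since $\frac{\alpha}{2(\alpha-1)}=\frac{1}{2}\rbra*{1+\frac{1}{\alpha-1}}$ is strictly decreasing in $\alpha$ and $p_\yes(n)>0$, the quantity $g(n;\alpha)=\frac{1}{2}\rbra*{\frac{\alpha}{2(\alpha-1)}p_\yes(n)-p_\no(n)}$ is decreasing in $\alpha$, hence minimized at $\alpha=\infty$, where the factor equals $\frac{1}{2}$. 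It therefore suffices to show $g(n;\infty)=\frac{1}{4}p_\yes(n)-\frac{1}{2}p_\no(n)>0$ for all $n\geq 2$. I would establish this by differentiating $g(n;\infty)$ in $n$ and checking termwise that $\partial_n g(n;\infty)>0$, so that $g(n;\infty)\geq g(2;\infty)$, a positive explicit constant that then serves as the universal lower bound claimed in the theorem.

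Finally, I would dispatch the boundary order $\alpha=\infty$ itself. Here the order-$\infty$ case of \Cref{thm:Renyi-binary-entropy-lower-bound-Age2}, equivalently \Cref{prop:binary-min-entropy-lower-bound} giving $\frac{1}{2}\Haa{2}(x)\leq\Hmin(x)$, together with monotonicity supplies the same pair of bounds with the factor $\frac{1}{2}$, so the reduction goes through verbatim. Since $t(n;\alpha)$ and $g(n;\alpha)$ are continuous in $\alpha$, their values at $\alpha=\infty$ coincide with the stated limits $\lim_{\alpha\to\infty}t(n;\alpha)$ and $\lim_{\alpha\to\infty}g(n;\alpha)$, completing the argument.
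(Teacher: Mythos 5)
Your proposal is correct and follows essentially the same route as the paper's proof: the identical reduction from $\RankTwoRenyiQEAnoa_2$ (\Cref{thm:Rank2RenyiQEA2-BQPhard}), the same two inequalities (\Cref{thm:Renyi-binary-entropy-lower-bound-Age2} for \textit{yes} instances, \Cref{lemma:Renyi-monotonicity} for \textit{no} instances, both applied via \Cref{thm:pureQJType-eq-BinEntropy}), the same threshold/gap functions, and the same treatment of $\alpha=\infty$ via \Cref{prop:binary-min-entropy-lower-bound}. The only difference is bookkeeping in the gap-positivity check: you minimize over $\alpha$ first (trivially, since $\frac{\alpha}{2(\alpha-1)}$ is decreasing and multiplies the positive quantity $p_\yes(n)$) and then over $n$, whereas the paper shows $g(n;\alpha)$ is increasing in $n$ for each fixed $\alpha$ and then that $g(2;\alpha)$ is decreasing in $\alpha$; both orderings land on the same positive constant $\lim_{\alpha\to\infty} g(2;\alpha) = \frac{\ln 31}{2} - \frac{9\ln 2}{4} - \frac{7}{64} > 0$.
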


\begin{proof}
    We will first prove the case $\alpha > 2$, and then explain how the proof strategy extends directly to $\alpha=\infty$. 
    Noting that $\RankTwoRenyiQEAnoa_2[t(n),g(n)]$ is \BQP{}-hard for all integers $n\geq 2$, where $t(n)$ and $g(n)$ are specified in \Cref{thm:Rank2RenyiQEA2-BQPhard}, we demonstrate the reduction from $\RankTwoRenyiQEAnoa_2$ to \RankTwoRenyiQEA{} for $\alpha>2$: 
    \begin{itemize}
        \item For \textit{yes} instances, combining the lower bound for the $\alpha$-R\'enyi entropy (\Cref{thm:Renyi-binary-entropy-lower-bound-Age2}) and \Cref{thm:pureQJType-eq-BinEntropy}\ref{thmitem:pureQJR-eq-RenyiBinE} implies that
        \begin{align*}
            \Saa{\alpha}\rbra*{\frac{\ketbra{\psi_0}{\psi_0}+\ketbra{\psi_1}{\psi_1}}{2}} &\geq \frac{\alpha}{2(\alpha-1)} \cdot \Saa{2}\rbra*{\frac{\ketbra{\psi_0}{\psi_0}+\ketbra{\psi_1}{\psi_1}}{2}}\\
            &\geq \frac{\alpha}{2(\alpha-1)} \cdot \rbra*{\ln(2) - 2^{-2n}} \coloneqq p_\yes(n;\alpha). 
        \end{align*}
        \item For \textit{no} instances, the monotonicity of the R\'enyi binary entropy (\Cref{lemma:Renyi-monotonicity}) and \Cref{thm:pureQJType-eq-BinEntropy}\ref{thmitem:pureQJR-eq-RenyiBinE} together yields that
        \begin{subequations}
            \label{eq:Rank2RenyiQEA-BQPhard-Age2-paccN}
            \begin{align}
                \Saa{\alpha}\rbra*{\frac{\ketbra{\psi_0}{\psi_0}+\ketbra{\psi_1}{\psi_1}}{2}} &\leq \Saa{2}\rbra*{\frac{\ketbra{\psi_0}{\psi_0}+\ketbra{\psi_1}{\psi_1}}{2}}\\
                &\leq 2^{-n} \coloneqq p_\no(n;\alpha).
            \end{align}
        \end{subequations}
    \end{itemize}
    Next, we define the threshold functions and gap functions as $t(n;\alpha) \coloneqq \rbra[\big]{p_\yes(n;\alpha)+p_\no(n;\alpha)}/2$ and $g(n;\alpha) \coloneqq \rbra[\big]{p_\yes(n;\alpha)-p_\no(n;\alpha)}/2$, respectively. These expressions simplify to 
    \begin{align*}
        t(n;\alpha) &= \frac{\alpha}{4(\alpha-1)} \cdot \rbra*{\ln(2) - 2^{-2n}} + 2^{-n-1},\\
        g(n;\alpha) &= \frac{\alpha}{4(\alpha-1)} \cdot \rbra*{\ln(2) - 2^{-2n}} - 2^{-n-1}. 
    \end{align*}

    We now demonstrate the monotonicity of $g(n;\alpha)$ with respect to $n$. A direct calculation implies that for every fixed $\alpha>2$, it holds that
    \[\forall n \geq 2, \quad \frac{\partial}{\partial n} g(n;\alpha) = \frac{ \alpha\ln(2)}{2(\alpha -1)} \cdot 2^{-2n} + \ln(2) \cdot 2^{-n-1} > 0.\] 
    It follows that $g(n;\alpha)$ is monotonically increasing on $n\geq 2$ for any fixed $\alpha \geq 2$, and it thus suffices to consider the case $n=2$. 
    Evaluating $g(2;\alpha)$ explicitly yields 
    
    \[g(2;\alpha) = \frac{\alpha}{4(\alpha -1)} \rbra*{\ln(2) - \frac{1}{16}} - \frac{1}{8} \quad \text{and} \quad \frac{\partial}{\partial\alpha} g(2;\alpha) = - \frac{\ln(2)-1/16}{4 (\alpha -1)^2}.\] 
    Since $\ln(2) > 1/16$, we know that $\frac{\partial}{\partial\alpha} g(2;\alpha) < 0$ for any $\alpha \neq 1$, and thus $g(2;\alpha)$ is monotonically decreasing on $\alpha \geq 2$.
    Accordingly, we complete the proof by computing the limit
    \[\lim_{\alpha \rightarrow \infty} g(2;\alpha) = \frac{\ln(2)}{4} - \frac{9}{64} > \frac{1}{31} > 0,\] 
    and hence $g(n;\alpha) \geq g(2;\alpha) \geq \lim_{\alpha \rightarrow \infty} g(2;\alpha) > 0$ for all $\alpha \geq 2$, as desired. 

    \vspace{1em}
    Finally, we remark that the proof strategy described above extends directly to the case $\alpha = \infty$. This follows from the limiting form of \Cref{thm:Renyi-binary-entropy-lower-bound-Age2} as $\alpha$ approaches $\infty$. In particular, as presented in \Cref{prop:binary-min-entropy-lower-bound}, the following bound holds:
    \[\Haa{2}(x) \leq \lim_{\alpha\to\infty} \frac{2(\alpha-1)}{\alpha} \cdot \Haa{\alpha}(x) = 2 \cdot \Hmin(x).\] 
    Therefore, by taking the limit $\alpha \rightarrow \infty$, our proof carries over directly to the case $\alpha=\infty$, with the threshold and gap functions given respectively by 
    \[t(n,\infty) \coloneqq \lim_{\alpha\to\infty} t(n;\alpha) \quad \text{and} \quad g(n,\infty) \coloneq \lim_{\alpha\to\infty} g(n;\alpha). \qedhere\] 
\end{proof}


\section{Computational hardness of \texorpdfstring{\RankTwoTsallisQEA{}}{}}

We start by considering a restricted version of the \textsc{Quantum $q$-Tsallis Entropy Approximation Problem} (\TsallisQEA{}) introduced in~\cite{LW25}, in which the quantum state is constrained to have rank at most \textit{two}: 

\begin{definition}[Rank-Two Quantum $q$-Tsallis Entropy Approximation, \RankTwoTsallisQEA{}]
	\label{def:Rank2TsallisQEA}
    Let $Q$ be a quantum circuit acting on $m$ qubits and having $n$ specified output qubits, where $m(n)$ is a polynomial in $n$. Let $\rho$ be a quantum state obtained by running $Q$ on $\ket{0}^{\otimes m}$ and tracing out the non-output qubits, such that the rank of $\rho$ is at most two. Let $g(n)$ and $t(n)$ be nonnegative efficiently computable functions. The promise problem $\RankTwoTsallisQEA[t(n),g(n)]$ asks whether the following holds:
    \begin{itemize}
	   \item \emph{Yes:} The quantum circuit $Q$ satisfies that $\Sq(\rho) \geq t(n) + g(n)$;
	   \item \emph{No:} The quantum circuit $Q$ satisfies that $\Sq(\rho) \leq t(n) - g(n)$.
    \end{itemize}
\end{definition}

This section's main result establishes that \RankTwoTsallisQEA{} is \BQP{}-hard for every real-valued positive order $q$, even when the promise gap (i.e., precision) is constant:

\begin{theorem}[Computational hardness of \RankTwoTsallisQEA{}]
    \label{thm:comp-hardness-Tsallis}
    There exists a family of threshold functions $t(n;q)$ and gap functions $g(n;q)$, with the gap function bounded below by a positive constant depending only on $q$, such that the following statements hold: 
    \begin{enumerate}[label={\upshape(\arabic*)}]
        \item For every real-valued order $q \in (0,1)$, $\RankTwoTsallisQEA[t(n;q), g(n;q)]$ is \BQP{}-hard for all integers $n\geq \ceil{1/q}$.
        \item For every order $q \in [1,3]$, $\RankTwoTsallisQEA[t(n;q), g(n;q)]$ is \BQP{}-hard for all integers $n\geq 2$. 
        \item For every real-valued  order $q \in (3,\infty)$, $\RankTwoTsallisQEA[t(n;q), g(n;q)]$ is \BQP{}-hard for all integers $n\geq \ceil{\log_2{q}}$.
    \end{enumerate}
    The explicit forms of $t(n;q)$ and $g(n;q)$ depend on the interval of $q$ -- namely, $(0,1)$, $[1,2)$, $\cbra{2}$, $(2,3]$, and $(3,\infty)$ -- and are given in \Cref{thm:Rank2TsallisQEA2-BQPhard,thm:ConstRankTsallisQEA-BQPhard-Qge3,thm:Rank2TsallisQEA-BQPhard-2leQle3,thm:Rank2TsallisQEA-BQPhard-0leQle2}. 
\end{theorem}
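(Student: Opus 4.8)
The plan is to prove \Cref{thm:comp-hardness-Tsallis} interval by interval, in exact parallel with the R\'enyi argument behind \Cref{thm:comp-hardness-Renyi}, and then assemble the per-interval statements. The backbone is the order-$2$ case. By \Cref{thm:pureQJType-eq-BinEntropy}\ref{thmitem:pureQJT-eq-TsallisBinE}, the $2$-Tsallis entropy of the uniform rank-$2$ state $\frac{1}{2}\rbra*{\ketbra{\psi_0}{\psi_0}+\ketbra{\psi_1}{\psi_1}}$ equals $\Hqq{2}\rbra*{\frac{1-\abs{\innerprod{\psi_0}{\psi_1}}}{2}}=\frac{1}{2}\rbra*{1-\abs{\innerprod{\psi_0}{\psi_1}}^2}$, so estimating it to constant precision is literally estimating the pure-state infidelity. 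Since the latter is \BQP{}-hard by \Cref{lemma:PureInfidelity-BQPhard}, and the rank-$2$ state is preparable with a single query to each of $Q_0,Q_1$, I would first establish \BQP{}-hardness of $\RankTwoTsallisQEAnoq_2$ with a constant gap (\Cref{thm:Rank2TsallisQEA2-BQPhard}), which then serves as the source problem for every other order.

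For each $q\neq 2$ the reduction is from $\RankTwoTsallisQEAnoq_2$ to the order-$q$ problem \RankTwoTsallisQEA{}, driven solely by the binary-entropy bounds of \Cref{sec-new-binary-entropies-bounds} and the correspondence $\Sq\rbra*{\frac12\rbra*{\ketbra{\psi_0}{\psi_0}+\ketbra{\psi_1}{\psi_1}}}=\Hq\rbra*{\frac{1-\abs{\innerprod{\psi_0}{\psi_1}}}{2}}$ from \Cref{thm:pureQJType-eq-BinEntropy}\ref{thmitem:pureQJT-eq-TsallisBinE}. A \emph{yes} instance of the source problem has $\Hqq{2}$ close to its maximum $\frac12$, while a \emph{no} instance has $\Hqq{2}$ exponentially small in $n$; pushing these two values through the inequalities yields separated ranges for $\Hq$, and setting $t(n;q)$ to the midpoint and $g(n;q)$ to half the separation completes the reduction. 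The inequalities split into three regimes. For $q\in(0,2)$ I would use the lower bound $2\Hq\rbra*{\frac12}\Hqq{2}(x)\le\Hq(x)$ (\Cref{lemma:Tsallis-binary-entropy-lower-bound}) to control \emph{yes} instances and the upper bound $\Hq(x)\le 2^{q/2}\Hq\rbra*{\frac12}\rbra*{\Hqq{2}(x)}^{q/2}$ (\Cref{thm:Tsallis-binary-entropy-upper-bound-0leQle2}) for \emph{no} instances, giving \Cref{thm:Rank2TsallisQEA-BQPhard-0leQle2}; the case $q=1$, with $\Hqq{1}=\H$, recovers \Cref{eq:binary-entropy-bounds-via-TsallisTwo}. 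For $q\in(2,3]$ I would use the two-sided bound $\frac{q}{2(q-1)}\Hqq{2}(x)\le\Hq(x)\le 2\Hq\rbra*{\frac12}\Hqq{2}(x)$ (\Cref{thm:improved-Tsallis-binary-entropy-lower-bound}\ref{thmitem:Tsallis-2leQle3}), giving \Cref{thm:Rank2TsallisQEA-BQPhard-2leQle3}; and for $q\in(3,\infty)$ the two-sided bound with the roles of the two coefficients exchanged, $2\Hq\rbra*{\frac12}\Hqq{2}(x)\le\Hq(x)\le\frac{q}{2(q-1)}\Hqq{2}(x)$ (\Cref{lemma:Tsallis-binary-entropy-lower-bound} and \Cref{thm:improved-Tsallis-binary-entropy-lower-bound}\ref{thmitem:Tsallis-Qge3}), giving \Cref{thm:ConstRankTsallisQEA-BQPhard-Qge3}.

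The delicate part, and the main obstacle, is certifying that $g(n;q)$ stays bounded below by a positive constant independent of $n$ over the admissible range of $n$, rather than merely being positive. This is exactly where the lower endpoints on $n$ originate. As $q\to 0^{+}$, the exponent $q/2$ in the upper bound makes the \emph{no}-value, proportional to $\rbra*{\Hqq{2}(x)}^{q/2}\approx 2^{-(2n+1)q/2}$, decay only once $(2n+1)q$ is of constant order, forcing $n\gtrsim\floor{1/q}$; in this regime $\Hq\rbra*{\frac12}$ itself stays bounded below (it increases from $\ln 2$ toward $1$ as $q\downarrow 0$), so the \emph{yes}-value is a genuine positive constant and the gap is universal there. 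As $q\to\infty$, by contrast, the \emph{yes}-lower-bound coefficient $2\Hq\rbra*{\frac12}$ and the \emph{no}-upper-bound coefficient $\frac{q}{2(q-1)}$ pull apart multiplicatively by a factor of order $q$, so the two ranges separate only once $2^{2n}\gtrsim q$, i.e.\ $n\gtrsim\floor{\log_2 q}$. The second subtlety, and the structural reason the Tsallis analysis needs one more interval than the R\'enyi one, is the transition point $q^{*}(x)\in[2,3]$: the coefficients $2\Hq\rbra*{\frac12}$ and $\frac{q}{2(q-1)}$ swap their roles as upper versus lower bounds between $(2,3]$ and $(3,\infty)$. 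I would therefore carry out the midpoint/half-gap bookkeeping separately on each subinterval, each time verifying both $p_\yes(n;q)>p_\no(n;q)$ and a constant-in-$n$ lower bound on $g(n;q)=\frac12\rbra*{p_\yes-p_\no}$ over the stated range of $n$. Collecting the four per-interval theorems then yields \Cref{thm:comp-hardness-Tsallis}.
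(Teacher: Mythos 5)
Your proposal is correct and follows essentially the same route as the paper: it establishes the \BQP{}-hardness of $\RankTwoTsallisQEAnoq_2$ from \Cref{lemma:PureInfidelity-BQPhard} via the pure-state correspondence of \Cref{thm:pureQJType-eq-BinEntropy}\ref{thmitem:pureQJT-eq-TsallisBinE}, then reduces that source problem to every other order using exactly the binary-entropy inequalities of \Cref{lemma:Tsallis-binary-entropy-lower-bound}, \Cref{thm:Tsallis-binary-entropy-upper-bound-0leQle2}, and \Cref{thm:improved-Tsallis-binary-entropy-lower-bound}, with the same midpoint/half-gap choice of $t(n;q)$ and $g(n;q)$ and the same per-interval verification that the gap stays bounded below in $n$, matching \Cref{thm:Rank2TsallisQEA2-BQPhard,thm:Rank2TsallisQEA-BQPhard-0leQle2,thm:Rank2TsallisQEA-BQPhard-2leQle3,thm:ConstRankTsallisQEA-BQPhard-Qge3}. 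The only blemish is a heuristic slip in the large-$q$ regime (your condition $2^{2n}\gtrsim q$ would suggest $n\gtrsim\frac{1}{2}\log_2 q$ rather than the stated $n\geq\ceil{\log_2 q}$), which is harmless since the actual positivity of $g(n;q)$ is deferred to, and confirmed by, the explicit per-interval computation just as in the paper.
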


It is worth noting that the \BQP{}-hardness of $\RankTwoTsallisQEA$ for $1 \leq q \leq 2$ under \textit{Turing reduction} was shown in~\cite[Theorem 5.8]{LW25}. In contrast, our constructions in \Cref{thm:Rank2TsallisQEA2-BQPhard} and \Cref{thm:Rank2TsallisQEA-BQPhard-0leQle2}\ref{thmitem:TsallisQEA-BQPhard-1leQle2} give a more direct approach and demonstrate the \BQP{}-hardness under \textit{Karp reduction}. 
The remainder of this section is devoted to the proof of \Cref{thm:comp-hardness-Tsallis}, which proceeds by examining each interval of $q$ identified in the theorem individually. In particular, using the correspondence between the quantum $q$-Tsallis entropy of $\frac{1}{2}\rbra*{ \ketbra{\psi_0}{\psi_0} + \ketbra{\psi_1}{\psi_1} }$ and the $q$-Tsallis binary entropy of $\frac{1-\abs*{\innerprod{\psi_0}{\psi_1}}}{2}$, as stated in \Cref{thm:pureQJType-eq-BinEntropy}\ref{thmitem:pureQJT-eq-TsallisBinE}, we will prove the cases of orders $q \in (0,2)\cup(2,\infty)$ via the reductions from $\RankTwoTsallisQEAnoq_2$ to \RankTwoTsallisQEA{}.

\subsection{The case of \texorpdfstring{$q=2$}{}}

\begin{theorem}[$\RankTwoTsallisQEAnoq_2$ is \BQP{}-hard]
    \label{thm:Rank2TsallisQEA2-BQPhard}
    Let $t(n)$ and $g(n)$ be efficiently computable functions. For every integer $n\geq 2$, 
    \[ \RankTwoTsallisQEAnoq_2[t(n),g(n)] \text{ is } \BQP{}\text{-hard}.\]
    Here, the threshold function is chosen as $t(n) = \frac{1}{4} + 2^{-n-1} - 2^{-2n-1}$, and the gap function is specified as $g(n) = \frac{1}{4} - 2^{-n-1}$. 
\end{theorem}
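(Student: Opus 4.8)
The plan is to mirror the $\alpha=2$ R\'enyi argument of \Cref{thm:Rank2RenyiQEA2-BQPhard}, but exploiting the fact that the order-$2$ Tsallis entropy of a uniform rank-$2$ state is \emph{exactly linear} in the pure-state infidelity rather than merely comparable to it. Concretely, I would begin from \Cref{lemma:PureInfidelity-BQPhard}: for every integer $n \geq 2$, distinguishing $1-\abs{\innerprod{\psi_0}{\psi_1}}^2 \geq (1-2^{-n})^2$ from $1-\abs{\innerprod{\psi_0}{\psi_1}}^2 \leq 2^{-2n}$ is \BQP{}-hard, where $\ket{\psi_0},\ket{\psi_1}$ are prepared by polynomial-size circuits $Q_0,Q_1$ of output length $n$. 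I would then form the uniform rank-$2$ state $\rho = \frac{1}{2}\rbra*{\ketbra{\psi_0}{\psi_0}+\ketbra{\psi_1}{\psi_1}}$, which (via the single-query construction described in the footnote in the proof of \Cref{thm:Rank2RenyiQEA2-BQPhard}) is prepared by a polynomial-size circuit $Q$ of output length exactly $n$.

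The crux is the identity obtained by specializing \Cref{thm:pureQJType-eq-BinEntropy}\ref{thmitem:pureQJT-eq-TsallisBinE} (equivalently \Cref{eq:rank-2-Tsallis-EQ-binary}) together with the direct computation $\QJT{2} = \frac{1}{2}\rbra*{1-\abs{\innerprod{\psi_0}{\psi_1}}^2}$ already used in the proof of \Cref{lemma:QJT2-vs-QJTq-Qeq2}, namely
\[ \Sqq{2}(\rho) = \tfrac{1}{2}\rbra*{1-\abs{\innerprod{\psi_0}{\psi_1}}^2}. \]
Since this is an equality, the reduction is lossless. Pushing the two promise bounds through it gives, for a \emph{yes} instance, $\Sqq{2}(\rho) \geq \frac{1}{2}(1-2^{-n})^2 = \frac{1}{2}-2^{-n}+2^{-2n-1} =: p_\yes(n)$, and for a \emph{no} instance, $\Sqq{2}(\rho) \leq \frac{1}{2}\cdot 2^{-2n} = 2^{-2n-1} =: p_\no(n)$. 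Setting $t(n) := \rbra*{p_\yes(n)+p_\no(n)}/2$ and $g(n) := \rbra*{p_\yes(n)-p_\no(n)}/2$ and simplifying yields precisely $t(n) = \frac{1}{4}-2^{-n-1}+2^{-2n-1}$ and $g(n) = \frac{1}{4}-2^{-n-1}$, as stated.

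Finally I would confirm that the promise gap is genuinely positive: for $n \geq 2$ we have $2^{-n-1} \leq \frac{1}{8} < \frac{1}{4}$, so $g(n) > 0$. I do not expect a real analytic obstacle in this case: unlike every other order $q \neq 2$ in \Cref{table:Rank2TallisQEA-hardness}, the $q=2$ case requires no binary-entropy comparison inequality and is therefore the clean base case from which the remaining orders are reduced. The only points needing care are correctly bookkeeping the promise parameters inherited from \Cref{lemma:PureInfidelity-BQPhard} and verifying that $Q$ has output length exactly $n$, so that $t(n)$ and $g(n)$ are expressed in the intended variable and remain efficiently computable.
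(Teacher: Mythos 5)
Your proposal is correct and follows essentially the same route as the paper's own proof: the same reduction from \Cref{lemma:PureInfidelity-BQPhard}, the same single-query preparation of the uniform rank-$2$ state, the exact identity $\Sqq{2}(\rho) = \frac{1}{2}\rbra*{1-\abs{\innerprod{\psi_0}{\psi_1}}^2}$ from \Cref{thm:pureQJType-eq-BinEntropy}\ref{thmitem:pureQJT-eq-TsallisBinE}, and the same midpoint/half-gap choice of $t(n)$ and $g(n)$ with the same positivity check. Your values of $p_\yes(n)$, $p_\no(n)$, $t(n)$, and $g(n)$ all match the paper's computation.
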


\begin{proof}
    By \Cref{lemma:PureInfidelity-BQPhard}, deciding whether $1-\abs{\innerprod{\psi_0}{\psi_1}}^2$ is at least $1-2^{-2n}$ or at most $1-\rbra*{1-2^{-n}}^2$ is \BQP{}-hard for all integers $n\geq 2$, where the states $\ket{\psi_0}$ and $\ket{\psi_1}$ can be prepared by polynomial-size quantum circuits of output length $n$. 
    We now reduce this quantity to the quantum $2$-Tsallis entropy of the state $(\ketbra{\psi_0}{\psi_0}+\ketbra{\psi_1}{\psi_1})/2$, which can be prepared by a quantum circuit $Q$ of output length $n$,\footnote{See \Cref{footnote:Q-construction} for the specific construction of $Q$.}, via the following identity in \Cref{thm:pureQJType-eq-BinEntropy}\ref{thmitem:pureQJT-eq-TsallisBinE}:
    \begin{equation}
        \label{eq:Tsallis-rank-two} 
        \Sqq{2}\rbra*{\frac{\ketbra{\psi_0}{\psi_0}+\ketbra{\psi_1}{\psi_1}}{2}} = \frac{1-\abs{\innerprod{\psi_0}{\psi_1}}^2}{2}.
    \end{equation}
    
    Following \Cref{eq:Tsallis-rank-two}, we conclude that: 
    \begin{itemize}
        \item For \textit{yes} instances, $1-\abs{\innerprod{\psi_0}{\psi_1}}^2 \geq 1-2^{-2n}$ implies that
        \[\Sqq{2}\rbra*{\frac{\ketbra{\psi_0}{\psi_0}+\ketbra{\psi_1}{\psi_1}}{2}} \geq \frac{1}{2} - 2^{-2n-1} \coloneqq p_\yes(n).\]        

        \item For \textit{no} instances, $1-\abs{\innerprod{\psi_0}{\psi_1}}^2 \leq 1-(1-2^{-n})^2$ yields that
        \[\Sqq{2}\rbra*{\frac{\ketbra{\psi_0}{\psi_0}+\ketbra{\psi_1}{\psi_1}}{2}} \leq \frac{1}{2} - \frac{1}{2}\rbra*{1-2^{-n+1}+2^{-2n}} = 2^{-n}-2^{-2n-1} \coloneqq p_\no(n).\]
    \end{itemize}

    Finally, we define the threshold and gap functions as $t(n) \coloneqq \rbra[\big]{p_\yes(n)+p_\no(n)}/2$ and $g(n) \coloneqq \rbra[\big]{p_\yes(n)-p_\no(n)}/2$, respectively. These evaluate to 
    \[t(n) = \frac{1}{4} + 2^{-n-1} - 2^{-2n-1} \quad \text{and} \quad g(n) = \frac{1}{4} - 2^{-n-1}.\] 
    The proof is complete upon noting that $g(n) > 0$ for all integer $n \geq 2$. 
\end{proof}

\subsection{The cases of \texorpdfstring{$0<q<2$}{}}

\begin{theorem}[\RankTwoTsallisQEA{} is \BQP{}-hard when $0<q<2$]
    \label{thm:Rank2TsallisQEA-BQPhard-0leQle2}
    Let $t(n;q)$ and $g(n;q)$ be efficiently computable functions, where $n\in\bbN$ and $q\in\bbR$. The following statements hold:
    \begin{enumerate}[label={\upshape(\arabic*)}, topsep=0.33em, itemsep=0.33em, parsep=0.33em]
        \item $\forall q\in(0,1)$, $\forall n \geq \ceil*{1/q}$, $\RankTwoTsallisQEA[t(n;q),g(n;q)]$ is \BQP{}-hard. \label{thmitem:TsallisQEA-BQPhard-0leQle1}
        \item $\forall q\in [1,2)$, $\forall n \geq 2$, $\RankTwoTsallisQEA[t(n;q),g(n;q)]$ is \BQP{}-hard. \label{thmitem:TsallisQEA-BQPhard-1leQle2}
    \end{enumerate}
    Here, the threshold function is defined as $t(n;q) = \Hq\rbra[\big]{\frac{1}{2}} \cdot \frac{1}{2} \rbra*{ 1-2^{-2n} + 2^{(1-n)q/2} }$, and the gap function is given by $g(n;q) = \Hq\rbra[\big]{\frac{1}{2}} \cdot \frac{1}{2} \rbra*{ 1-2^{-2n} - 2^{(1-n)q/2} }$.
\end{theorem}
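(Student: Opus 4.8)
The plan is to reuse the instance-level reduction of \Cref{thm:Rank2TsallisQEA2-BQPhard} essentially verbatim, replacing the exact order-$2$ identity by the two-sided binary-entropy inequalities proved in \Cref{sec-new-binary-entropies-bounds} so as to cover every order $q\in(0,2)$. Concretely, I start from \Cref{lemma:PureInfidelity-BQPhard}: deciding whether $1-\abs{\innerprod{\psi_0}{\psi_1}}^2 \geq (1-2^{-n})^2$ or $\leq 2^{-2n}$ is \BQP{}-hard, and I build the rank-$2$ state $(\ketbra{\psi_0}{\psi_0}+\ketbra{\psi_1}{\psi_1})/2$ with the same circuit $Q$ used there. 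By \Cref{thm:pureQJType-eq-BinEntropy}\ref{thmitem:pureQJT-eq-TsallisBinE}, the $q$-Tsallis entropy of this state equals $\Hq(x)$ with $x=(1-\abs{\innerprod{\psi_0}{\psi_1}})/2$, and in particular $\Hqq{2}(x)=(1-\abs{\innerprod{\psi_0}{\psi_1}}^2)/2$. Hence a \emph{yes} instance gives $\Hqq{2}(x)\geq (1-2^{-n})^2/2$ while a \emph{no} instance gives $\Hqq{2}(x)\leq 2^{-2n-1}$.

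Next I sandwich $\Hq(x)$ between these order-$2$ quantities using the new bounds. For \emph{yes} instances I apply the lower bound $\Hq(x)\geq 2\Hq(1/2)\,\Hqq{2}(x)$ from \Cref{lemma:Tsallis-binary-entropy-lower-bound} (valid throughout $[0,2]$), obtaining $\Hq(x)\geq \Hq(1/2)\,(1-2^{-n})^2 \eqqcolon p_\yes(n;q)$. For \emph{no} instances I apply the upper bound $\Hq(x)\leq 2^{q/2}\Hq(1/2)\,(\Hqq{2}(x))^{q/2}$ from \Cref{thm:Tsallis-binary-entropy-upper-bound-0leQle2}; substituting $\Hqq{2}(x)\leq 2^{-2n-1}$ and simplifying the exponent $\tfrac{q}{2}+\tfrac{(-2n-1)q}{2}=-nq$ gives $\Hq(x)\leq \Hq(1/2)\,2^{-nq}\eqqcolon p_\no(n;q)$. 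Taking $t(n;q)=(p_\yes+p_\no)/2$ and $g(n;q)=(p_\yes-p_\no)/2$ reproduces exactly the threshold and gap functions in the statement, so the reduction is complete once the promise gap is shown to be strictly positive.

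The positivity of $g(n;q)$ is where the ranges of $n$ in the two items originate, and I expect it to be the main obstacle. Since $\Hq(1/2)>0$, the condition $g(n;q)>0$ is equivalent to the single inequality $(1-2^{-n})^2 > 2^{-nq}$; as the left side increases and the right side decreases in $n$, it suffices to verify this at the smallest admissible $n$. For $q\in[1,2)$ and $n\geq 2$ this is immediate, since $2^{-nq}\leq 2^{-n}\leq \tfrac14 < \tfrac{9}{16}\leq (1-2^{-n})^2$. The regime $q\in(0,1)$ is the subtle one: for small $q$ the factor $2^{-nq}$ can be arbitrarily close to $1$ when $n$ is small, so no fixed $n$ works uniformly. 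The constraint $n\geq\ceil*{1/q}$ is precisely what resolves this -- it forces $nq\geq 1$, hence $2^{-nq}\leq \tfrac12$, while simultaneously $\ceil*{1/q}\geq 2$ guarantees $n\geq 2$ and thus $(1-2^{-n})^2\geq \tfrac{9}{16}>\tfrac12$. Combining the two yields a strictly positive gap; the worst case $n=2$, $nq=1$ leaves slack $\tfrac{9}{16}-\tfrac12=\tfrac{1}{16}$, and since $\Hq(1/2)$ is itself bounded away from zero on $(0,2)$, the gap is in fact bounded below by a universal constant, as claimed.
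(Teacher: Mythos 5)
Your reduction is the same as the paper's: you invoke \Cref{lemma:PureInfidelity-BQPhard} directly where the paper routes it through \Cref{thm:Rank2TsallisQEA2-BQPhard}, but that intermediate problem is just the same construction packaged as a separate statement, and you then apply exactly the same two inequalities (\Cref{lemma:Tsallis-binary-entropy-lower-bound} for \emph{yes} instances, \Cref{thm:Tsallis-binary-entropy-upper-bound-0leQle2} for \emph{no} instances) to obtain the identical $p_\yes(n;q)$, $p_\no(n;q)$, $t(n;q)$, and $g(n;q)$. Where you genuinely depart from the paper is in verifying positivity of the gap, and your argument is correct and considerably simpler. For $0<q<1/2$ the paper evaluates $g(1/q;q)$, factors it as a product $G_1(q)G_2(q)$, and runs a derivative analysis (through an auxiliary function $G_3$) to show this product is positive at $q=1/2$ and monotonically decreasing on $(0,1/2)$. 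You instead note that, since $\Hq(1/2)>0$ on $(0,2)$, positivity of $g(n;q)$ is \emph{equivalent} to $(1-2^{-n})^2 > 2^{-nq}$; the left side increases and the right side decreases in $n$, so it suffices to check the smallest admissible $n$, where $n\geq\ceil{1/q}$ forces $nq\geq 1$ (hence $2^{-nq}\leq 1/2$) and $\ceil{1/q}\geq 2$ forces $(1-2^{-n})^2\geq 9/16>1/2$. This replaces the paper's calculus with a two-line algebraic bound and gives a uniform slack of $1/16$ in $q$, which is arguably cleaner than what the paper does. The one loose end is your closing remark that $\Hq(1/2)$ is bounded away from zero on $(0,2)$: this is true and easy ($\Hq(1/2)=(1-2^{1-q})/(q-1)$ is decreasing in $q$, hence at least $\Hqq{2}(1/2)=1/2$), but you should record that one-line justification if you want the ``gap bounded below by a universal constant'' claim; it is not needed for \BQP{}-hardness itself, only for matching the framing of \Cref{thm:comp-hardness-Tsallis}.
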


\begin{proof}
    Noting that $\RankTwoTsallisQEAnoq_2[t(n),g(n)]$ is \BQP{}-hard for all integers $n \geq 2$, where $t(n)$ and $g(n)$ are specified in \Cref{thm:Rank2TsallisQEA2-BQPhard}, we present the reduction from $\RankTwoTsallisQEAnoq_2$ to \RankTwoTsallisQEA{} for $0 < q < 2$: 
    \begin{itemize}
        \item For \textit{yes} instances, combining the lower bound for Tsallis binary entropy (\Cref{lemma:Tsallis-binary-entropy-lower-bound}) and \Cref{thm:pureQJType-eq-BinEntropy}\ref{thmitem:pureQJT-eq-TsallisBinE} leads to the following bound: 
        \begin{align*}
            \Sq\rbra*{ \frac{\ketbra{\psi_0}{\psi_0}+\ketbra{\psi_1}{\psi_1}}{2} } &\geq 2 \Hq\rbra*{\frac{1}{2}} \cdot \Sqq{2}\rbra*{\frac{\ketbra{\psi_0}{\psi_0}+\ketbra{\psi_1}{\psi_1}}{2}} \\
            &\geq \Hq\rbra*{\frac{1}{2}} \cdot \rbra*{1-2^{-2n}} \coloneqq p_\yes(n;q).
        \end{align*}
        \item For \textit{no} instances, the upper bound for Tsallis binary entropy (\Cref{thm:Tsallis-binary-entropy-upper-bound-0leQle2}) and \Cref{thm:pureQJType-eq-BinEntropy}\ref{thmitem:pureQJT-eq-TsallisBinE} together yields that
        \begin{align*}
            \Sq\rbra*{ \frac{\ketbra{\psi_0}{\psi_0}+\ketbra{\psi_1}{\psi_1}}{2} } &\leq 2^{q/2} \Hq\rbra*{\frac{1}{2}} \cdot \Sqq{2}\rbra*{ \frac{\ketbra{\psi_0}{\psi_0}+\ketbra{\psi_1}{\psi_1}}{2} }^{q/2}\\
            &\leq 2^{q/2} \Hq\rbra*{\frac{1}{2}} \cdot \rbra*{2^{-n} - 2^{-2n-1}}^{q/2}\\
            &\leq \Hq\rbra*{\frac{1}{2}} \cdot 2^{(1-n)q/2} \coloneqq p_\no(n;q).
        \end{align*}
    \end{itemize}
    
    Next, we define the threshold and gap functions as $t(n;q) \coloneqq \rbra[\big]{p_\yes(n;q)+p_\no(n;q)}/2$ and $g(n;q) \coloneqq \rbra[\big]{p_\yes(n;q)-p_\no(n;q)}/2$, respectively, which evaluate to 
    \[t(n;q) = \Hq\rbra[\big]{\frac{1}{2}} \cdot \frac{1}{2} \rbra*{ 1-2^{-2n} + 2^{(1-n)q/2} }  \;\;\text{and}\;\; g(n;q) = \Hq\rbra[\big]{\frac{1}{2}} \cdot \frac{1}{2} \rbra*{ 1-2^{-2n} - 2^{(1-n)q/2} }.\]
    It is easy to verify that, for any fixed $q \in (0,2)$, $g(n;q)$ is monotonically increasing for $n>0$. 

    \vspace{1em}
    For simplicity, we first demonstrate \Cref{thmitem:TsallisQEA-BQPhard-1leQle2}. This follows from the observation that 
    \[\forall q\in\left[\frac{1}{2},2\right), \; \forall n \geq 2, \quad g(n;q) \geq g(2;q) = \Hq\rbra*{\frac{1}{2}} \cdot \frac{1}{2} \rbra*{\frac{15}{16} - 2^{-q/2}} > 0.\] 

    Next, noting that $\ceil{1/q} = 2$ for $1/2 \leq q < 1$, it remains to establish \Cref{thmitem:TsallisQEA-BQPhard-0leQle1} only for the interval $q\in(0,1/2)$, where $n\geq \ceil*{1/q} \geq 3$. To complete the proof, it therefore suffices to show that: for all $q\in (0,1/2)$ and for all $n \geq \ceil*{1/q}$, 
    \[ g(n;q)\geq g\rbra*{\ceil*{\frac{1}{q}};q} \geq g(3;q) \geq \Hq\rbra*{\frac{1}{2}} \cdot \frac{1}{2} \rbra*{\frac{63}{64} - 2^{-\frac{1}{4}}} \geq \Hq\rbra*{\frac{1}{2}} \cdot \frac{1}{14} > 0. \]
    Here, the second inequality uses that $\ceil*{1/q} \geq 3$ for all $q\in(0,1/2)$, and the third inequality follows from the fact that $2^{-q(n-1)/2} \leq 2^{-(1-q)/2} \leq 2^{-1/4}$ since $n \geq 1/q$ and $q<1/2$.
\end{proof}

\subsection{The cases of \texorpdfstring{$2<q\leq 3$}{}}

\begin{theorem}[\RankTwoTsallisQEA{} is \BQP{}-hard when $2 < q \leq 3$]
    \label{thm:Rank2TsallisQEA-BQPhard-2leQle3}
    Let $t(n;q)$ and $g(n;q)$ be efficiently computable functions. For all $ q\in(2,3]$ and all integers $n \geq 2$, 
    \[ \RankTwoTsallisQEA[t(n;q),g(n;q)] \text{ is } \BQP{}\text{-hard}.\]
    Here, the threshold is defined as $t(n;q) = \frac{q}{8(q-1)} \cdot \rbra*{1-2^{-2n}} + \Hq\rbra*{\frac{1}{2}} \cdot 2^{-n}$, and the gap function is given by $g(n;q) = \frac{q}{8(q-1)} \cdot \rbra*{1-2^{-2n}} - \Hq\rbra*{\frac{1}{2}} \cdot 2^{-n}$.
\end{theorem}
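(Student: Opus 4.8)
The plan is to mirror the reductions in \Cref{thm:Rank2TsallisQEA-BQPhard-0leQle2}, reducing from $\RankTwoTsallisQEAnoq_2$, which is \BQP{}-hard for all integers $n \geq 2$ by \Cref{thm:Rank2TsallisQEA2-BQPhard}. Writing $\rho \coloneqq \frac{1}{2}\rbra*{\ketbra{\psi_0}{\psi_0}+\ketbra{\psi_1}{\psi_1}}$ (which is preparable by a circuit of output length $n$, see \Cref{footnote:Q-construction}), I would invoke the correspondence in \Cref{thm:pureQJType-eq-BinEntropy}\ref{thmitem:pureQJT-eq-TsallisBinE}, which gives $\Sq(\rho) = \Hq\rbra*{\frac{1-\abs{\innerprod{\psi_0}{\psi_1}}}{2}}$ and, at order $2$, the exact identity $\Sqq{2}(\rho) = \frac{1}{2}\rbra*{1-\abs{\innerprod{\psi_0}{\psi_1}}^2}$. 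Thus the \BQP{}-hard promise on $1-\abs{\innerprod{\psi_0}{\psi_1}}^2$ becomes a promise on $\Sqq{2}(\rho)$, and the task reduces to propagating it to $\Sq(\rho)$.

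The propagation is driven by the two-sided bound of \Cref{thm:improved-Tsallis-binary-entropy-lower-bound}\ref{thmitem:Tsallis-2leQle3}, which, under the above correspondence, reads $\frac{q}{2(q-1)}\cdot\Sqq{2}(\rho) \leq \Sq(\rho) \leq 2\Hq\rbra*{\frac{1}{2}}\cdot\Sqq{2}(\rho)$ for $q\in(2,3]$. For \emph{yes} instances, where $\Sqq{2}(\rho) \geq \frac{1}{2}(1-2^{-n})^2$, I would apply the lower bound to get $\Sq(\rho) \geq \frac{q}{4(q-1)}(1-2^{-n})^2 \coloneqq p_\yes(n;q)$; for \emph{no} instances, where $\Sqq{2}(\rho) \leq 2^{-2n-1}$, I would apply the upper bound to get $\Sq(\rho) \leq \Hq\rbra*{\frac{1}{2}}\cdot 2^{-2n} \coloneqq p_\no(n;q)$. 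Taking $t(n;q) \coloneqq \frac{1}{2}(p_\yes+p_\no)$ and $g(n;q) \coloneqq \frac{1}{2}(p_\yes-p_\no)$ and regrouping $(1-2^{-n})^2 = 1-2^{-n+1}+2^{-2n}$ reproduces exactly the stated $t(n;q)$ and $g(n;q)$; conveniently, the stated form $g(n;q) = \frac{q}{4(q-1)}\rbra*{\frac{1}{2}-2^{-n}} - 2^{-2n-1}\rbra*{\Hq\rbra*{\frac{1}{2}} - \frac{q}{4(q-1)}}$ is already in the shape needed for the final step.

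The only nontrivial point, and hence the main obstacle, is confirming that the promise gap is positive, i.e.\ $g(n;q) > 0$ for all $q\in(2,3]$ and integers $n\geq 2$. I would first record the sign fact $\Hq\rbra*{\frac{1}{2}} \geq \frac{q}{4(q-1)}$ on $(2,3]$, which is precisely the consistency of the two coefficients ($\frac{q}{2(q-1)} \leq 2\Hq(1/2)$) and holds with equality at $q\in\cbra{2,3}$. Given this, the nonpositive summand $-2^{-2n-1}\rbra*{\Hq(1/2)-\frac{q}{4(q-1)}}$ rises toward $0$ as $n$ grows while $\frac{1}{2}-2^{-n}$ increases, so $g(n;q)$ is monotonically increasing in $n$ and it suffices to check the base case $g(2;q) > 0$. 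Using the elementary bounds $\frac{q}{4(q-1)} \in \sbra[\big]{\frac{3}{8},\frac{1}{2}}$ and $\Hq(1/2) \in \sbra[\big]{\frac{3}{8},\frac{1}{2}}$ for $q\in(2,3]$, a short computation gives $g(2;q) \geq \frac{3}{32} - \frac{1}{64} = \frac{5}{64} > 0$. The delicate feature to watch is that the lower and upper coefficients coincide at the endpoints $q=2,3$, so the gap might a priori degenerate there; it does not, because for $n\geq 2$ the ratio $(1-2^{-n})^2 / 2^{-2n} \geq 9$ leaves a comfortable margin between $p_\yes$ and $p_\no$.
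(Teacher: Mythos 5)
Your proposal is correct and takes essentially the same route as the paper: the same Karp reduction from $\RankTwoTsallisQEAnoq_2$ (\Cref{thm:Rank2TsallisQEA2-BQPhard}), the same two-sided bound from \Cref{thm:improved-Tsallis-binary-entropy-lower-bound}\ref{thmitem:Tsallis-2leQle3} combined with \Cref{thm:pureQJType-eq-BinEntropy}\ref{thmitem:pureQJT-eq-TsallisBinE}, identical $p_\yes$, $p_\no$, $t$, $g$, and the same monotonicity-in-$n$ argument resting on the sign fact $\Hq\rbra*{\frac{1}{2}} \geq \frac{q}{4(q-1)}$ on $(2,3]$. The only (harmless) divergence is the base case $n=2$: you verify positivity via interval bounds $\frac{q}{4(q-1)}, \Hq\rbra*{\frac{1}{2}} \in \sbra[\big]{\frac{3}{8},\frac{1}{2}}$, whereas the paper analyzes $G(q) \coloneqq (q-1)\cdot g(2;q) = \frac{1}{128}\rbra*{2^{3-q}+9q-4}$ and shows it is increasing with $G(2)=1/8$; both verifications are valid.
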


\begin{proof}
    Noting that $\RankTwoTsallisQEAnoq_2[t(n),g(n)]$ is \BQP{}-hard for all integers $n \geq 2$, where $t(n)$ and $g(n)$ are specified in \Cref{thm:Rank2TsallisQEA2-BQPhard}, we show the reduction from $\RankTwoTsallisQEAnoq_2$ to \RankTwoTsallisQEA{} for $q\in (2,3]$:
    \begin{itemize}
        \item For \textit{yes} instances, the lower bound for the $q$-Tsallis binary entropy (\Cref{thm:improved-Tsallis-binary-entropy-lower-bound}\ref{thmitem:Tsallis-2leQle3}) and \Cref{thm:pureQJType-eq-BinEntropy}\ref{thmitem:pureQJT-eq-TsallisBinE} together imply that:
        \begin{subequations}
        \label{eq:Rank2TsallisQEA-BQPhard-1leQle2-paccY}
        \begin{align}
            \Sq\rbra*{ \frac{\ketbra{\psi_0}{\psi_0}+\ketbra{\psi_1}{\psi_1}}{2} } &\geq \frac{q}{2(q-1)} \cdot \Sqq{2}\rbra*{ \frac{\ketbra{\psi_0}{\psi_0}+\ketbra{\psi_1}{\psi_1}}{2} }\\
            &\geq \frac{q}{4(q-1)} \cdot \rbra*{1-2^{-2n}} \coloneqq p_\yes(n;q). 
        \end{align}
        \end{subequations}
        \item For \textit{no} instances, combining the upper bound for the $q$-Tsallis binary entropy (\Cref{thm:improved-Tsallis-binary-entropy-lower-bound}\ref{thmitem:Tsallis-2leQle3}) and \Cref{thm:pureQJType-eq-BinEntropy}\ref{thmitem:pureQJT-eq-TsallisBinE} yields that:
        \begin{align*}
            \Sq\rbra*{ \frac{\ketbra{\psi_0}{\psi_0}+\ketbra{\psi_1}{\psi_1}}{2} } &\leq 2 \Hq\rbra*{\frac{1}{2}} \cdot \Sqq{2}\rbra*{ \frac{\ketbra{\psi_0}{\psi_0}+\ketbra{\psi_1}{\psi_1}}{2} }\\
            &\leq 2 \Hq\rbra*{\frac{1}{2}} \cdot \rbra*{2^{-n} - 2^{-2n-1}}\\
            &\leq \Hq\rbra*{\frac{1}{2}} \cdot 2^{-n+1} \coloneqq p_\no(n;q).
        \end{align*}
    \end{itemize}

    Next, we choose the threshold and gap functions as follows: 
    \begin{align*}
         t(n;q) &\coloneqq \frac{p_\yes(n;q)+p_\no(n;q)}{2} = \frac{q}{8(q-1)} \cdot \rbra*{1-2^{-2n}} + \Hq\rbra*{\frac{1}{2}} \cdot 2^{-n},\\
         g(n;q) &\coloneqq \frac{p_\yes(n;q)-p_\no(n;q)}{2} = \frac{q}{8(q-1)} \cdot \rbra*{1-2^{-2n}} - \Hq\rbra*{\frac{1}{2}} \cdot 2^{-n}.
    \end{align*}

    It remains to show that $g(n;q)>0$ holds for all $q \in(2,3]$ and all integers $n \geq 2$. 
    For all fixed $q \in (2,3]$, the function $g(n;q)$ is monotonically increasing in $n$, since both negative terms decrease as $n$ increases.
    As a result, it suffices to prove the claim for $n=2$.
    Given that $q-1 > 0$ for all $q \in (2,3]$ and $\Hq\rbra[\big]{\frac{1}{2}} = \frac{1-2^{1-q}}{q-1}$, we define the function
    \[G(q) \coloneqq 128(q-1) \cdot g(2;q) = 2^{6-q}+15q-32.\]
    
    To prove that $G(q) > 0$ for all such $q$, we consider the derivative $\frac{\dd}{\dd q} G(q) = 15 - \ln(2) \cdot 2^{6-q}$, which satisfies $\frac{\dd G}{\dd q} > 0$ for $2 \leq q \leq 3$. Consequently, $G(q)$ is monotonically increasing on $q \in (2,3]$, and we conclude the proof by observing that $G(q) > G(2) = 14>0$.
\end{proof}

\subsection{The cases of \texorpdfstring{$q>3$}{}}

\begin{theorem}[\RankTwoTsallisQEA{} is \BQP{}-hard when $q>3$]
    \label{thm:ConstRankTsallisQEA-BQPhard-Qge3}
    Let $t(n;q)$ and $g(n;q)$ be efficiently computable functions. For all $q>3$ and all integers $n \geq \ceil*{\log_2(q)}$, 
    \[ \RankTwoTsallisQEA[t(n;q),g(n;q)] \text{ is } \BQP{}\text{-hard}.\]
    Here, the threshold function is given by $t(n;q) = \frac{1}{2} \Hq\rbra*{\frac{1}{2}} \cdot \rbra*{1 - 2^{-2n}} + \frac{q}{q-1} \cdot 2^{-n-2}$ and the gap function is chosen as $g(n;q) = \frac{1}{2} \Hq\rbra*{\frac{1}{2}} \cdot \rbra*{1 - 2^{-2n}} - \frac{q}{q-1} \cdot 2^{-n-2}$.
\end{theorem}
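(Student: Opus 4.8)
The plan is to establish \BQP{}-hardness through a Karp reduction from the order-$2$ problem $\RankTwoTsallisQEAnoq_2$, which is \BQP{}-hard by \Cref{thm:Rank2TsallisQEA2-BQPhard}, following the same template already employed for $0<q<2$ in \Cref{thm:Rank2TsallisQEA-BQPhard-0leQle2} and for $2<q\leq 3$ in \Cref{thm:Rank2TsallisQEA-BQPhard-2leQle3}. First I would take the \BQP{}-hard instance produced by \Cref{thm:Rank2TsallisQEA2-BQPhard}, namely a rank-at-most-two state $\rho=\frac{1}{2}\rbra*{\ketbra{\psi_0}{\psi_0}+\ketbra{\psi_1}{\psi_1}}$ prepared by a circuit $Q$ of output length $n$, for which $\Sqq{2}(\rho)\geq\frac{(1-2^{-n})^2}{2}$ on \emph{yes} instances and $\Sqq{2}(\rho)\leq 2^{-2n-1}$ on \emph{no} instances, and reuse the same circuit $Q$ as an instance of $\RankTwoTsallisQEA$ for the target order $q>3$.

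The heart of the reduction is to convert these order-$2$ bounds into order-$q$ bounds using the two-sided estimate for $q\geq 3$ in \Cref{thm:improved-Tsallis-binary-entropy-lower-bound}\ref{thmitem:Tsallis-Qge3}, that is $2\Hq\rbra*{\frac{1}{2}}\cdot\Hqq{2}(x)\leq\Hq(x)\leq\frac{q}{2(q-1)}\cdot\Hqq{2}(x)$, together with the correspondence of \Cref{thm:pureQJType-eq-BinEntropy}\ref{thmitem:pureQJT-eq-TsallisBinE} identifying $\Sq(\rho)$ and $\Sqq{2}(\rho)$ with $\Hq$ and $\Hqq{2}$ evaluated at $\frac{1-\abs{\innerprod{\psi_0}{\psi_1}}}{2}$. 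On \emph{yes} instances I would invoke the \emph{lower} bound to get $\Sq(\rho)\geq 2\Hq\rbra*{\frac{1}{2}}\,\Sqq{2}(\rho)\geq\Hq\rbra*{\frac{1}{2}}(1-2^{-n})^2\coloneqq p_\yes(n;q)$, while on \emph{no} instances the \emph{upper} bound gives $\Sq(\rho)\leq\frac{q}{2(q-1)}\,\Sqq{2}(\rho)\leq\frac{q}{4(q-1)}2^{-2n}\coloneqq p_\no(n;q)$. Note that, in contrast to the $2<q\leq 3$ case, the roles of the coefficients $2\Hq\rbra*{\frac{1}{2}}$ and $\frac{q}{2(q-1)}$ are now interchanged, reflecting the monotonicity transition at the point $q^{*}\in[2,3]$ discussed after \Cref{table:Rank2TallisQEA-hardness}. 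Defining $t(n;q)\coloneqq\frac{1}{2}\rbra*{p_\yes+p_\no}$ and $g(n;q)\coloneqq\frac{1}{2}\rbra*{p_\yes-p_\no}$ then produces the threshold and gap functions recorded in the statement.

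The remaining and main step is to certify that the promise gap is bounded below by a universal constant, i.e.\ that $g(n;q)>0$, equivalently $p_\yes(n;q)>p_\no(n;q)$, for all $q>3$ and all integers $n\geq\ceil*{\log_2 q}$. Since $p_\yes$ increases and $p_\no$ decreases in $n$, the gap $g(n;q)$ is monotonically increasing in $n$, so it suffices to control the smallest admissible $n$; rather than evaluating at the awkward value $n=\ceil*{\log_2 q}$, I would prove the cleaner sufficient claim that $g(n;q)>0$ whenever $2^{-n}\leq 1/q$, which is exactly what the hypothesis $n\geq\ceil*{\log_2 q}$ guarantees. Substituting $\Hq\rbra*{\frac{1}{2}}=\frac{1-2^{1-q}}{q-1}$ and writing $y\coloneqq 2^{-n}\in(0,1/q]$, the inequality $p_\yes>p_\no$ reduces, after clearing the positive factor $\frac{1}{q-1}$, to $4\rbra*{1-2^{1-q}}(1-y)^2>q\,y^2$; I would then bound the left side below using $1-2^{1-q}>\frac{3}{4}$ and $(1-y)^2\geq(1-1/q)^2$, and the right side above by $q\,y^2\leq 1/q$, reducing everything to the elementary inequality $3(q-1)^2>q$, which holds for all $q>3$. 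The anticipated difficulty is entirely in this positivity check: as $q\to\infty$ the factor $\Hq\rbra*{\frac{1}{2}}$ in $p_\yes$ shrinks like $\frac{1}{q-1}$ whereas $\frac{q}{4(q-1)}$ in $p_\no$ tends to $\frac{1}{4}$, so the two values degrade against one another, and the role of $n\geq\ceil*{\log_2 q}$ is precisely to force $2^{-n}\leq 1/q$ and thereby keep $p_\no$ subdominant, confirming that the admissible-$n$ threshold grows only logarithmically in $q$.
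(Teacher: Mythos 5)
Your reduction is exactly the paper's: the same hard instance from \Cref{thm:Rank2TsallisQEA2-BQPhard}, the same two-sided bounds of \Cref{thm:improved-Tsallis-binary-entropy-lower-bound}\ref{thmitem:Tsallis-Qge3} applied through the correspondence \Cref{thm:pureQJType-eq-BinEntropy}\ref{thmitem:pureQJT-eq-TsallisBinE}, and the identical $p_\yes(n;q) = \Hq\rbra*{\frac{1}{2}}(1-2^{-n})^2$, $p_\no(n;q) = \frac{q}{4(q-1)}2^{-2n}$, with $t$ and $g$ the average and half-difference. Where you genuinely depart from the paper is the positivity check $g(n;q)>0$, which is the crux of the argument. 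The paper evaluates at $n=\ceil{\log_2 q}$ (effectively substituting $2^{-n}=1/q$), clears denominators to form $G(q) \coloneqq q^2(q-1)\,g = \frac{1}{2}\rbra*{q^2-\frac{9}{4}q+1}-2^{-q}(q-1)^2$, and then runs a calculus argument: bounding the number of zeros of $\frac{\dd}{\dd q}G$, sampling its sign at $q=0,1,2$, concluding monotonicity for $q>3$, and finishing with $G(3)=9/8>0$. You instead set $y \coloneqq 2^{-n}\leq 1/q$ and reduce $p_\yes > p_\no$ to $4(1-2^{1-q})(1-y)^2 > qy^2$, which follows from the three elementary estimates $1-2^{1-q}>\frac{3}{4}$, $(1-y)^2\geq(1-1/q)^2$, and $qy^2\leq 1/q$, leaving only $3(q-1)^2> q$, true for all $q>3$. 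Your route is correct and arguably cleaner: it avoids the paper's zero-counting/derivative analysis, works uniformly for every admissible $n$ without needing monotonicity of $g$ in $n$, and makes transparent why the threshold $n\geq\ceil{\log_2 q}$ is what keeps $p_\no$ subdominant as $\Hq\rbra*{\frac{1}{2}}\sim\frac{1}{q-1}$ decays; what it gives up is the explicit lower bound $G(q)\geq 9/8$ that the paper's computation yields at the minimal $n$. One caveat you inherit from the paper itself: the expressions $\frac{1}{2}(p_\yes\pm p_\no)$ do not literally simplify to the $t(n;q)$, $g(n;q)$ printed in the theorem statement (the statement appears to carry a typo, missing the $-\Hq\rbra*{\frac{1}{2}}2^{-n}$ cross term), but since the paper's own proof derives exactly the same expressions you do, this is not a gap in your argument.
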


\begin{proof}
    Noting that $\RankTwoTsallisQEAnoq_2[t(n),g(n)]$ is \BQP{}-hard for all integers $n \geq 2$, where $t(n)$ and $g(n)$ are specified in \Cref{thm:Rank2TsallisQEA2-BQPhard}, we establish the reduction from $\RankTwoTsallisQEAnoq_2$ to \RankTwoTsallisQEA{} for $q>3$:
    \begin{itemize}
        \item For \textit{yes} instances, combining the lower bound for the $q$-Tsallis binary entropy (\Cref{thm:improved-Tsallis-binary-entropy-lower-bound}\ref{thmitem:Tsallis-Qge3}) and \Cref{thm:pureQJType-eq-BinEntropy}\ref{thmitem:pureQJT-eq-TsallisBinE} leads to the following:
        \begin{subequations}
        \label{eq:ConstRankTsallisQEA-BQPhard-Qge3-paccY}
        \begin{align}
            \Sq\rbra*{ \frac{\ketbra{\psi_0}{\psi_0}+\ketbra{\psi_1}{\psi_1}}{2} } &\geq 2 \Hq\rbra*{\frac{1}{2}} \cdot \Sqq{2}\rbra*{ \frac{\ketbra{\psi_0}{\psi_0}+\ketbra{\psi_1}{\psi_1}}{2} }\\
            &\geq \Hq\rbra*{\frac{1}{2}} \cdot \rbra*{1 - 2^{-2n}} \coloneqq p_\yes(n;q).
        \end{align}
        \end{subequations}
        \item For \textit{no} instances, the upper bound for the $q$-Tsallis binary entropy (\Cref{thm:improved-Tsallis-binary-entropy-lower-bound}\ref{thmitem:Tsallis-Qge3}) and \Cref{thm:pureQJType-eq-BinEntropy}\ref{thmitem:pureQJT-eq-TsallisBinE} together imply that:
        \begin{align*}
            \Sq\rbra*{ \frac{\ketbra{\psi_0}{\psi_0}+\ketbra{\psi_1}{\psi_1}}{2} } &\leq \frac{q}{2(q-1)} \cdot \Sqq{2}\rbra*{ \frac{\ketbra{\psi_0}{\psi_0}+\ketbra{\psi_1}{\psi_1}}{2} }\\
            &\leq \frac{q}{2(q-1)} \cdot \rbra*{2^{-n}-2^{-2n-1}}\\
            &\leq \frac{q}{q-1} \cdot 2^{-n-1} \coloneqq p_\no(n;q).
        \end{align*}
    \end{itemize}

    Next, we select the threshold and gap functions as follows: 
    \begin{align*}
         t(n;q) &\coloneqq \frac{p_\yes(n;q)+p_\no(n;q)}{2} = \frac{1}{2} \Hq\rbra*{\frac{1}{2}} \cdot \rbra*{1 - 2^{-2n}} + \frac{q}{q-1} \cdot 2^{-n-2},\\
         g(n;q) &\coloneqq \frac{p_\yes(n;q)-p_\no(n;q)}{2} = \frac{1}{2} \Hq\rbra*{\frac{1}{2}} \cdot \rbra*{1 - 2^{-2n}} - \frac{q}{q-1} \cdot 2^{-n-2}.
    \end{align*}
    
    It remains to show that $g(n;q)>0$ holds for all $q>3$ and all integers $n \geq \ceil{\log_2{q}}$. Noting that $2^{-n} \leq 1/q$ and $\Hq\rbra{\frac{1}{2}}=\frac{1-2^{1-q}}{q-1}$, it follows that
    \begin{align*}
        g(n;q) &\geq \frac{1}{2} \Hq\rbra*{\frac{1}{2}} \cdot \rbra[\bigg]{1 - \frac{1}{q^2}} - \frac{q}{q-1} \cdot \frac{1}{4q}\\
        &= \frac{1}{4(q-1)}\rbra*{ 2(1-2^{1-q}) \rbra[\bigg]{1-\frac{1}{q^2}} -1 } \coloneqq \frac{G(q)}{4(q-1)}. 
    \end{align*}
    Then for every $q>3$, a direct calculation shows that 
    \[ \frac{\dd}{\dd q} G(q) = 2 \ln(2) \cdot 2^{1-q} \rbra*{1-\frac{1}{q^2}} + 4(1-2^{1-q}) \cdot \frac{1}{q^3} > 0, \]
    since $1-1/q^2>0$ and $1-2^{1-q}>0$. 
    Hence, $G(q)$ is monotonically increasing on $q\in(3,\infty)$. As a consequence, for every fixed $q>3$ and every $n\geq \ceil*{\log_2(q)}$, we obtain a positive lower bound independent of $n$, as desired: 
    \[ g(n;q) \geq \frac{G(q)}{4(q-1)} \geq \frac{G(3)}{4(q-1)} = \frac{1}{12(q-1)} > 0. \qedhere \]
\end{proof}

\section{Computational complexity of estimating order-\texorpdfstring{$0$}{} quantum entropies of rank-\texorpdfstring{$2$}{} states}

We begin by simplifying the definitions of quantum Tsallis and R\'enyi entropies of order $0$, yielding the following expressions: 
\begin{equation}
    \label{eq:order-zero-defs}
    \Sqq{0}(\rho) = \rank(\rho)-1 \quad \text{and} \quad \Saa{0}(\rho) = \ln\rank(\rho). 
\end{equation}

The main result of this section establishes that the order-$0$ promise problems $\RankTwoRenyiQEAnoa_0$ and $\RankTwoTsallisQEAnoq_0$ are not only \NQP{}-complete, but also their \NQP{}-hardness persists even under the largest possible promise gap: 
\begin{theorem}
    \label{thm:order-zero-NQPcomplete}
    For all $n\geq 2$, the following holds:
    \[\RankTwoRenyiQEAnoa_0[\ln(2)/2,\ln(2)/2] \text{ and } \RankTwoTsallisQEAnoq_0[1/2,1/2] \text{ are } \NQP{}\text{-complete.}\] 
\end{theorem}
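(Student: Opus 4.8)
The plan is to prove both containment and hardness, and the key preliminary reduction is the same for the Tsallis and Rényi versions. By \Cref{eq:order-zero-defs}, for a state $\rho$ of rank at most two we have $\Sqq{0}(\rho)=\rank(\rho)-1\in\cbra{0,1}$ and $\Saa{0}(\rho)=\ln\rank(\rho)\in\cbra{0,\ln 2}$, so both stated promise problems amount to deciding whether $\rho$ has rank exactly two (a \emph{yes} instance) or rank one, i.e.\ is pure (a \emph{no} instance). The crucial point is that this dichotomy is of the exact positive-versus-zero type characteristic of \NQP: for the states $\rho=\frac12\rbra{\ketbra{\psi_0}{\psi_0}+\ketbra{\psi_1}{\psi_1}}$ arising below, $\rank(\rho)=2$ iff $1-\abs{\innerprod{\psi_0}{\psi_1}}^2>0$, and $\rank(\rho)=1$ iff $1-\abs{\innerprod{\psi_0}{\psi_1}}^2=0$. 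I would therefore treat hardness and membership separately, each exploiting that $\NQP=\coCeP$ asks only whether an efficiently computable nonnegative quantity vanishes.

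For \NQP-hardness I would reuse the gadget from the proof of \Cref{lemma:PureInfidelity-BQPhard}, but \emph{without} the \BQP error-reduction step, since \NQP needs no promise gap. Given an arbitrary language in $\NQP$ represented by a uniform family of circuits $C'_x$ with $\Pr[C'_x\text{ accepts}]>0$ on yes instances and $\Pr[C'_x\text{ accepts}]=0$ on no instances, I set $C_x \coloneqq (C'_x)^{\dagger}\,\CNOT_{\sfO\to\sfF}\,C'_x$ and define $\ket{\psi_0},\ket{\psi_1}$ exactly as there, prepared by $Q_0=I$ and $Q_1=C_x$. By \Cref{eq:PureFidelity-BQPhard-pacc} we have $\abs{\innerprod{\psi_0}{\psi_1}}^2 = 1-\Pr[C'_x\text{ accepts}]^2$, hence $1-\abs{\innerprod{\psi_0}{\psi_1}}^2 = \Pr[C'_x\text{ accepts}]^2$. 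Letting $\rho=\frac12\rbra{\ketbra{\psi_0}{\psi_0}+\ketbra{\psi_1}{\psi_1}}$, prepared by the standard single-ancilla circuit $Q$, a yes instance forces $\rank(\rho)=2$, so $\Sqq{0}(\rho)=1$ and $\Saa{0}(\rho)=\ln 2$, while a no instance forces $\rank(\rho)=1$, so both entropies vanish. This is a Karp reduction matching the thresholds $t=1,\,g=0$ (resp.\ $t=\ln 2,\,g=0$), establishing \NQP-hardness for every $n\ge 2$ after padding the output length as needed.

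For membership in \NQP I would use the SWAP test as the verifier. Run $Q$ twice to obtain two independent copies of $\rho$ on the output registers, attach a control qubit in $\ket{+}$, apply a controlled-SWAP on the two output registers, apply a Hadamard on the control, and accept iff the control measures $1$. The acceptance probability equals $\tfrac{1-\Tr(\rho^2)}{2}$, which is strictly positive exactly when $\rho$ is mixed ($\rank=2$) and exactly zero when $\rho$ is pure ($\rank=1$), since two copies of a pure state are symmetric under SWAP. Thus the verifier accepts with positive probability on yes instances and with probability precisely zero on no instances, placing both $\RankTwoTsallisQEAnoq_0[1,0]$ and $\RankTwoRenyiQEAnoa_0[\ln 2,0]$ in \NQP; combined with the hardness reduction this yields \NQP-completeness.

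Most of the work reuses existing gadgets, so the crux is conceptual rather than computational: one must verify that the positive-versus-zero acceptance structure of \NQP is preserved end to end. Concretely, the hardness direction hinges on dropping the \BQP error-reduction step so that a no instance maps to $\Pr[C'_x\text{ accepts}]=0$ exactly, and hence to a genuinely pure $\rho$; the containment direction hinges on the SWAP test returning \emph{exactly} zero acceptance probability on pure states. I expect the only point demanding care is confirming these exactness claims---in particular that neither the reduction nor the verifier ever introduces a spurious nonzero probability on no instances---after which the identification $\NQP=\coCeP$ closes the argument.
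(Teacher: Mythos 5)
Your proof is correct and takes essentially the same route as the paper's own argument: both establish \NQP{}-hardness via the same gadget $C_x = (C'_x)^{\dagger}\,\CNOT_{\sfO\rightarrow\sfF}\,C'_x$ applied directly to an \NQP circuit with no error-reduction step, reducing to the rank-$2$ versus rank-$1$ dichotomy of \Cref{eq:order-zero-defs}, and both establish containment via the SWAP test on two copies of $\rho$, whose acceptance probability $\frac{1}{2}\rbra*{1-\Tr(\rho^2)}$ is strictly positive on \emph{yes} instances and exactly zero on \emph{no} instances. The only cosmetic differences are that you spell out the controlled-SWAP circuit rather than citing \Cref{lemma:swap-test}, and your closing appeal to $\NQP=\coCeP$ is unnecessary.
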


It is noteworthy that the \NQP{} containment follows almost directly from the SWAP test (\Cref{lemma:swap-test}), which was originally proposed for pure states in~\cite{BCWdW01} and subsequently extended to mixed states in~\cite{KMY09}: 

\begin{lemma}[SWAP test for mixed states, adapted from~{\cite[Proposition 9]{KMY09}}]
\label{lemma:swap-test}
    Let $\rho_0$ and $\rho_1$ be two $n$-qubit quantum states, which may be mixed. There exists a $(2n+1)$-qubit quantum circuit that outputs $0$ with probability $\rbra*{1+\Tr(\rho_0\rho_1)}/2$, using a single copy of each quantum state $\rho_0$ and $\rho_1$ and employing $O(n)$ one- and two-qubit elementary quantum gates. 
\end{lemma}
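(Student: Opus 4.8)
The plan is to analyze the standard SWAP-test circuit and verify that its mixed-state outcome probability collapses to the claimed expression via the \emph{swap trick}, lifting the textbook pure-state argument to density operators by linearity. Concretely, I would let the circuit act on a single ancilla qubit $\sfC$ (initialized to $\ket{0}$) together with the two $n$-qubit registers holding $\rho_0$ and $\rho_1$, so the total register count is $2n+1$. The circuit applies a Hadamard gate to $\sfC$, then the controlled-SWAP $C \coloneqq \ketbra{0}{0}_{\sfC}\otimes I + \ketbra{1}{1}_{\sfC}\otimes S$, where $S$ is the SWAP operator exchanging the two $n$-qubit registers, then a second Hadamard on $\sfC$, and finally measures $\sfC$ in the computational basis and outputs the result.

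The core computation is to propagate the input density operator $\ketbra{0}{0}_{\sfC}\otimes\rho_0\otimes\rho_1$ through the circuit and evaluate the probability of outcome $0$. Writing $\rho \coloneqq \rho_0\otimes\rho_1$ and $U_0 \coloneqq I$, $U_1 \coloneqq S$, the first Hadamard produces $\ketbra{+}{+}_{\sfC}\otimes\rho = \frac{1}{2}\sum_{i,j\in\binset}\ketbra{i}{j}_{\sfC}\otimes\rho$. Conjugating by $C$ gives $\frac{1}{2}\sum_{i,j\in\binset}\ketbra{i}{j}_{\sfC}\otimes U_i\rho U_j$, using $U_j^\dagger=U_j$ for both $I$ and $S$. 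Pushing the outcome-$0$ projector $\ketbra{0}{0}_{\sfC}$ through the final Hadamard turns it into $\ketbra{+}{+}_{\sfC}$, so by cyclicity the acceptance probability equals $\Tr\!\sbra*{(\ketbra{+}{+}_{\sfC}\otimes I)\cdot\tfrac{1}{2}\sum_{i,j}\ketbra{i}{j}_{\sfC}\otimes U_i\rho U_j}$. Expanding $\ketbra{+}{+}=\frac{1}{2}\sum_{k,l}\ketbra{k}{l}$ and carrying out the ancilla trace collapses the sums to $\frac{1}{4}\sum_{i,j\in\binset}\Tr(U_j U_i\rho)$; the four terms give $\Tr(\rho)=1$ for $(i,j)\in\cbra{(0,0),(1,1)}$ (using $S^2=I$) and $\Tr(S\rho)$ for $(i,j)\in\cbra{(0,1),(1,0)}$, yielding $\frac{1}{2}\rbra*{1+\Tr(S\rho)}$.

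To finish I would invoke the swap trick: for any operators $A,B$ on the $n$-qubit space, $\Tr\rbra*{S(A\otimes B)}=\Tr(AB)$. This follows by writing $S=\sum_{a,b}\ketbra{a}{b}\otimes\ketbra{b}{a}$ in an orthonormal basis and computing the trace index-by-index. Specializing to $A=\rho_0$, $B=\rho_1$ gives $\Tr(S\rho)=\Tr(\rho_0\rho_1)$, so the outcome-$0$ probability is exactly $\frac{1}{2}\rbra*{1+\Tr(\rho_0\rho_1)}$, as claimed.

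For the resource bounds I would observe that the circuit uses exactly one copy of each of $\rho_0$ and $\rho_1$ (the registers are never reprepared), two single-qubit Hadamard gates, and one controlled-SWAP on $n$ qubit-pairs. The latter decomposes into $n$ controlled-SWAP (Fredkin) gates acting on corresponding qubit pairs, each compiling into a constant number of one- and two-qubit elementary gates, for a total of $O(n)$ gates. I expect no deep obstacle here: the only step demanding genuine care is the density-operator bookkeeping through the controlled operation together with the swap-trick identity — both routine but the natural place for index or sign slips — while everything else is the pure-state SWAP test extended to mixed states by linearity.
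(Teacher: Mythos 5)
Your proposal is correct, and it is the standard derivation of the mixed-state SWAP test: the paper itself gives no proof of this lemma, importing it directly from~\cite[Proposition 9]{KMY09}, and your computation (propagating $\ketbra{0}{0}\otimes\rho_0\otimes\rho_1$ through Hadamard--controlled-SWAP--Hadamard, reducing the outcome-$0$ probability to $\frac{1}{2}\rbra*{1+\Tr(S(\rho_0\otimes\rho_1))}$, and applying the swap trick $\Tr\rbra*{S(A\otimes B)}=\Tr(AB)$) is exactly the argument underlying that citation. The resource accounting ($2n+1$ qubits, one copy of each state, $n$ Fredkin gates each compiling to $O(1)$ elementary gates) also matches the lemma's claims, so there is nothing to repair.
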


\subsection{Proof of \texorpdfstring{\Cref{thm:order-zero-NQPcomplete}}{}}

\begin{proof}[Proof of \Cref{thm:order-zero-NQPcomplete}]
    Since the rank of the quantum state $\rho$ considered in $\RankTwoTsallisQEAnoq_0$ and $\RankTwoRenyiQEAnoa_0$ is at most $2$, it follows from the equivalent definitions in \Cref{eq:order-zero-defs} that the state $\rho$ has rank $2$ for \textit{yes} instances and rank $1$ for \textit{no} instances. 

    To establish \NQP{} containment of both $\RankTwoTsallisQEAnoq_0$ and $\RankTwoRenyiQEAnoa_0$, it suffices to distinguish whether $\Tr(\rho^2) < 1$ for \textit{yes} instances or $\Tr(\rho^2) = 1$ for \textit{no} instances. To this end, we apply the SWAP test (\Cref{lemma:swap-test}) on two identical copies of $\rho$, prepared via the corresponding state-preparation circuit $Q$. The resulting algorithm $\calA$ accepts if the outcome is $1$. Consequently, $\calA$ indeed establishes an \NQP{} containment because the following holds:
    \begin{itemize}
        \item For \textit{yes} instances, $\Pr[\calA\text{ accepts}] =  \frac{1}{2}\rbra*{ 1-\Tr(\rho^2) } > 0$. 
        \item For \textit{no} instances, $\Pr[\calA\text{ accepts}] =  \frac{1}{2}\rbra*{ 1-\Tr(\rho^2) } = \frac{1}{2}(1-1) = 0$. 
    \end{itemize}

    \vspace{1em}
    Next, we prove the \NQP{}-hardness of  both $\RankTwoTsallisQEAnoq_0$ and $\RankTwoRenyiQEAnoa_0$. By the equivalent definitions in \Cref{eq:order-zero-defs}, it is sufficient to prove that the quantum state $\rho = \frac{1}{2} \rbra*{ \ketbra{\psi_0}{\psi_0} + \ketbra{\psi_1}{\psi_1}}$ has rank $2$ for \textit{yes} instances and rank $1$ for \textit{no} instances, where the pure states $\ket{\psi_0}$ and $\ket{\psi_1}$ can be prepared by polynomial-size quantum circuits. 
    
    More precisely, consider any promise problem $(\calP_\yes,\calP_\no) \in \NQP[a(n'),0]$ with $a(n')\in(0,1)$. Without loss of generality, we assume that the \NQP{} circuit $C'_x$ has an output length $n'\geq 1$. 
    Our construction is inspired by the one used in the proof of \Cref{thm:Rank2TsallisQEA2-BQPhard} and defines a new circuit with output length $n = n'+1 \geq 2$, given by $C_x = (C'_x)^{\dagger} \CNOT_{\sfO\to\sfF} C'_x$, where both $\sfF$ and $\sfO$ are single-qubit registers initialized in the state $\ket{0}$. 
    We say that $C_x$ accepts if the measurement outcomes of all qubits at the end are zero. 

    We now consider two pure states corresponding to $Q_0 = I$ and $Q_1=C_x$: particularly, $\ket{\psi_0} \coloneqq \ket{\bar{0}} \otimes \ket{0}_{\sfF}$ and $\ket{\psi_1} \coloneqq C_x\rbra*{\ket{\bar{0}}\otimes\ket{0}_{\sfF}}$. A direct calculation reveals that: 
    \begin{equation}
        \label{eq:order-zero-hardness}
        \abs{\innerprod{\psi_0}{\psi_1}}^2 = \Pr[C_x\text{ accepts}] = \rbra*{1-\Pr[C'_x\text{ accepts}]}^2.
    \end{equation}
    
    Finally, using \Cref{eq:order-zero-hardness}, we finish the proof by analyzing the following cases: 
    \begin{itemize}
        \item For \textit{yes} instances, we obtain $\abs{\innerprod{\psi_0}{\psi_1}} = 1-\Pr[C'_x\text{ accepts}] \leq 1-a(n) < 1$, which implies that $\ket{\psi_0}$ and $\ket{\psi_1}$ are not identical. As a result, $\rank(\rho)=2$. 
        \item For \textit{no} instances, we have $\abs{\innerprod{\psi_0}{\psi_1}} = 1-\Pr[C'_x\text{ accepts}] = 1$, which yields that $\ket{\psi_0}$ and $\ket{\psi_1}$ are exactly the same pure state. Consequently, $\rank(\rho) = 1$, as desired. \qedhere
    \end{itemize}
\end{proof}


\section*{Acknowledgments}
\noindent
The author thanks Qisheng Wang and Zhan Yu for drawing attention to~\cite[Corollary 1.8]{OW16}, from which one obtains a rank-dependent quantum query algorithm for estimating the quantum min-entropy ($\alpha=\infty$), as explained in \Cref{footnote:LowRankRenyiQEAinf-in-BQP}. The author also thanks Qisheng Wang for pointing out references~\cite{LW19,BKT20}. 
The author is further grateful to Fran\c{c}ois Le Gall for helpful discussions. 
The author also appreciates ChatGPT for suggesting the use of generalized binomial coefficients in proving \Cref{lemma:QJT2-vs-QJTq-Qeq2}.
Additionally, the author thanks the anonymous reviewers for their constructive comments.

The author was supported in part by funding from the Swiss State Secretariat for Education, Research and Innovation (SERI), in part by MEXT Q-LEAP grant No.~\mbox{JPMXS0120319794}, and in part by JSPS KAKENHI grant No.~\mbox{JP24H00071}. 


\bibliographystyle{alphaurlQ}
\bibliography{entropyHardness.bib}

\appendix

\section{Omitted proofs in \texorpdfstring{\Cref{sec-new-binary-entropies-bounds}}{Section 3}}
\label{sec:omitted-proofs-new-binary-entropies-bounds}

\RenyiBinEntUBOneleAleTwoI*

\begin{proof}
    To establish \Cref{thmitem:RenyiBinEntUB-1leAle2-I1}, one can verify that $\frac{\dd}{\dx} \frac{I_1(x;1)}{2x} = 0$ has two roots in the interval $0\leq x \leq 1$. Noting that $\frac{\dd}{\dx} \frac{I_1(x;1)}{2x}\big|_{x=0}=0$, $\frac{\dd}{\dx} \frac{I_1(x;1)}{2x}\big|_{x=1/2} = 3 \rbra*{\ln\rbra[\big]{\frac{4}{3}} - 1} \ln\rbra[\big]{\frac{3}{2}} + 2 \ln(2)^2 > 1/11 > 0$, and that $\lim_{x\to 1}\frac{\dd}{\dx} \frac{I_1(x;1)}{2x} = -2\ln(2) < 0$, it follows that there exists some $x_0 \in (1/2,1)$ such that $\frac{I_1(x;1)}{2x}$ is monotonically increasing on $x \in (0,x_0)$ and monotonically decreasing on $x \in (x_0,1]$. Since $2x \geq 0$ for $0 \leq x \leq 1$, it holds that $I_1(x;1)$ has the same monotonicity. Evaluating $I_1(x;1)$ at endpoints gives $I_1(0;1) = 0$ and $\lim_{x\to 1} I_1(x;1) = 0$, which implies that $I_1(x;1) \geq 0$ for $0 \leq x \leq 1$. 

    To prove \Cref{thmitem:RenyiBinEntUB-1leAle2-I2}, one can similarly verify that $\frac{\dd}{\dx} I_2(x) = 0$ has exactly one root in the interval $0\leq x \leq 1$. Observing that $\frac{\dd}{\dx} I_2(x) \big|_{x=0} = -2\ln(2) < 0$ and that $\frac{\dd}{\dx} I_2(x) \big|_{x=3/4} = \frac{9}{8}+\frac{59}{16} \ln(7) + \ln(5) \rbra*{ \frac{200}{7} + 6\ln(7)} - \ln(2) \rbra[\big]{ \frac{4231}{56} + 15 \ln(7)} > 1/3 > 0$, it follows that there exists some $x_1 \in (0,3/4)$ such that $I_2(x)$ is monotonically decreasing on $x\in[0,x_1)$ and monotonically increasing on $x\in(x_1,1]$. Evaluating $I_2(x)$ at the endpoints yields $I_2(0)=0$ and $\lim_{x\to 1} I_2(x) = 0$, which implies that $I_2(x) \leq 0$ for $0 \leq x \leq 1$. 
\end{proof}

\RenyiBinEntUBOneleAleTwoG*

\begin{proof}
    To show \Cref{thmitem:RenyiBinEntUB-1leAle2-G1}, one can verify that $\frac{\dd}{\dx} G_1(x) = 0$ has two roots in the interval $0 \leq x \leq 1$. Noting that $\frac{\dd}{\dx} G_1(x) \big|_{x=0} = 0$, $\frac{\dd}{\dx} G_1(x) \big|_{x=1/2} = -6\ln(2)(2+\ln(3))-2\ln(3)+\ln(5) \rbra[\big]{\frac{20}{3}+2\ln(3)} < -1/2 < 0$, and that $\lim_{x\to 1} \frac{\dd}{\dx} G_1(x) = +\infty$, it follows that there exists some $x_2\in (1/2,1)$ such that $G_1(x)$ is monotonically decreasing on $x\in(0,x_2)$ and monotonically increasing on $x\in(x_2,1]$. Evaluating $G_1(x)$ at the endpoints gives $G_1(0)=0$ and $\lim_{x\to 1} G_1(x) = 0$, which implies that $G_1(x) \leq 0$ for $0\leq x \leq 1$. 

    To prove \Cref{thmitem:RenyiBinEntUB-1leAle2-G2}, one can similarly verify that $\frac{\dd}{\dx} G_2(x) = 0$ also has two roots in the interval $0 \leq x \leq 1$. Observing that $\frac{\dd}{\dx}G_2(x)\big|_{x=0}=0$, $\frac{\dd}{\dx}G_2(x)\big|_{x=1/2} = \ln(3) \rbra*{ \ln\rbra[\big]{\frac{8}{5}} - \frac{3}{20} }+\log \left(\frac{50}{27}\right) -\frac{1}{2} > 1/3 > 0$, and that $\lim_{x\to 1}\frac{\dd}{\dx}G_2(x) = -2 < 0$, these evaluations imply that there exists some $x_3$ in (1/2,1) such that $G_2(x)$ is monotonically increasing on $x\in(0,x_3)$ and monotonically decreasing on $x\in(x_3,1]$. Evaluating $G_2(x)$ at the endpoints yields $G_2(0)=0$ and $\lim_{x\to 1} G_2(x) = 0$, which implies that $G_2(x) \geq 0$ for $0 \leq x \leq 1$. 
\end{proof}

\RenyiBinEntUBZeroleAleOneJone*

\begin{proof}
    To prove \Cref{thmitem:RenyiBinEntUB-0leAle1-J1zero}, one can verify that $\frac{\dd}{\dx} J_1(x;0) = 0$ has three roots in the interval $0\leq x \leq 1$. Observing that $\frac{\dd}{\dx} J_1(x;0)\big|_{x=0}=0$, $\frac{\dd}{\dx} J_1(x;0)\big|_{x=1/2}=-\frac{1}{2}-\frac{10}{3} \ln(5) + 9\ln(2) + \ln(3) \rbra*{\ln\rbra[\big]{\frac{8}{5}} - \frac{3}{4}} \geq 1/16 > 0$, $\frac{\dd}{\dx} J_1(x;0)\big|_{x=4/5} = \frac{2}{25} \rbra*{23 \ln\rbra[\big]{\frac{10}{3}} - 4 (4 + 10\ln(3))} + \rbra*{ \frac{82}{9}+\frac{16\ln(3)}{5} } \ln\rbra[\big]{\frac{50}{41}} < -1/14 < 0$, and that $\lim_{x\to 1^{-}}\frac{\dd}{\dx} J_1(x;0) = 0$, it follows that there exists $\widehat{x}_0\in \rbra[\big]{\frac{1}{2},\frac{4}{5}}$ such that $J_1(x;0)$ is monotonically increasing on $x\in(0,\widehat{x}_0)$ and monotonically decreasing on $x\in(\widehat{x}_0,1)$. Evaluating $J_1(x;0)$ at endpoints yields $J_1(0;0)=0$ and $\lim_{x\to 1} J_1(x;0)=0$, which implies that $J_1(x;0) \geq 0$ for $0\leq x \leq 1$. 

    To show \Cref{thmitem:RenyiBinEntUB-0leAle1-J1one}, one can similarly verify that $\frac{\dd}{\dx} J_1(x;1) = 0$ has two roots in the interval $0\leq x \leq 1$. Noting that $\frac{\dd}{\dx} J_1(x;1)\big|_{x=0}=0$, $\frac{\dd}{\dx} J_1(x;1)\big|_{x=1/2} = \frac{1}{2}-\frac{10}{3} \ln(5) + 3\ln(2) + \ln(3) \rbra*{\frac{11}{4}+\ln\rbra[\big]{\frac{8}{5}}} > 3/4 > 0$, and that $\lim_{x\to 1}\frac{\dd}{\dx} J_1(x;1) = -\infty$, it follows that there exists some $\widehat{x}_1\in(1/2,1)$ such that $J_1(x;1)$ is monotonically increasing on $x\in(0,\widehat{x}_1)$ and monotonically decreasing on $x\in(\widehat{x}_1,1)$. Evaluating $J_1(x;1)$ at endpoints yields $J_1(0;1)=0$ and $\lim_{x\to 1} J_1(x;1) = 0$, which implies that $J_1(x;1) \geq 0$ for $0\leq x \leq 1$. 

    To establish \Cref{thmitem:RenyiBinEntUB-0leAle1-J2}, we consider the following function $K(x)$: 
    \begin{align*}
        K(x) \coloneqq (1-x^2) x^4 \cdot \frac{\dd}{\dx} \frac{J_2(x)}{x^3}
        =~& \ln\rbra*{\frac{2}{1+x^2}} \rbra*{ x (1+x^2) - (1-x^2) (3+x^2) \ln\rbra*{\frac{1+x}{1-x}} }\\
        &- 2 x (1-x^2) \rbra*{ x^2 + \ln\rbra*{ \frac{1-x^2}{4} } + x \ln\rbra*{\frac{1+x}{1-x}}. }
    \end{align*}
    Similarly, one can verify that $\frac{\dd}{\dx} K(x) = 0$ has two roots in the interval $0 \leq x \leq 1$. Observing that $\frac{\dd}{\dx} K(x)\big|_{x=0} = 0$, $\frac{\dd}{\dx} K(x)\big|_{x=3/4} =  \frac{501}{32} \ln(2) - \frac{189}{512} - \frac{75}{16} \ln(5) - \frac{21}{256} \ln(7) \rbra[\big]{ 14+19\ln\rbra[\big]{\frac{32}{25}} } < -1/33 < 0$, and that $\lim_{x\to 1}\frac{\dd}{\dx} K(x) = 0$, it follows that $K(x) < 0$ for $0 < x < 1$. Since the sign of $\frac{J_2(x)}{x^3}$ coincides with that of $K(x)$, it holds that $\frac{\dd}{\dx} \frac{J_2(x)}{x^3} \leq 0$ on this interval. This implies that $\frac{J_2(x)}{x^3}$ is monotonically non-increasing on $x\in[0,1]$. Consequently, we obtain that
    \[ \forall x\in[0,1],\quad \frac{J_2(x)}{x^3} \geq \lim_{x\to 1} \frac{J_2(x)}{x^3} = 0,\]
    which in turn implies that $J_2(x) \geq 0$ for all $x \in [0,1]$, as desired. 
\end{proof}

\end{document}